\newtheorem{theorem}{Theorem}[section]
\newtheorem{lemma}[theorem]{Lemma}
\newtheorem{proposition}[theorem]{Proposition}
\newtheorem{corollary}[theorem]{Corollary}
\newtheorem{conjecture}[theorem]{Conjecture}
\theoremstyle{definition}
\newtheorem{remark}[theorem]{Remark}
\newtheorem{definition}[theorem]{Definition}
\newtheorem{example}[theorem]{Example}
\newcommand{\range}[2]{\in\{#1,\ldots,#2\}}
\newcommand{\brs}[2]{B(#1,#2)}
\newcommand{\crs}[2]{C(#1,#2)}
\newcommand{\maxh}{\textsc{MaxH}}
\newcommand{\uijmax}{u_{ij,\text{max}}}
\newcommand{\prop}[1]{PROP$^*{#1}$}
\title{Democratic Fair Allocation of \\Indivisible Goods\footnote{A preliminary version appeared in Proceedings of the 27th International Joint Conference on Artificial Intelligence~\citep{ijcai2018democratic}. 
This version is substantially different, with several new results (\ref{binary-negative-maxh}, \ref{binary-negative-1outofc}, \ref{binary-positive-1ofbestc-improved} \ref{binary-positive-1ofbestc-identical},
\ref{binary-positive-ef1},
\ref{binary-npcomplete},
\ref{additive-positive-unanimous},
\ref{kgroups-binary-negative-2/3},
\ref{kgroups-binary-negative-maxh},
\ref{kgroups-binary-negative-1outofc},
\ref{kgroups-monotonic-positive-EF},
\ref{kgroups-additive-positive-proportional},
\ref{kgroups-binary-positive-1ofbestc},
\ref{kgroups-positive-1ofbestk}), improved proofs for the old results
(including a bug fix in the proof of Theorem \ref{binary-positive-1-of-c}), and an expanded related work section.}}
\author{
Erel Segal-Halevi\\Ariel University
\and
Warut Suksompong\\University of Oxford
}
\date{\vspace{-3ex}}
\def\headingstyle{\bfseries\large}
\newcommand{\hide}[1]{}
\begin{document}

\maketitle

\begin{abstract}
We study the problem of fairly allocating indivisible goods to groups of agents. Agents in the same group share the same set of goods even though they may have different preferences. Previous work has focused on \emph{unanimous fairness}, in which all agents in each group must agree that their group's share is fair. Under this strict requirement, fair allocations exist only for small groups. We introduce the concept of \emph{democratic fairness}, which aims to satisfy a certain fraction of the agents in each group. This concept is better suited to large groups such as cities or countries. We present protocols for democratic fair allocation among two or more arbitrarily large groups of agents with monotonic, additive, or binary valuations. 
For two groups with arbitrary monotonic valuations, we give an efficient protocol that guarantees envy-freeness up to one good for at least 1/2 of the agents in each group, and prove that the 1/2 fraction is optimal. We also present other protocols that make weaker fairness guarantees to more agents in each group, or to more groups. Our protocols combine techniques from different fields, including combinatorial game theory, cake cutting, and voting.
\end{abstract}

\section{Introduction}
Fair division is the study of how to allocate resources among agents with different preferences so that agents perceive the resulting allocation as fair. This problem occurs in a wide range of situations, from negotiating over international interests and reaching divorce settlements \citep{BramsTa96} to dividing household tasks and sharing apartment rent \citep{GoldmanPr14}, and has been an important topic of study in artificial intelligence, multiagent systems, and computational social choice in the past decade \citep{Endriss10,Thomson16,Markakis17}.  

Two kinds of fairness criteria are common in the literature. The first, \emph{envy-freeness (EF)}, means that each agent finds her share at least as good as the share of any other agent. When allocating indivisible goods, envy-freeness is sometimes unattainable (consider two agents quarreling over a single good), so it is often relaxed to \emph{envy-freeness up to one good (EF1)}, which is always attainable \citep{LiptonMaMo04,Budish11}.
The second kind, \emph{proportionality}, stipulates that each agent should receive $1/n$ of her value for the entire set of goods, where $n$ is the number of agents.
For similar reasons, this cannot always be satisfied when goods are indivisible. A well-studied relaxation is \emph{maximin share fairness}, which means that each agent finds his share at least as good as his \emph{maximin share (MMS)}. This is the best share he can secure by dividing the goods into $n$ parts and getting the worst part. 
Even MMS fairness cannot always be guaranteed, but at least a constant fraction of the MMS can  \citep{ProcacciaWa14}.

Most works on fair division involve individual agents, each of whom has individual preferences. 
Yet, in reality resources often have to be allocated among \emph{groups of agents}, such as families or countries.
A good allocated to a group is shared among the group members, who derive utility from the good. 
For example, when dividing real estate among families, all members of a family enjoy their allocated house and backyard.
In international negotiations, the divided rights and settled outcomes are enjoyed by all citizens of a country. 
When resources are allocated between different departments of a university, all department members benefit from the whiteboards, open space and conference rooms allocated to their department. 
Naturally, different group members may have different preferences. The same share can be perceived as fair by one member and unfair by another member of the same group. Ideally, we would like to find an allocation that is considered fair by \emph{all} agents in \emph{all} groups. However, two recent works show that this ``unanimous fairness'' might be too strong to be practical.

(a) \citet{Suksompong18} shows that when allocating indivisible goods among groups, there might be no allocation that is unanimously EF1. Moreover, there might be no division that gives all agents a positive fraction of their MMS.
This impossibility occurs even for two groups of three agents each.

(b) \citet{SegalhaleviNi15} show that when allocating a divisible good (``cake'') among groups, there might be no division that is unanimously envy-free and gives each group a single connected piece, or even a constant number of connected pieces. In contrast,  with individual agents a connected envy-free division always exists \citep{Stromquist80}.

What do groups do when they cannot attain unanimity? In democratic societies, they use some kind of voting.
The philosophy behind voting is that it is impossible to satisfy everyone, so we should try to satisfy as many members as possible. Based on this observation, we say that a division is  \emph{$h$-democratic fair}, for some fairness notion and for some $h\in[0,1]$, if at least a fraction $h$ of the agents in each group perceive it as fair. In this paper we focus on allocating indivisible goods. We would like $h$, the fraction of $h$appy agents, to be as large as possible. 
We thus pose the following question:
\begin{quote}
	\emph{Given a fairness notion, what is the largest $h$ such that an $h$-democratic fair allocation of indivisible goods can always be found?}
\end{quote}

We study democratic fairness under three different assumptions on the agents' valuations. In the most general case, the agents can have arbitrary \emph{monotonic} valuations on bundles of goods. A more common assumption in the literature is that agents' valuations are \emph{additive}---the value of a bundle is the sum of the values of the goods in the bundle. We also study a special case of additive valuations in which agents' valuations are \emph{binary}---each agent has a set of desired goods and her utility equals the number of desired goods allocated to her group. 

\subsection{Results and techniques}
Naturally, there is a trade-off between the strength of the fairness guarantee and the fraction $h$ of agents to which it can be guaranteed. 
In order to give maximum flexibility to policy-makers, we study different fairness requirements of different strengths.

Initially (\textbf{Section~\ref{sec:2groups-binary}}) we consider two groups with binary agents. 
We study a relaxation of envy-freeness that we call \emph{envy-freeness up to $c$ goods (EF$c$)}, a generalization of EF1. In this case, the trade-off curve is degenerate: for every constant $c$, it is possible to guarantee ${1/2}$-democratic EF$c$, and the fraction $1/2$ is tight.
The same holds for \emph{proportionality up to $c$ goods (\prop{c})}, a relaxation of proportionality that we define in Section~\ref{sec:model}, as well as for MMS fairness. 

To get a more flexible trade-off curve, we study a relaxation of MMS called \emph{$1$-out-of-$c$ MMS}, which is the best share an agent can secure by dividing the goods into $c$ subsets and receiving the worst one. We prove that $1$-out-of-$c$ MMS can be guaranteed to at least $1-1/2^{c-1}$ and at most $1-1/2^c$ of the agents in both groups.
We also study a weaker relaxation called \emph{1-of-best-$c$}, which means that an agent's utility is at least the value of her $c$-th best good.
It can be guaranteed to at least $1-2/(2^c+1)$ and at most $1-1/2^c$ of the agents in both groups.
For $c=2$, we show that 1-of-best-$2$ fairness can be guaranteed to at least $3/5$ and at most $2/3$ of the agents in each group, improving upon the upper bound for general $c$.

These positive results are attained by an efficient round-robin protocol, which we call the \emph{Round-robin with Weighted Approval Voting (RWAV)} protocol, where each group in turn picks a good using weighted approval voting
with carefully calculated weights. 
The weights of agents who lose in early votes are increased in later votes.
We believe this weighted voting scheme can be interesting in its own right as a way to make fair group decisions.

Next \textbf{(Section~\ref{sec:2groups-additive})} we consider two groups whose agents have arbitrary monotonic valuations. 
We present an efficient protocol that guarantees EF1 to at least $1/2$ of the agents in each group (which is tight even for binary agents). When all agents are additive, this protocol guarantees $1/2$ of the MMS to $1/2$ of the agents. This is tight: one cannot guarantee more than $1/2$ of the MMS to more than $1/3$ of the agents.
If we weaken the requirement to 1-of-best-$c$ fairness, we can make this guarantee to a larger fraction of the agents---in particular, the same guarantees from Section \ref{sec:2groups-binary} are valid for additive agents.  

The positive results here use a different protocol, which resembles the well-known ``cut-and-choose'' protocol for dividing a cake between two agents. Despite the simplicity of the protocol, we find the result important since, unlike previous results in this setting \citep{ManurangsiSu17,Suksompong18}, our result holds for worst-case instances with any number of agents in the groups and very general utility functions.

Finally (\textbf{Section~\ref{sec:3groups}}), we 
present three generalizations of our results to $k\geq 3$ groups. 
The first generalization has strong fairness guarantees: when all valuations are monotonic, it guarantees EF2 to $1/k$ of the agents in all groups, and when all valuations are binary, it guarantees both EF1 and MMS to  $1/k$ of the agents in all groups (the factor $1/k$ is tight for EF$c$ for any constant $c$, even when valuations are binary). However, the running time of the protocol might be exponential.

The second generalization uses a polynomial-time protocol but has weaker fairness guarantees: when valuations are additive, it guarantees approximations of proportionality and MMS to $1/k$ of the agents, and when valuations are binary, it guarantees MMS to $1/k$ of the agents. However, it does not guarantee EF1 or any relaxation of envy-freeness.

The third generalization uses a variant of the RWAV protocol to guarantee 1-of-best-$k$ fairness (which implies a positive-factor approximation to the MMS) to at least $1/3$ of the agents in each group, for any number of groups with additive agents.

Some of our results and open questions are summarized in Table~\ref{table:summary}.

\begin{table*}
\newcommand{\yes}[1]{\textcolor{ForestGreen}{Yes (#1)}}
\newcommand{\no}[1]{\textcolor{red}{No (#1)}}
\newcommand{\maybe}{\textcolor{blue}{?}}

\centering
\begin{tabular}
{|c||c|c|c|}
\multicolumn{4}{c}{Bin.+Add. valuations:}
\\
\hline
$h$ 
& \shortstack{Positive-MMS \& \\ 1-of-best-2 fairness}
& $(0,~1/2]$-fraction-MMS 
& $(1/2,~1]$-fraction-MMS  
\\
\hline 
\hline 
$(0,~1/3]$  
& \multirow{3}{*}{\yes{Thm. \ref{binary-positive-1ofbestc-improved}}}
& \multirow{2}{*}{\yes{Cor. \ref{additive-positive-1/2}}}
& \textbf{Bin}: \yes{Cor. \ref{additive-positive-1/2}}, \textbf{Add}: \maybe{}
\\ 
\cline{1-1}
\cline{4-4}
$(1/3,~1/2]$ 
&
&  
&
\shortstack{
\textbf{Bin}: \yes{Cor. \ref{additive-positive-1/2}},~~~~~\\~~~~~ \textbf{Add}: \no{Prop. \ref{additive-negative-1/3}}
}
\\
\cline{1-1}
\cline{3-4}
$(1/2,~3/5]$ 
& 
& \multirow{2}{*}{\maybe{}}
& \multirow{2}{*}{\textbf{Bin}: \maybe{}, \textbf{Add}: \no{Prop. \ref{additive-negative-1/3}}}
\\
\cline{1-1}
\cline{2-2}
$(3/5,~2/3]$ 
& \maybe{} (Conjecture \ref{conj:binary-positive-2/3})
&  
& 
\\ 
\cline{1-4}
$(2/3,~1]$   
& \multicolumn{3}{|c|}{\no{Prop. \ref{binary-negative-2/3}}}
\\ 
\hline 
\end{tabular}
\vspace{3mm}

\begin{tabular}
{|c||c|}
\multicolumn{2}{c}{Bin.+Add.+Mon. valuations:}
\\
\hline
$h$  
& EF$c$ ($\equiv$ \prop{c}) for $c\geq 1$
\\
\hline 
\hline 
$(0,~1/2]$  
& \yes{Thm. \ref{monotonic-positive-1/2}}
\\ 
\hline 
$(1/2,~1]$ 
& \no{Prop. \ref{binary-negative-EFc}}
\\
\hline 
\end{tabular}
~~~~

\vspace{5mm}
\begin{tabular}
{|c||c|c|}
\hline
$h$  
& 
\shortstack{1-of-best-$c$,  $c\geq 3$
\\
Bin.+Add. valuations}

& 
\shortstack{
1-out-of-$c$-MMS, $c\geq 3$
\\
Bin. valuations
}
\\
\hline 
\hline 
$(0,~{1-2/2^c}]$
& \multicolumn{2}{|c|}{\yes{Thm. \ref{binary-positive-1-of-c}}}
\\
\hline 
$({1-2/2^c},~{1-2/(2^c+1)}]$
& \yes{Thm. \ref{binary-positive-1ofbestc-improved}}
& \maybe{}
\\
\hline 
$({1-2/(2^c+1)},~{1-1/ 2^c}]$
& \multicolumn{2}{|c|}{\maybe{}}
\\ 
\hline 
$({1-1/2^c},~1]$ 
& \multicolumn{2}{|c|}{\no{Prop. \ref{binary-negative-1outofc}}}
\\
\hline 
\end{tabular}
\caption{Summary of our results for two groups.
The columns correspond to fairness criteria.
The rows correspond to ranges of $h$ (the fraction of happy agents).
}
\label{table:summary}
\end{table*}

\subsection{Related work}
Fair allocation of indivisible goods is a well-studied topic---see, for example, the survey by \citet{Bouveret2016Fair}.
Perhaps surprisingly, however, the group fair division problem is relatively new. 
We already mentioned the impossibility result of 
\citet{Suksompong18}, which is for worst-case agents' utilities. \citet{KyropoulouSuVo19} explored the possibilities and limitations of using EF1 as the fairness notion in the worst-case setting.
On the other hand, if  the agents' utilities are drawn at random from probability distributions, \citet{ManurangsiSu17} showed that a unanimously envy-free allocation exists with high probability as the number of agents and goods grows.
In our terminology, unanimous fairness is equivalent to \emph{$1$-democratic-fairness}. 
\citet{ghodsi2018rent} studied fair division of rooms and rent among groups of tenants, using three notions of fairness which they term strong, aggregate and weak---their ``strong fairness'' corresponds to our 1-democratic (unanimous) fairness, while their ``weak fairness'' means that at least one agent is satisfied.
A related model, in which a subset of public goods is allocated to a single group of agents but the rest of the goods remain unallocated, has also been investigated \citep{ManurangsiSu18}.


Our group fairness notions differ from those studied, e.g., by 
\citet{Berliant1992Fair}, \citet{Husseinov2011Theory}, \citet{TodoLiHu11}, \citet{benabbou2018diversity} and \citet{ConitzerFrSh19}.
In their setting, goods are divided among \emph{individuals}, each of whom is allocated an individual share. The challenge comes from the requirement to eliminate envy, not only between individuals, but also between subsets of agents.
In our setting, the challenge is that the goods are divided among \emph{groups}, where all members of the same group consume the same share. 
A share that is desirable for some group members might be undesirable for other members of the same group. This motivates the use of social choice techniques such as having each group vote on which goods to pick. 

Group preferences are important in matching markets, too. For example, when matching doctors to hospitals, usually a husband and a wife want to be matched to the same hospital. This issue poses a substantial challenge to stable matching mechanisms \citep{Klaus2005Stable,Klaus2007Paths,Kojima2013Matching}.

The idea of satisfying a certain fairness notion for only a certain fraction of the population, rather than unanimously, is sometimes used when unanimity is provably unattainable. For instance, \citet{ortega2018social} proved that when two matching markets are merged and a stable matching mechanism is run, it is impossible to attain monotonic improvement for everyone, even though the improvement is attainable for at least half of the population.

Our RWAV protocol is reminiscent of  \emph{positional games} \citep{hefetz2014positional}. In a positional game, two players alternately pick elements, trying to occupy an entire set from a given set family, or to occupy a single element in every set of that family. Famous examples are Tic-Tac-Toe and Hex. In our RWAV protocol, the groups are the players, and the sets of elements desired by the group members are the set families. In contrast to positional games, in our case the set family of each group is different, so the ``game'' is not zero-sum. In fact, our goal is to have both groups ``win'' simultaneously to the extent possible.
Our Theorem \ref{binary-positive-1ofbestc}
is equivalent to one of the first results on positional games: the Erd\H{o}s-Selfridge theorem \citep{erdos1973combinatorial}.

Approval voting is widely studied in political science---see, e.g., \citet{brams2018multiwinner} for a recent application of approval voting to multi-winner elections.

\section{Preliminaries}
\label{sec:model}
\subsection{Goods and agents}
There is a set $G=\{g_1,\dots,g_m\}$ of goods.
A \emph{bundle} is a subset of $G$.
There is a set $A$ of agents. 
The agents are partitioned into $k$ groups $A_1,\ldots,A_k$ with $n_1,\ldots,n_k$ agents, respectively. Let $a_{ij}$ denote the $j$th agent in group $A_i$. 

Each agent $a_{ij}$ has a nonnegative utility $u_{ij}(G')$ for each $G'\subseteq G$. 
Denote by $\textbf{u}_{ij}=(u_{ij}(g_1),\dots,u_{ij}(g_m))$ the utility vector of agent $a_{ij}$ for individual goods. 
The agents' utility functions are \emph{monotonic}, i.e., $u_{ij}(G'')\leq u_{ij}(G')$ for every $G''\subseteq G'\subseteq G$ and every agent $a_{ij}$. A subclass of monotonic utilities is the class of \emph{additive} utilities, i.e.,
for every bundle $G'\subseteq G$ and every agent $a_{ij}\in A$, we have $u_{ij}(G')=\sum_{g\in G'} u_{ij}(g)$.
Sometimes we will study a special case of additive utilities in which utilities are \emph{binary}, i.e., each agent either approves or disapproves each good. Since we will not engage in interpersonal comparison of utilities, we may assume without loss of generality that in this case $u_{ij}(g) \in \{0,1\}$ for each $i,j,g$.\footnote{
Binary valuations are similar to \emph{dichotomous preferences}, which are common both in theory \citep{BogomolnaiaMoSt05} and in practice \citep{kurokawa2015leximin}. 
However, in dichotomous preferences each agent assigns a utility of either 0 or 1 to each \emph{bundle} rather than to each good (so they are usually not additive).
}
For brevity, we will refer to agents with monotonic (resp., additive, binary) valuations as \emph{monotonic agents} (resp., \emph{additive agents}, \emph{binary agents}).

We allocate a bundle $G_i\subseteq G$ to each group $A_i$. All goods should be allocated. The goods are treated as public goods within each group, i.e., 
for every group $i$, the utility of every agent $a_{ij}$ is $u_{ij}(G_i)$.
We refer to a setting with agents partitioned into groups, goods and utility functions as an \emph{instance}.

\subsection{Fairness notions}
We begin by defining what it means for an allocation to be fair for a \emph{specific} agent. 
We start with envy-freeness.

\begin{definition}
	\label{def:EF}
	\ifdefined\EFX
	(a)
	\fi
	For an agent $a_{ij}$
	and an integer $c\geq 0$,
	an allocation is called \emph{envy-free up to $c$ goods (EF$c$) for $a_{ij}$}
	if for all $i'$ 
	there is a set 
	$C_{i'}\subseteq G_{i'}$ with $|C_{i'}|\leq c$ such that:
	\begin{align*}
	u_{ij}(G_i)\geq u_{ij}(G_{i'}\backslash C_{i'}).
	\end{align*}
	In other words, one can remove the envy of $a_{ij}$ toward group $i'$ by removing at most $c$ goods from the group's bundle.
	~~An EF0 allocation is also known as \emph{envy-free}.
	
	\ifdefined\EFX
	(b)
	An allocation is called \emph{envy-free up to any good (EFX) for $a_{ij}$}
	if for every $i'$, 
	and for \emph{every}
	set 
	$C_{i'}\subseteq G_{i'}$ with $|C_{i'}| = 1$:
	\begin{align*}
	u_{ij}(G_i)\geq u_{ij}(G_{i'}\backslash C_{i'}).
	\end{align*}
	\fi
\end{definition}

A second common fairness notion is proportionality.
\begin{definition}
For an agent $a_{ij}$ and an integer $c\geq 0$,
an allocation is called \emph{proportional except $c$ goods (\prop{c})} 
for $a_{ij}$
if there exists a set $C\subseteq G\setminus G_i$ with $|C|\leq c$ such that:
\begin{align*}
u_{ij}(G_i)\geq \frac{1}{k} u_{ij}(G\setminus C).
\end{align*}
In other words, 
the agent feels that her family received its proportional share
as long as the best $c$ goods not allocated to it are ignored.
~~

A \prop{0} allocation is also known as \emph{proportional}.
\end{definition}

Envy-freeness is stronger than proportionality in the following sense:
\begin{lemma}
\label{EF1-is-PROP}
If an allocation is EF1 for an additive agent, then it is also \prop{(k-1)} for that agent.
When $k=2$, the opposite implication is also true. 
\end{lemma}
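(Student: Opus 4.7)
The plan is to use additivity in both directions to pass between a per-group bound (EF1) and an aggregate bound (\prop{(k-1)}), taking care that the goods witnessing EF1 from different groups combine into a single set of size at most $k-1$.

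For the forward direction, suppose the allocation is EF1 for the additive agent $a_{ij}$. For every $i'\neq i$, pick a witness $c_{i'}\in G_{i'}$ (or an empty witness, which I can pad with any element of $G_{i'}$ if convenient) such that $u_{ij}(G_i)\geq u_{ij}(G_{i'}\setminus\{c_{i'}\})$. Using additivity this becomes $u_{ij}(G_i)\geq u_{ij}(G_{i'})-u_{ij}(c_{i'})$. Summing over the $k-1$ indices $i'\neq i$ and then adding $u_{ij}(G_i)$ to both sides yields
\[
k\cdot u_{ij}(G_i)\;\geq\; u_{ij}(G)\;-\;\sum_{i'\neq i}u_{ij}(c_{i'}).
\]
The key small observation is that the $c_{i'}$'s belong to pairwise disjoint bundles $G_{i'}$, so they are distinct; setting $C:=\{c_{i'}:i'\neq i\}\subseteq G\setminus G_i$, additivity gives $\sum_{i'\neq i}u_{ij}(c_{i'})=u_{ij}(C)$, and then $u_{ij}(G)-u_{ij}(C)=u_{ij}(G\setminus C)$. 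Since $|C|\leq k-1$, this is exactly the \prop{(k-1)} condition.

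For the converse when $k=2$, unpack the \prop{1} definition: there is a set $C\subseteq G\setminus G_i=G_{i'}$ with $|C|\leq 1$ and $u_{ij}(G_i)\geq\frac12 u_{ij}(G\setminus C)$. Additivity gives $u_{ij}(G\setminus C)=u_{ij}(G_i)+u_{ij}(G_{i'}\setminus C)$, and substituting into the inequality and rearranging produces $u_{ij}(G_i)\geq u_{ij}(G_{i'}\setminus C)$, which is EF1 (the envy-toward-own-group condition is trivial with $C_i=\emptyset$).

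No step is really an obstacle; the only thing that deserves attention is the disjointness observation that keeps $|C|\leq k-1$ in the forward direction, and noting that the converse uses $k=2$ precisely because it is there that a single excluded good $C\subseteq G_{i'}$ simultaneously serves as the EF1 witness for the unique other group.
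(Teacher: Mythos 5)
Your proof is correct and follows essentially the same route as the paper's: sum the per-group EF1 inequalities, observe that the witness goods lie in pairwise disjoint bundles so their union $C$ has size at most $k-1$, and for $k=2$ unpack \prop{1} using additivity to recover the EF1 inequality against the unique other bundle. The only cosmetic difference is that you rewrite $u(G_{i'}\setminus\{c_{i'}\})$ as $u(G_{i'})-u(c_{i'})$ before summing, whereas the paper sums the set-difference form directly; the content is identical.
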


\begin{proof}
Denote by $u$ the utility function of the agent and assume without loss of generality that the agent is in group $A_1$.
EF1 implies that
in each bundle $G_i$ (for $i\range{2}{k}$) there exists a subset $C_i$ with $|C_i|\leq 1$ such that $u(G_1)\geq u(G_i \setminus C_i)$. 
Summing over groups $2,\dots,k$ and adding $u(G_1)$ to both sides gives $
k\cdot u(G_1)
\geq 
u(G 
\setminus (C_2\cup\dots\cup C_k))
$.
Let $C := C_2\cup\dots\cup C_k$. We have $|C|\leq k-1$, $C\subseteq G\setminus G_1$,
and $u(G_1) \geq \frac{1}{k}u(G 
\setminus C)$, so the allocation is \prop{(k-1)} for the agent.%
\footnote{
A similar proof shows that for every $c\geq 1$, 
EF$c$ implies \prop{(c(k-1))}.
}

When $k=2$, \prop{1} implies that $u(G_1)\geq \frac{1}{2}u(G\setminus C)$ for some $C\subseteq G_2$  with $|C|\leq 1$. This implies that 
$u(G_1)\geq u(G\setminus (C\cup G_1)) = u(G_2\setminus C)$, so the allocation is EF1 for the agent.
\end{proof}

We remark that our \prop{(k-1)} is slightly stronger than the PROP1 notion studied by \cite{ConitzerFrSh17}: we remove the $k-1$ highest values in $\{u(g)\mid g\in G\setminus G_i\}$, while they remove $k$ times the highest value in $\{u(g)\mid g\in G\setminus G_i\}$.
Similarly, it is stronger than previous  approximations of proportionality  considered by \cite{ConitzerFrSh17}, \cite{AzizCaIg18}, and \cite{Suksompong17}.

We illustrate our fairness notions with the following example, which also shows that the opposite implication of Lemma \ref{EF1-is-PROP} does not hold when $k>2$.
\begin{example}
Suppose there are $k=3$ groups. Below, all valuations are for a specific agent in group $A_1$.

Consider first an allocation in which group $A_1$ gets a single good worth 30, group $A_2$ gets four goods worth 20, and group $A_3$ gets one good worth 10 (the total value of all goods is 120).
This allocation is \prop{2} for the agent, since if we remove the two most valuable goods not allocated to group $A_1$ (two goods worth 20), the value of the remaining goods is 80, and the value for $G_1$ is more than 80/3.
However, the allocation is not EF1 or even EF2 for the agent, since after removing two goods from $G_2$, the agent's value for the bundle is 40, and her value for $G_1$ is less than 40.
The allocation is also not \prop{1} for the agent, since removing any single good not allocated to $A_1$ leaves a value of at least 100, and the agent's value for $G_1$ is less than 100/3.

Consider an alternative allocation in which one good is taken from $A_2$ and given to $A_1$. In the new allocation, the agent's value for $G_1$ is 50 and her value for $G_2$ is 60.
The allocation is \prop{0} and EF1 for the agent. However, it is still not EF0 (envy-free).
\qed
\end{example}

Next, we define the maximin share and its relaxations.
\begin{definition}
\label{def:maximin}
Given an agent $a_{ij}$, the \emph{maximin share (MMS)} of $a_{ij}$ is the maximum, over all partitions of $G$ into $k$ sets, of the minimum of the agent's utilities for the sets in the partition (below, $(G_1',\dots,G_k')$ is any $k$-partition of $G$):
\[
\text{MMS}^k_{ij}(G) := 
\max_{G_1',\dots,G_k'}\min(u_{ij}(G'_1),\ldots,u_{ij}(G'_k))
\]
An allocation $(G_1,\ldots,G_k)$ is said to be \emph{MMS-fair for $a_{ij}$} if $u_{ij}(G_i)\geq \text{MMS}^k_{ij}(G)$.
\end{definition}
Since MMS fairness is not always attainable, various approximations have been studied. We define, for each integer $c\geq k$, the 
\emph{1-out-of-$c$ maximin share} of $a_{ij}$ as the maximum, over all partitions of $G$ into $c$ sets, of the minimum of the agent's utilities for the sets in the partition (below, $(G_1',\dots,G_c')$ is any $c$-partition of $G$):
\[
\text{MMS}^c_{ij}(G) := 
\max_{G_1',\dots,G_c'}\min(u_{ij}(G'_1),\ldots,u_{ij}(G'_c))
\]
\begin{definition}[relaxations of MMS]
An allocation $(G_1,\ldots,G_k)$ is said to be:
\begin{itemize} 
\item \emph{1-out-of-$c$ MMS-fair for $a_{ij}$}, if $u_{ij}(G_i)\geq \text{MMS}^c_{ij}(G)$.
\item \emph{$q$-fraction-MMS-fair for $a_{ij}$}, for some fraction $q\in(0,1)$, if $u_{ij}(G_i)\geq q\cdot \text{MMS}^k_{ij}(G)$.
\item \emph{1-of-best-$c$ for $a_{ij}$}, for an integer $c\geq k$, if
$u_{ij}(G_i)$ is at least the value of $a_{ij}$'s $c$-th best good in $G$.
\item \emph{positive-MMS-fair for $a_{ij}$}, if $\text{MMS}^k_{ij}(G)>0$ implies $u_{ij}(G_i)>0$ (equivalently, if it is $q$-fraction-MMS-fair for some $q>0$). 
\end{itemize}
\end{definition}

MMS fairness was introduced by \citet{Budish11} based on earlier concepts by \citet{Moulin90}. 
Budish also considered its relaxation to $1$-out-of-$(n+1)$ MMS.
The notion $1$-out-of-$c$ MMS is a special case of $l$-out-of-$d$ MMS, recently defined by \citet{babaioff2017competitive} and studied by \citet{SegalHalevi2018CEFAI}.  
Fractional MMS fairness was introduced by \citet{ProcacciaWa14}.

$1$-of-best-$c$ fairness has not been considered before, as far as we are aware. It is relevant for agents who are mainly interested in getting one of their top choices. While it is a rather weak fairness notion, in our group setting it is useful when stronger fairness notions are unattainable.

The following implications between the MMS approximations follow from their definitions (note that 1-of-best-$k$-fair also implies 1-of-best-$c$-fair for $c\geq k$):
\begin{align*}
& \nearrow \text{1-out-of-$c$ MMS-fair ($c\geq k$)}
\rightarrow \text{1-of-best-$c$-fair ($c\geq k$)}&
\\
\text{MMS-fair} & \rightarrow  \text{1-of-best-$k$-fair} \rightarrow \text{positive-MMS-fair} 
\\
& \searrow \text{$q$-fraction-MMS-fair ($q\in(0,1)$)}  \nearrow
\end{align*}


Proportionality is stronger than MMS in the following sense:

\begin{lemma}
\label{PROP-is-MMS}
If an allocation among $k$ groups is \prop{(k-1)} for an additive agent, then:
	
(a) It is also $1/k$-fraction-MMS-fair for that agent ($1/k$ is tight);%
\footnote{
The implication EF1 $\to$ $1/k$-fraction-MMS-fairness was proved concurrently and independently by \citet[Prop.~3.6]{amanatidis2018comparing}.
}
	
(b) It is also $1$-out-of-$(2 k-1)$ MMS-fair for that agent ($2 k-1$ is tight);
	
(c) If the agent is binary, the allocation is also MMS-fair for that agent. Moreover, MMS-fairness implies \prop{(k-1)} for a binary agent.
\end{lemma}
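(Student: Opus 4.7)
The plan is to use the defining inequality of \prop{(k-1)} together with a pigeonhole argument on MMS-witnessing partitions. Throughout, fix an additive agent with valuation $v=u_{ij}$, and let $C\subseteq G\setminus G_i$ with $|C|\leq k-1$ be the set witnessing \prop{(k-1)}, so that $v(G_i)\geq \frac{1}{k}v(G\setminus C)$.

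For part (a), let $(P_1,\dots,P_k)$ be a partition of $G$ achieving $\text{MMS}^k$, so $v(P_\ell)\geq \text{MMS}^k$ for every $\ell$. Because $|C|\leq k-1$, at least one part $P_\ell$ is disjoint from $C$, hence $v(G\setminus C)\geq v(P_\ell)\geq \text{MMS}^k$, and therefore $v(G_i)\geq \text{MMS}^k/k$. For part (b), apply the same idea to a partition $(P_1,\dots,P_{2k-1})$ achieving $\text{MMS}^{2k-1}$: since $|C|\leq k-1$, the set $C$ meets at most $k-1$ of the $2k-1$ parts, so at least $k$ parts are disjoint from $C$, giving $v(G\setminus C)\geq k\cdot \text{MMS}^{2k-1}$ and hence $v(G_i)\geq \text{MMS}^{2k-1}$.

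For part (c), the binary case lets us exploit integrality. Write the number of approved goods as $s=qk+r$ with $0\leq r<k$, so $\text{MMS}^k=q$. From \prop{(k-1)} we get $v(G_i)\geq \frac{s-(k-1)}{k}=q-1+\frac{r+1}{k}$, and since $v(G_i)$ is an integer and $0<\frac{r+1}{k}\leq 1$, this forces $v(G_i)\geq q=\text{MMS}^k$. Conversely, given MMS-fairness $v(G_i)\geq q$, choose $C$ to be any $\min(k-1,\, s-v(G_i))$ approved goods in $G\setminus G_i$; a short check, splitting into the cases ``enough approved goods outside $G_i$'' versus ``too few'', shows $v(G_i)\geq \frac{1}{k}(v(G)-v(C))$, giving \prop{(k-1)}.

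For tightness in (a) and (b), the plan is to exhibit small additive instances where the bounds $1/k$ and $2k-1$ cannot be improved. For (a) I would use $k$ groups with a bundle of $k$ identical valuable goods split so that group $i$ gets one; removing the top $k-1$ makes the remaining value equal to one good, so \prop{(k-1)} is satisfied while $v(G_i)$ is exactly $\text{MMS}^k/k$ up to scaling (with slight perturbation to ensure $\text{MMS}^k$ is realized). For (b) I would take an instance with $2k-1$ ``big'' goods of value $1$ plus enough filler so that $\text{MMS}^{2k-2}$ strictly exceeds what the agent receives, while \prop{(k-1)} still holds by removing $k-1$ big goods from outside $G_i$. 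The main subtlety is constructing these examples so that the \prop{(k-1)} inequality is tight in the right direction while the MMS with one fewer part is genuinely larger; adjusting multiplicities of a few distinct good-values should suffice.
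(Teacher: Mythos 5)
Your proofs of the three implications are correct and essentially identical to the paper's: parts (a) and (b) use the same pigeonhole argument on the set $C$ versus the parts of an MMS-witnessing partition (the paper phrases it for an arbitrary partition rather than an optimal one, but this is cosmetic), and part (c) uses the same integrality argument, with your two-case verification of the converse being a correct expansion of what the paper states more tersely.

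The genuine gap is in the tightness claims, which are part of the lemma's statement. Your example for the tightness of $1/k$ in (a) does not work: with $k$ identical goods and each group receiving one, the agent's utility equals $\text{MMS}^k$ exactly, so the ratio achieved is $1$, not $1/k$ --- the construction exhibits no gap at all. To force the ratio down to $1/k$ you need the agent's group to receive a \emph{low}-value good while each other group receives a bundle whose high-value item can be removed to restore EF1/\prop{(k-1)}, yet whose high-value items can be isolated into singleton parts in the MMS partition. The paper's example does exactly this: $k$ goods of value $1$ and $k-1$ goods of value $k$; the agent's group gets one value-$1$ good, each other group gets one value-$1$ good plus one value-$k$ good, so the allocation is EF1 (hence \prop{(k-1)}) while $\text{MMS}^k=k$ via the partition $(\{g_1,\dots,g_k\},\{g_{k+1}\},\dots,\{g_{2k-1}\})$. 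For (b) your sketch is only a plan, not a construction, and you yourself flag the "main subtlety" as unresolved; the paper resolves it by giving the agent's group a single good worth $k-1$ and each other group one good worth $k$ plus $k-1$ goods worth $1$, then exhibiting an explicit $(2k-2)$-partition in which every part is worth at least $k>k-1$, so the allocation is \prop{(k-1)} but not $1$-out-of-$(2k-2)$ MMS-fair. Without concrete instances of this kind, the tightness assertions in (a) and (b) remain unproved.
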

\begin{proof}
Denote by $u$ the utility function of the agent and assume without loss of generality that the agent is in group $A_1$.
\begin{enumerate}[(a)]
\item \prop{(k-1)} implies that
there exists a subset $C$ with $|C|\leq k-1$ 
such that $u(G_1)\geq \frac{1}{k}u(G \setminus C)$. 
In any partition of $G$ into $k$ bundles, 
at least one bundle
does not contain any good in $C$. 
This bundle is contained in 
$G \setminus C$.
Therefore, the MMS is at most 
$u(G 
\setminus C)$
and $u(G_1)$ is at least $1/k$ of it.

To show that the factor $1/k$ is tight, assume that there are $2k-1$ goods with $u(g_1) = \cdots = u(g_k) = 1$ and $u(g_{k+1}) = \cdots = u(g_{2k-1}) = k$. 
If the agent's group gets $g_1$ and group $i\geq 2$ gets $\{g_i, g_{k+i-1}\}$, the agent gets utility $1$ and finds the allocation EF1 (hence also \prop{(k-1)}). However, the MMS is $k$, as can be seen from the partition $(\{g_1, \ldots, g_k\}, \{g_{k+1}\}, \ldots, \{g_{2k-1}\})$.

\item As above \prop{(k-1)} implies $u(G_1)
\geq \frac{1}{k} u(G \setminus C)$.
Since $|C|\leq k-1$, 
in any partition of $G$ into $2 k-1$ bundles, 
at least $k$ bundles
do not contain any good in $C$.
The union of these bundles is contained in 
$G \setminus C$.
By the pigeonhole principle, at least one of these $k$ bundles has utility at most 
$\frac{1}{k}u(G \setminus C)$
which is at most $u(G_1)$. 
Therefore, $u(G_1)$ is at least the 1-out-of-$(2k-1)$ MMS.%
\footnote{
A similar proof shows that for every $c\geq 1$, 
\prop{c} implies 1-out-of-$(c+k)$-MMS fairness.
}

To show that $2k-1$ is tight,  consider the following allocation:
\begin{itemize}
\item Group $A_1$ gets a single good worth $k-1$;
\item Each group $A_2,\ldots,A_k$ gets one good worth $k$ plus $k-1$ goods worth $1$ each for our distinguished agent in group $A_1$.
\end{itemize}
The allocation is EF1 (hence \prop{(k-1)}) for the agent.
However, the 1-out-of-$(2k-2)$ MMS of the agent is $k$ due to the following partition (note that in total, there are $1$ good worth $k-1$, $k-1$ goods worth $k$, and $(k-1)(k-1) = k^2-2k+1$ goods worth $1$):
\begin{itemize}
\item Bundle 1 has the good worth $k-1$ plus one good worth $1$;
\item Each of the $k-1$ bundles $2,\ldots,k$ has one good worth $k$;
\item Each of the $k-2$ bundles $k+1,\ldots,2k-2$ has $k$ goods worth $1$.

\end{itemize}
\item Suppose the agent has $pk + q$ desired goods, for some integers $p\geq 0$ and $q\range{0}{k-1}$.
Both \prop{(k-1)} and MMS-fairness require that the agent's group receives at least $p$ of these goods, so the conditions are equivalent.
\qedhere
\end{enumerate}
\end{proof}

Unlike (c), the opposite implications of (a) and (b) do not hold even when $k=2$ and agents are binary.
\begin{example}
Assume that there are $k=2$ groups and 6 goods for which all agents have value $1$.
Then, an allocation is 1-out-of-$3$ MMS-fair for an agent iff the agent's value is at least $2$, and it is $1/2$-fraction-MMS fair for all agents iff each agent's value is at least $3/2=1.5$; 
both these conditions can be satisfied for all agents by giving each group at least two goods.
However, an allocation is \prop{1} for an agent iff the agent's value is at least $5/2=2.5$. Hence, the only way to make the allocation \prop{1} for all agents is by giving each group exactly three goods.
\qed
\end{example}

The following diagram summarizes Lemmas \ref{EF1-is-PROP} and \ref{PROP-is-MMS}. 
\begin{align*}
            & \nearrow \text{${1/k}$-fraction-MMS-fair}
\\
\text{EF1} \rightarrow
\text{\prop{(k-1)}} & \leftrightarrow  \text{MMS-fair (for binary agents)}
\\
            & \searrow \text{1-out-of-$(2k-1)$ MMS-fair} 
\end{align*}

Now we are ready to define our main group fairness notion:
\begin{definition}
	\label{def:democratic-fairness}
	For any given fairness notion,
	an allocation $(G_1,\ldots,G_k)$ is said to be \emph{$h$-democratic fair} if it is fair for at least $h\cdot n_i$ agents in group $A_i$, for all $i\range{1}{k}$.
\end{definition}
We also refer to $1$-democratic fairness as \emph{unanimous fairness}.

\section{Two Groups with Binary Valuations}
\label{sec:2groups-binary}
This section considers the setting where there are two groups, the agents have additive valuations, and
each agent either \emph{desires} a good (in which case her utility for the good is $1$) or does not desire it (in which case her utility is $0$). 

\subsection{Negative results}
\label{sub:binary-negative}
Even in the special case of binary valuations, some fairness guarantees are unattainable.
\begin{proposition}
	\label{binary-negative-2/3}
	For any $h>2/3$, there is a binary instance in which no allocation is $h$-democratic positive-MMS-fair.
\end{proposition}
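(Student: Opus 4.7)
The plan is to construct a small binary instance with two groups of three agents each and only three goods, where each agent desires a distinct two-element subset of the goods. Explicitly, take $G = \{g_1, g_2, g_3\}$, and in each group $A_i$ (for $i = 1, 2$) let agent $a_{i1}$ desire $\{g_1, g_2\}$, agent $a_{i2}$ desire $\{g_1, g_3\}$, and agent $a_{i3}$ desire $\{g_2, g_3\}$.

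The first step is to compute the MMS of each agent. Since every agent desires exactly two goods, she can partition $G$ into two bundles, each containing one of her desired goods (placing the third good arbitrarily). Hence $\text{MMS}^2_{ij}(G) = 1 > 0$ for every $a_{ij}$, so positive-MMS-fairness for $a_{ij}$ is equivalent to the condition that $A_i$ receives at least one of her two desired goods.

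The second step is a short case analysis showing that in any allocation at least one group is positive-MMS-fair for at most $2/3$ of its agents. Since $|G| = 3$, some group $A_i$ receives at most one good. If $A_i$ receives zero goods, then every agent in $A_i$ has zero utility and the happy fraction is $0$. If $A_i$ receives exactly one good $g_\ell$, then the unique agent in $A_i$ whose desired pair is $\{g_{\ell'}, g_{\ell''}\}$ with $\ell', \ell'' \ne \ell$ gets zero utility, so the happy fraction is at most $2/3$. In either case, at least one group has at most $2/3$ of its agents positive-MMS-fair, so the allocation cannot be $h$-democratic positive-MMS-fair for any $h > 2/3$.

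There is no real obstacle here; the only thing to be careful about is the MMS computation (verifying that binary agents with exactly two desired goods have positive MMS) and the observation that the pigeonhole argument applies to \emph{any} allocation, including the degenerate ones that give one group nothing.
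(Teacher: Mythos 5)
Your proposal is correct and uses exactly the paper's construction: three goods and, in each group, three agents each desiring a distinct pair of goods (equivalently, each disliking a unique good), with the same pigeonhole observation that the group receiving at most one good leaves at least one of its three members with zero utility. Your write-up merely makes the case analysis (zero goods vs.\ one good) and the MMS computation explicit, which the paper leaves implicit.
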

\begin{proof}
Suppose that there are three goods. Each group consists of three members, each of whom has utility 0 for a unique distinct good and utility 1 for each of the other two goods. Each agent has a positive MMS ($1$), but no allocation gives all agents a positive utility.%
\footnote{
This negative result extends  to groups with $3 l$ members for every integer $l\geq 1$. It does not hold when the number of agents is not a multiple of 3. However, we do not address such cases in this paper, as our goal is to develop division algorithms that hold regardless of the number of agents.
}
\end{proof}

We now proceed to prove more general negative results. 
In binary instances, each agent can be represented by two integers, which we denote here by $r$ and $s$:
\begin{itemize}
	\item $r$ is the number of goods that the agent finds desirable;
	\item $s$ is the number of desirable goods that the agent needs to get so that the allocation is considered fair for that agent.
\end{itemize}
Specific fairness requirements define $s$ as a function of $r$. For example, with two groups:
\begin{itemize}
	\item EF$c$ and \prop{c} both mean that $s = \lfloor (r - c+1)/2 \rfloor$. In particular, EF1 and \prop{1} mean that $s = \lfloor r/2 \rfloor$.
	\item 1-out-of-$c$ MMS means $s = \lfloor r/c \rfloor$. So MMS-fairness is equivalent to EF1 and \prop{1}.
	\item 1-of-best-$c$ fairness means that $s = 1$ whenever $r\geq c$.	
	\item Positive-MMS means that $s = 1$ whenever $r\geq 2$ (equivalent to 1-of-best-2).
\end{itemize}
We prove a negative result for general $r$ and $s$, and then use it to derive negative results for specific fairness requirements.
\begin{proposition}
\label{binary-negative-maxh}
Let $r, s$ be integers such that $r\geq s \geq 1$, and let
\begin{align*}
\maxh(r,s) = 
\begin{cases}
0 &  \text{~~when~~} r \leq 2 s - 1;
\\
\frac{1}{2^r}\sum_{i=s}^r \binom{r}{i}
& \text{~~when~~} r\geq 2 s.
\end{cases}
\end{align*}
For any $h>\maxh(r,s)$, there exists a binary instance with two groups in which no allocation is $h$-democratic fair, where each agent desires exactly $r$ goods and needs $s$ desirable goods in order to consider the allocation fair.
\end{proposition}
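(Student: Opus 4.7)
The plan is to handle the two regimes of $\maxh$ separately. For $r\leq 2s-1$ we have $\maxh(r,s)=0$, so it suffices to exhibit an instance in which every allocation leaves at least one group with zero satisfied agents. I would take the instance with exactly $r$ goods and one agent in each group whose desired set is all of $G$: any allocation satisfies $|G_1|+|G_2|=r<2s$, which forces $\min(|G_1|,|G_2|)<s$, so the agent on the losing side fails the $s$-threshold and no $h>0$ is achievable.

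For the main case $r\geq 2s$, I would use an asymptotic construction indexed by a large integer $N$. Take $N$ goods and place, in each of the two groups, exactly one agent per $r$-subset of $G$, so $n_1=n_2=\binom{N}{r}$. Since this instance is invariant under permutations of the $N$ goods, the fraction of satisfied agents in group $i$ depends only on $|G_i|$. A direct count shows that if $|G_1|=a$ then this fraction equals the hypergeometric tail
\[
f_N(a) \;:=\; \frac{1}{\binom{N}{r}}\sum_{i=s}^{r}\binom{a}{i}\binom{N-a}{r-i},
\]
and by symmetry the fraction in group~2 equals $f_N(N-a)$. The map $f_N$ is non-decreasing in $a$ (hypergeometric distributions with more ``successes'' stochastically dominate those with fewer, via the obvious coupling that fixes the drawn positions and promotes one element to a success), so $\min(f_N(a),f_N(N-a))\leq f_N(\lfloor N/2\rfloor)$. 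Hence the largest $h$ realized by any allocation in this instance is exactly $f_N(\lfloor N/2\rfloor)$, attained at the balanced split.

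It remains to show $\lim_{N\to\infty} f_N(\lfloor N/2\rfloor)=\maxh(r,s)$. Termwise, $\binom{a}{i}\binom{N-a}{r-i}/\binom{N}{r}\to \binom{r}{i}2^{-r}$ as $N\to\infty$ with $a=\lfloor N/2\rfloor$ (a standard calculation with falling factorials), and summing over $i\geq s$ yields the limit. Given any $h>\maxh(r,s)$, I then pick $N$ large enough that $f_N(\lfloor N/2\rfloor)<h$; in that instance no allocation is $h$-democratic fair.

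I expect the main subtlety to be precisely this asymptotic step. For every finite $N$ the hypergeometric is strictly more concentrated than the binomial (its variance carries the finite-population correction $(N-r)/(N-1)$), and a short check shows this forces $f_N(\lfloor N/2\rfloor)>\maxh(r,s)$ at every finite $N$. Thus no single finite instance of this form witnesses the bound on its own, and the argument must genuinely invoke the limit (or an explicit convergence rate) rather than a clean combinatorial identity.
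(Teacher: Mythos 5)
Your proposal is correct and follows essentially the same route as the paper: the degenerate instance with $r$ goods for the case $r\leq 2s-1$, and for $r\geq 2s$ the instance with one agent per $r$-subset of a large ground set, where the hypergeometric tail at the balanced split converges to $\frac{1}{2^r}\sum_{i=s}^r\binom{r}{i}$. The monotonicity observation (reducing to the balanced split) and the limit computation match the paper's argument, and your closing remark that the bound is only attained in the limit is a correct, if inessential, refinement.
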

\begin{proof}
For the case $r\leq 2 s - 1$, consider an instance with $r$ goods, where all agents in both groups desire all goods and needs $s$ desirable goods to be happy. At least one group will get at most $r/2<s$ goods, so all of its members will be unhappy.
	
For the case $r\geq 2 s$, consider an instance with $2 m$ goods, for some $m\gg r$. In each group there are $\binom{2m}{r}$ members, each of whom wants a distinct subset of $r$ goods and needs $s$ desirable goods to be happy. At least one group will get at most $m$ goods. We may assume that the group receives exactly $m$ goods; if the group receives fewer than $m$ goods, the fraction of happy agents can only decrease. In this group, the fraction of happy agents will be at most:
	\begin{align*}
	\frac{
		\sum_{i=s}^r \binom{m}{i}\cdot \binom{m}{r-i}
	}{
		\binom{2m}{r}
	}.
	\end{align*}
	When $m\gg r$, the numerator is approximately $\sum_{i=s}^r \frac{m^i}{i!}\cdot \frac{m^{r-i}}{(r-i)!}
	=
	\sum_{i=s}^r \frac{m^r}{i! (r-i)!}
	$ and the denominator is approximately $\frac{(2 m)^r}{ r!} = \frac{2^r m^r}{r!}$. Therefore when $m\to\infty$ the expression approaches $\frac{1}{ 2^r}\sum_{i=s}^r \binom{r}{i}$%
	.
\end{proof}
For illustration, some values of $\maxh(r,s)$ are shown in Table \ref{tab:maxh}.

\begin{table}
	\begin{center}
		\STautoround{3}
		\begin{spreadtab}{{tabular}{>{\headingstyle}c|cccccccccc}}
			@ $r\downarrow ~ s \implies$
			& \textbf{:=0}        & \textbf{:={[-1,0]+1}}   & \textbf{:={[-1,0]+1}} & \textbf{:={[-1,0]+1}}  & \textbf{:={[-1,0]+1}}  & \textbf{:={[-1,0]+1}}  & \textbf{:={[-1,0]+1}}  & \textbf{:={[-1,0]+1}}  & \textbf{:={[-1,0]+1}}  & \textbf{:={[-1,0]+1}}
			\\ \hline
			0
			& 1          & 0                   & 0        & 0 & 0 & 0 & 0 & 0 & 0 & 0
			\\ 
			[0,-1]+1   
			& 1          & 0   & ([-1,-1]+[0,-1])/2   & ([-1,-1]+[0,-1])/2   & ([-1,-1]+[0,-1])/2   & ([-1,-1]+[0,-1])/2   & ([-1,-1]+[0,-1])/2   & ([-1,-1]+[0,-1])/2   & ([-1,-1]+[0,-1])/2   & ([-1,-1]+[0,-1])/2   
			\\ 
			[0,-1]+1   
			& 1          & \textit{:={([-1,-1]+0.5)/2}}  & 0   & ([-1,-1]+[0,-1])/2   & ([-1,-1]+[0,-1])/2   & ([-1,-1]+[0,-1])/2   & ([-1,-1]+[0,-1])/2   & ([-1,-1]+[0,-1])/2   & ([-1,-1]+[0,-1])/2   & ([-1,-1]+[0,-1])/2   
			\\ 
			[0,-1]+1   
			& 1          & ([-1,-1]+[0,-1])/2   & 0   & ([-1,-1]+[0,-1])/2   & ([-1,-1]+[0,-1])/2   & ([-1,-1]+[0,-1])/2   & ([-1,-1]+[0,-1])/2   & ([-1,-1]+[0,-1])/2   & ([-1,-1]+[0,-1])/2   & ([-1,-1]+[0,-1])/2   
			\\ 
			[0,-1]+1   
			& 1          & ([-1,-1]+[0,-1])/2   & ([-1,-1]+0.5)/2   & ([-1,-1]+[0,-1])/2   & ([-1,-1]+[0,-1])/2   & ([-1,-1]+[0,-1])/2   & ([-1,-1]+[0,-1])/2   & ([-1,-1]+[0,-1])/2   & ([-1,-1]+[0,-1])/2   & ([-1,-1]+[0,-1])/2   
			\\
			[0,-1]+1   
			& 1          & ([-1,-1]+[0,-1])/2   & ([-1,-1]+[0,-1])/2   & 0   & ([-1,-1]+[0,-1])/2   & ([-1,-1]+[0,-1])/2   & ([-1,-1]+[0,-1])/2   & ([-1,-1]+[0,-1])/2   & ([-1,-1]+[0,-1])/2   & ([-1,-1]+[0,-1])/2   
			\\ 
			[0,-1]+1   
			& 1          & ([-1,-1]+[0,-1])/2   & ([-1,-1]+[0,-1])/2   & ([-1,-1]+0.5)/2   & ([-1,-1]+[0,-1])/2   & ([-1,-1]+[0,-1])/2   & ([-1,-1]+[0,-1])/2   & ([-1,-1]+[0,-1])/2   & ([-1,-1]+[0,-1])/2   & ([-1,-1]+[0,-1])/2   
			\\
			[0,-1]+1   
			& 1          & ([-1,-1]+[0,-1])/2   & ([-1,-1]+[0,-1])/2   & ([-1,-1]+[0,-1])/2   & 0   & ([-1,-1]+[0,-1])/2   & ([-1,-1]+[0,-1])/2   & ([-1,-1]+[0,-1])/2   & ([-1,-1]+[0,-1])/2   & ([-1,-1]+[0,-1])/2   
			\\ 
			[0,-1]+1   
			& 1          & ([-1,-1]+[0,-1])/2   & ([-1,-1]+[0,-1])/2   & ([-1,-1]+[0,-1])/2   & ([-1,-1]+0.5)/2   & ([-1,-1]+[0,-1])/2   & ([-1,-1]+[0,-1])/2   & ([-1,-1]+[0,-1])/2   & ([-1,-1]+[0,-1])/2   & ([-1,-1]+[0,-1])/2   
			\\ 
			[0,-1]+1   
			& 1          & ([-1,-1]+[0,-1])/2   & ([-1,-1]+[0,-1])/2   & ([-1,-1]+[0,-1])/2   & ([-1,-1]+[0,-1])/2   & 0   & ([-1,-1]+[0,-1])/2   & ([-1,-1]+[0,-1])/2   & ([-1,-1]+[0,-1])/2   & ([-1,-1]+[0,-1])/2   
			\\
			[0,-1]+1   
			& 1          & ([-1,-1]+[0,-1])/2   & ([-1,-1]+[0,-1])/2   & ([-1,-1]+[0,-1])/2   & ([-1,-1]+[0,-1])/2   & ([-1,-1]+0.5)/2   & ([-1,-1]+[0,-1])/2   & ([-1,-1]+[0,-1])/2   & ([-1,-1]+[0,-1])/2   & ([-1,-1]+[0,-1])/2   
			\\
		\end{spreadtab}
	\end{center}
	\caption{\label{tab:maxh}Some values of $\maxh(r,s)$. 
		The italics at $(r,s)=(2,1)$ denotes that this upper bound is not tight---Proposition \ref{binary-negative-2/3} shows an upper bound of $2/3$ in this case. We do not know if the other bounds are tight.
		\ifdefined\propertyB
		The bound at (3,1) is probably not tight too. There is a family of 7 sets with 3 elements without property B: C = {{1, 2, 4}, {2, 3, 5}, {3, 4, 6}, {4, 5, 7}, {5, 6, 1}, {6, 7, 2}, {7, 1, 3}}.
		This probably implies an upper bound of $6/7$.
		Property B does not help us anymore when $r\geq 4$: every family of sets with 4 elements without property B provably has at least 23 sets, which implies an upper bound of $22/23$.
		\fi
	}
\end{table}

\begin{proposition}
\label{binary-negative-1outofc}
For any integer $c\geq 2$ and 
$h>1-1/2^c$, there is a binary instance with two groups in which no allocation is $h$-democratic 
$1$-of-best-$c$ fair 
(hence no allocation is 
$h$-democratic $1$-out-of-$c$ MMS-fair).
\end{proposition}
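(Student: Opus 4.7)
The plan is to derive this proposition as an immediate application of Proposition \ref{binary-negative-maxh}, by plugging in the right parameters $r$ and $s$. For a binary agent who desires exactly $r$ goods, her $c$-th best good has value $1$ precisely when $r\geq c$, so 1-of-best-$c$ fairness for such an agent amounts to requiring $s=1$ desirable good in her group's bundle. I would therefore instantiate Proposition \ref{binary-negative-maxh} with $r=c$ and $s=1$.

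Since $c\geq 2 = 2s$, the second branch of the $\maxh$ formula applies, and the calculation collapses nicely:
\[
\maxh(c,1) \;=\; \frac{1}{2^c}\sum_{i=1}^{c}\binom{c}{i} \;=\; \frac{2^c - 1}{2^c} \;=\; 1 - \frac{1}{2^c}.
\]
Thus Proposition \ref{binary-negative-maxh} directly produces, for any $h > 1 - 1/2^c$, a binary two-group instance in which no allocation gives a desirable good to an $h$-fraction of each group. Under the binary encoding (each agent desires exactly $r=c$ goods and needs $s=1$), this is exactly the failure of $h$-democratic 1-of-best-$c$ fairness.

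To finish the proposition, I would note that for a binary agent desiring $r$ goods, 1-out-of-$c$ MMS requires $\lfloor r/c \rfloor$ desirable goods, so when $r=c$ it too reduces to $s=1$; equivalently, this follows from the implication 1-out-of-$c$ MMS $\Rightarrow$ 1-of-best-$c$ stated in the implication diagram after Definition \ref{def:maximin}. Hence the same instance refutes 1-out-of-$c$ MMS fairness. There is no real obstacle here; the entire argument is a substitution into an already-proved general bound, and the only thing to verify carefully is that the witness instance from the proof of Proposition \ref{binary-negative-maxh} (with $2m$ goods and one agent per $c$-subset in each group) genuinely realizes 1-of-best-$c$ as the target fairness notion, which it does because every agent desires the same number $c$ of goods.
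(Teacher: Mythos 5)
Your proposal is correct and matches the paper's own proof exactly: both apply Proposition \ref{binary-negative-maxh} with $r=c$ and $s=1$, compute $\maxh(c,1)=1-1/2^c$, and observe that $1$-out-of-$c$ MMS fairness implies 1-of-best-$c$ fairness so the same instance rules out both. Your added verification that the witness instance realizes the right fairness notion is a sensible sanity check but does not change the argument.
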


\begin{proof}
	Apply Proposition~\ref{binary-negative-maxh} with $r=c$ and $s=1$.
	Then $\maxh(r,s) = 1 - \frac{1}{2^c}\cdot 1$.
\end{proof}
Note that for $c=2$, 1-of-best-$c$ is equivalent to positive-MMS, so  Proposition~\ref{binary-negative-2/3} gives a tighter upper bound of $2/3$.

\begin{proposition}
	\label{binary-negative-EFc}
	For any constant integer $c\geq 1$ and $h>1/2$, there is a binary instance with two groups in which no allocation is $h$-democratic \prop{c} / EF$c$.
\end{proposition}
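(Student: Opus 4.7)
The plan is to reduce the claim to Proposition~\ref{binary-negative-maxh} with a judicious choice of parameters. From the bullets at the start of Section~\ref{sub:binary-negative}, a binary agent who desires $r$ goods finds a two-group allocation EF$c$-fair (equivalently \prop{c}-fair) if and only if her group receives at least $s := \lfloor(r-c+1)/2\rfloor$ of her desired goods. So I would apply Proposition~\ref{binary-negative-maxh} with this pair $(r,s)$: the instance produced there (in which each of $\binom{2m}{r}$ agents per group desires a distinct $r$-subset of $2m$ goods, with $m\to\infty$) admits no $h$-democratic EF$c$/\prop{c} allocation whenever $h>\maxh(r,s)$.

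Thus it suffices to choose $r$ large enough (depending on $h$ and $c$) so that $\maxh(r,s)<h$. The core technical step is to verify that $\maxh(r,\lfloor(r-c+1)/2\rfloor)\to 1/2$ as $r\to\infty$ with $c$ fixed. Since $c\geq 1$ forces $s\leq r/2$ and hence $r\geq 2s$, we have
\[
\maxh(r,s)=\frac{1}{2^r}\sum_{i=s}^r \binom{r}{i}=\Pr[B_r\geq s]
\]
for $B_r\sim\text{Binomial}(r,1/2)$. Exploiting the symmetry $\Pr[B_r\geq s]=\Pr[B_r\leq r-s]$ yields $\maxh(r,s) = \tfrac12 + \tfrac12\Pr[s\leq B_r\leq r-s]$. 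The integer interval $[s,r-s]$ has width $r-2s\in\{c-1,c\}$, independent of $r$. Since $\max_k \binom{r}{k}/2^r=\Theta(1/\sqrt r)$ by Stirling, the probability $\Pr[s\leq B_r\leq r-s]$ is a sum of at most $c+1$ terms each of size $O(1/\sqrt r)$, hence $O(c/\sqrt r)\to 0$. Thus $\maxh(r,s)\to 1/2$ from above.

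Given any $h>1/2$, pick $r$ large enough so that $\maxh(r,\lfloor(r-c+1)/2\rfloor)<h$, and invoke Proposition~\ref{binary-negative-maxh} with this pair $(r,s)$ to produce the impossibility instance. The main obstacle is really just the binomial limit above, which is routine via the symmetry identity plus a standard peak bound for the central binomial coefficient; everything else is a direct application of the existing proposition together with the binary characterization of EF$c$/\prop{c} listed in Section~\ref{sub:binary-negative}.
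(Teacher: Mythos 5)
Your proposal is correct and follows essentially the same route as the paper: both apply Proposition~\ref{binary-negative-maxh} with $s=\lfloor(r-c+1)/2\rfloor$ and observe that $\maxh(r,s)\to 1/2$ as $r\to\infty$ for fixed $c$. The only difference is that you supply the justification for the limit (via the binomial symmetry identity and the $O(1/\sqrt{r})$ bound on the central binomial coefficient), which the paper states without proof.
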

\begin{proof}
	Let $l$ be a large positive integer. 
	Apply Proposition~\ref{binary-negative-maxh} with $r = 2 l$
	and $s=\lfloor(r-c+1)/2\rfloor = l - \lfloor c/2 \rfloor$.
	Then $\maxh(r,s) = \frac{1}{2^{2 l}} \sum_{i=l-\lfloor c/2 \rfloor}^{2 l} \binom{2 l}{ i}$.
	When $l\to \infty$, this expression approaches $1/2$ for any constant $c$.
\end{proof}

\ifdefined\EFX
We have an upper bound of $1/2$ for EF1 and a (trivial) upper bound of $0$ for EF.
Interestingly, at the midpoint between these bounds we have an upper bound for EFX.
\begin{proposition}
	\label{binary-negative-EFx}
	For any $h>1/4$, there is a binary instance with two groups in which no allocation is $h$-democratic EFX.
\end{proposition}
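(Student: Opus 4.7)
The plan is to exhibit a binary two-group instance in which every allocation leaves at most $1/4 + o(1)$ of the agents EFX-happy in at least one group. I would begin by specializing Definition~\ref{def:EF}(b) to a binary agent in group $i$ with desired set $D$ of size $r$: the condition reduces to $|D \cap G_i| \geq |D \cap G_{i'}| - \mathbb{1}[G_{i'} \subseteq D]$. So EFX collapses to EF1 only in the edge case $G_{i'} \subseteq D$ (atypical when the instance has many goods relative to $|D|$), and otherwise demands the strict envy-freeness condition $|D \cap G_i| \geq |D \cap G_{i'}|$. This extra strictness is what separates EFX from EF1 and is the lever the construction will exploit.

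I would then design the adversarial instance with $N$ goods ($N \to \infty$) and agents in each group whose desired sets form a combinatorial family of $r$-subsets for a fixed $r$. Every reasonable allocation then has $|G_{i'}| \gg r$, forcing $G_{i'} \not\subseteq D$ for (all but a vanishing fraction of) agents, so EFX becomes the strict envy-freeness condition above. The EFX-happy fraction in group $i$ under bipartition $(G_1,G_2)$ is then counted by a hypergeometric estimate: the fraction of $r$-subsets $D$ from the relevant family satisfying $|D \cap G_i| \geq |D \cap G_{i'}|$. A symmetric single-family construction would only give an upper bound of $1/2$ by the elementary averaging argument that the two groups' EFX-fractions sum to $1 + o(1)$. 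To push down to $1/4$, I would use an asymmetric two-family construction, with $F_1$ and $F_2$ supported on different ``axes'' of the good set so that any bipartition making group~$1$'s EFX-fraction comfortably exceed $1/4$ unbalances the goods in a way that drives group~$2$'s EFX-fraction below $1/4 + o(1)$, and symmetrically.

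The main obstacle is precisely the asymmetric design together with the accompanying worst-case allocation analysis. The naive symmetric instance is well-understood (it matches the $1/2$ bound for EF1 via Proposition~\ref{binary-negative-EFc}); halving it demands that the allocator provably cannot simultaneously achieve near-balance of $|D \cap G_1|$ versus $|D \cap G_2|$ for $D$ ranging over \emph{both} families. A promising route is to combine the two regimes of Proposition~\ref{binary-negative-maxh}: in the $r \leq 2s-1$ regime one group is ``starved'' of goods, while in the $r \geq 2s$ regime the starved group's EFX-fraction is forced near $1/2$; composing a family of each type across the two groups should make the allocator face a dilemma whose worst-group EFX-fraction is at most $1/4 + o(1)$, tight as the construction's parameter tends to infinity.
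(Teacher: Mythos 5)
Your proposal does not yet prove the proposition, because it never exhibits the instance. The binary specialization of EFX (happy iff $|D\cap G_i|\geq |D\cap G_{i'}|$ unless $G_{i'}\subseteq D$, in which case a deficit of one is tolerated) is correct, but everything after that is a search strategy rather than a construction: you concede that the symmetric single-family instance caps at $1/2$, and the proposed remedy---an asymmetric pair of families of $r$-subsets with $r$ fixed and the number of goods tending to infinity---is described only as ``a promising route.'' Whether any such pair of families forces the minimum strict-envy-freeness fraction down to $1/4$ is precisely the combinatorial question your plan defers, and it is not clear it has a positive answer: the natural candidates (two copies of the uniform $r$-subset family, families pinned to a common good, etc.) are all defeated by an allocator who unbalances the partition appropriately. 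The suggestion to ``compose'' the two regimes of Proposition~\ref{binary-negative-maxh} is also not coherent as stated, since an allocator facing a family from the starved regime simply avoids starving that group.

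More importantly, your strategic decision to make $G_{i'}\subseteq D$ impossible---so that EFX collapses to strict envy-freeness---discards exactly the slack that the paper's construction exploits, and the paper's instance is in fact symmetric and single-family. It takes $m=2l+1$ goods and, in each group, one agent for every $(2l-1)$-subset, so each agent's \emph{non-desired} set is a pair. In the group receiving at most $l$ goods, the $G_{i'}\subseteq D$ branch of EFX is unsatisfiable (it would require $l-1$ desired goods plus both non-desired goods, i.e., $l+1$ goods), while the strict branch requires $|D\cap G_i|\geq l$ and hence that the entire bundle be desired; this happens exactly when both non-desired goods lie among the other group's $l+1$ goods, i.e., for $\binom{l+1}{2}$ of the $\binom{2l+1}{2}$ agents, a fraction tending to $1/4$. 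Meanwhile every agent of the larger group turns out to be EFX-happy, precisely because the $G_{i'}\subseteq D$ correction (or an exact tie) rescues the cases where strict envy-freeness fails---which is why your averaging claim that the two groups' EFX-fractions sum to $1+o(1)$ does not apply here, and why ``symmetric'' does not imply ``capped at $1/2$.'' The missing idea, then, is to take desired sets of co-size $2$ rather than of fixed size, and to count happy agents by whether their non-desired pair lands entirely in the opposite (larger) bundle.
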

\begin{proof}
	Consider an instance with $m=2 l + 1$ goods and $\binom{2 l + 1}{2 l - 1}$ agents in each group, for some $l\geq 1$. Each agent desires a unique subset of $2 l - 1$ goods.
	An allocation is EFX for an agent iff either her group receives at least $l$ desired goods,
	or her group receives at least $l-1$ desired goods and the $2$ non-desired goods.
	However, in any allocation at least one group receives (at most) $l$ goods. 
	In this group, the second condition cannot be satisfied since it requires $l+1$ goods,
	and the first condition is satisfied for $\binom{l+1}{l-1}$ agents (one agent for each choice of $l-1$ desired goods from the set of $l+1$ goods given to the other group). Therefore the fraction of happy agents is:
	\begin{align*}
	\frac{
		\binom{l+1}{l-1}
	}{
		\binom{2 l + 1}{2 l - 1}
	}
	=
	\frac{
		(l+1)l/2
	}{
		(2l +1)(2l)/2
	}.
	\end{align*}
	When $l\to\infty$, this fraction approaches $1/4$.
\end{proof}
\fi

\subsection{Positive results}
\label{sub:binary-positive}
\subsubsection{The RWAV protocol}
Our positive results are attained with a protocol that we call \emph{Round-robin with Weighted Approval Voting (RWAV)}.
The top-level protocol is \emph{round-robin}---the groups take turns picking one good at a time. Formally:
\begin{framed}
\noindent
While there are remaining goods:
\begin{itemize}
\item Group 1 picks a good;
\item Group 2 picks a good.
\end{itemize}
\end{framed}
To attain the fairness guarantee, each group in its turn should pick its good using \emph{weighted approval voting}.
It uses a weight function $w: {\mathbb{Z}_{\geq 0}}\times {\mathbb{Z}_{\geq 0}}\to [0,1]$, which will be specified later using a recurrence relation (see Table \ref{tab:B}/bottom for some example values).
The group assigns to each member $j$ a weight $w(r_j,s_j)$, where:
\begin{itemize}
\item $r_j$ is $j$'s value for the remaining goods (the number of remaining goods that $j$ values at $1$);
\item $s_j$ depends on the chosen fairness criterion: it is the value that $j$ should still get in order to feel that the allocation is fair 
(the number of goods, from the set of remaining goods that $j$ values as 1, that its group should take so that the  fairness criterion is fulfilled for $j$). 
\end{itemize}

Using the members' weights, the group conducts an ``approval voting'' among the remaining goods: each member votes for all the remaining goods he/she values as 1. The total-weight of each good $g$ is the sum of weights of all  agents voting for $g$. The group picks a good $g$ with a maximum total-weight, breaking ties arbitrarily. 
To summarize, here is the algorithm by which each group in its turn should pick a good:%
\footnote{We are grateful to an anonymous reviewer for the suggestion to simplify the protocol.}
\begin{framed}
\noindent
\textbf{Group's strategy for picking a good}:
\begin{itemize}
\item Assign a weight $w(r,s)$ to each group member who wants $r$ of the remaining goods and needs $s$ of them to achieve fairness;
\item For each remaining good $g$, calculate the sum of weights of all members who want $g$, and pick a good with a maximum total-weight.
\end{itemize}
\end{framed}

\begin{example}
\label{exm:rwav}
Consider an instance with five goods, $\{v,w,x,y,z\}$,
where the fairness criterion is 1-out-of-$2$ MMS-fairness.
Suppose that in the first turn, group 1 took $v$ and group 2 took $w$.
Now it is again the turn of group 1.
Assume that in group 1 there are two agents: Alice, whose desired goods are $\{w,x\}$, and Bob, whose desired goods are $\{w,x,y,z\}$:
\begin{itemize}
\item  Alice has a single remaining desired good ($x$), her current value of group 1's share ($v$) is $0$, and she needs a value of $1$ to satisfy the fairness criterion. Therefore $r_{Alice}=|\{y\}| = 1$, $s_{Alice}=1-0=1$, and her weight is $w(1,1)=0.5$ (Table \ref{tab:B}).
\item  Bob has three remaining desired goods ($x,y,z$), his current value of group 1's share is $0$, and he needs a value of $2$ to satisfy the fairness criterion. Therefore $r_{Bob}=3$, $s_{Bob}=2$, and his weight is $w(3,2)=0.375$. \qed
\end{itemize} 
\end{example}

\begin{table}
	\begin{center}
		\STautoround{3}
		\begin{spreadtab}{{tabular}{>{\headingstyle}c|ccccccc}}
			@ $r\downarrow$ $|$ $s \rightarrow$
			& \textbf{:={0}}        & \textbf{:={[-1,0]+1}}   & \textbf{:={[-1,0]+1}} & \textbf{:={[-1,0]+1}}  & \textbf{:={[-1,0]+1}}  & \textbf{:={[-1,0]+1}}  & \textbf{:={[-1,0]+1}} 
			\\ \hline
			0
			& \textbf{:={1}}          & 0                   & 0        & 0 & 0 & 0 & 0 
			\\
			[0,-1]+1   
			& \textbf{:={1}}          & ([-1,-1]+[0,-1])/2   & min([-1,-2],([-1,-1]+[0,-1])/2)   & min([-1,-2],([-1,-1]+[0,-1])/2)   & min([-1,-2],([-1,-1]+[0,-1])/2)   & min([-1,-2],([-1,-1]+[0,-1])/2)   & min([-1,-2],([-1,-1]+[0,-1])/2)   
			\\
			[0,-1]+1   
			& 1       & \textbf{:={min([-1,-2],([-1,-1]+[0,-1])/2)}}   & min([-1,-2],([-1,-1]+[0,-1])/2)   & min([-1,-2],([-1,-1]+[0,-1])/2)   & min([-1,-2],([-1,-1]+[0,-1])/2)   & min([-1,-2],([-1,-1]+[0,-1])/2)   & min([-1,-2],([-1,-1]+[0,-1])/2)   
			\\
			[0,-1]+1   
			& 1          & \textbf{:={min([-1,-2],([-1,-1]+[0,-1])/2)}}   & min([-1,-2],([-1,-1]+[0,-1])/2)   & min([-1,-2],([-1,-1]+[0,-1])/2)   & min([-1,-2],([-1,-1]+[0,-1])/2)   & min([-1,-2],([-1,-1]+[0,-1])/2)   & min([-1,-2],([-1,-1]+[0,-1])/2)   
			\\ 
			[0,-1]+1   
			& 1          & \textbf{:={min([-1,-2],([-1,-1]+[0,-1])/2)}}   & min([-1,-2],([-1,-1]+[0,-1])/2)   & min([-1,-2],([-1,-1]+[0,-1])/2)   & min([-1,-2],([-1,-1]+[0,-1])/2)   & min([-1,-2],([-1,-1]+[0,-1])/2)   & min([-1,-2],([-1,-1]+[0,-1])/2)   
			\\
			[0,-1]+1   
			& 1          & min([-1,-2],([-1,-1]+[0,-1])/2)   & \textbf{:={min([-1,-2],([-1,-1]+[0,-1])/2)}}   & min([-1,-2],([-1,-1]+[0,-1])/2)   & min([-1,-2],([-1,-1]+[0,-1])/2)   & min([-1,-2],([-1,-1]+[0,-1])/2)   & min([-1,-2],([-1,-1]+[0,-1])/2)   
			\\ 
			[0,-1]+1   
			& 1          & min([-1,-2],([-1,-1]+[0,-1])/2)   & \textbf{:={min([-1,-2],([-1,-1]+[0,-1])/2)}}   & min([-1,-2],([-1,-1]+[0,-1])/2)   & min([-1,-2],([-1,-1]+[0,-1])/2)   & min([-1,-2],([-1,-1]+[0,-1])/2)   & min([-1,-2],([-1,-1]+[0,-1])/2)   
			\\ 
			[0,-1]+1   
			& 1          & min([-1,-2],([-1,-1]+[0,-1])/2)   & \textbf{:={min([-1,-2],([-1,-1]+[0,-1])/2)}}   & min([-1,-2],([-1,-1]+[0,-1])/2)   & min([-1,-2],([-1,-1]+[0,-1])/2)   & min([-1,-2],([-1,-1]+[0,-1])/2)   & min([-1,-2],([-1,-1]+[0,-1])/2)   
			\\ 
			[0,-1]+1   
			& 1          & min([-1,-2],([-1,-1]+[0,-1])/2)   & min([-1,-2],([-1,-1]+[0,-1])/2)   & \textbf{:={min([-1,-2],([-1,-1]+[0,-1])/2)}}   & min([-1,-2],([-1,-1]+[0,-1])/2)   & min([-1,-2],([-1,-1]+[0,-1])/2)   & min([-1,-2],([-1,-1]+[0,-1])/2)    
			\\ 
			[0,-1]+1   
			& 1          & min([-1,-2],([-1,-1]+[0,-1])/2)   & min([-1,-2],([-1,-1]+[0,-1])/2)   & \textbf{:={min([-1,-2],([-1,-1]+[0,-1])/2)}}   & min([-1,-2],([-1,-1]+[0,-1])/2)   & min([-1,-2],([-1,-1]+[0,-1])/2)   & min([-1,-2],([-1,-1]+[0,-1])/2)    
			\\ 
			[0,-1]+1   
			& 1          & min([-1,-2],([-1,-1]+[0,-1])/2)   & min([-1,-2],([-1,-1]+[0,-1])/2)   & \textbf{:={min([-1,-2],([-1,-1]+[0,-1])/2)}}   & min([-1,-2],([-1,-1]+[0,-1])/2)   & min([-1,-2],([-1,-1]+[0,-1])/2)   & min([-1,-2],([-1,-1]+[0,-1])/2)   
			\\ 
		\end{spreadtab}
		
		\vspace{2mm}
		$B(r,s)$ for some $r\geq 0$ and $s\geq 0$.
		The boldfaced cells are the cells of $B(r-1, s(r))$
		where $s(r)=\left\lfloor{\frac{r}{3}}\right\rfloor = $ the function corresponding to 1-out-of-3 MMS-fairness.
		~\\
		~\\
		~\\
		\begin{tabular}{>{\headingstyle}c|ccccccc}
			$r \downarrow$ $|$ $s \rightarrow$ &\headingstyle 0 & \headingstyle 1   & \bfseries 2   & \headingstyle 3   & \headingstyle 4   & \headingstyle 5   & \headingstyle 6   \\ 
			\hline 
			0       & 0 & 0 & 0 &  &  &  &  \\ 
			1       & 0 & .500 & 0 &  &  &  &  \\ 
			2       & 0 & .250 & 0\hide{[.500]} & 0 &  &  &  \\ 
			3       & 0 & .125 & .375 & 0 & &  &  \\ 
			4       & 0 & .063 & .250 & 0\hide{[.375]} & 0 & &  \\ 
			5       & 0 & .031 & .156 & .313 &  0 &  & \\ 
			6       & 0 & .016 & .094 & .234 & 0\hide{[.313]} & 0 & \\ 
			7       & 0 & .008 & .055 & .164 & .273 & 0 & \\ 
			8       & 0 & .004 & .031 & .109 & .219 & 0\hide{[.273]} & 0 \\ 
			9       & 0 & .002 & .018 & .070 & .164 & .246 & 0 \\
			10      & 0 & .001 & .010 & .044 & .117 & .205 & 0\hide{[.246]} \\
		\end{tabular}
		\\
		\vspace{2mm}
		$w(r,s)$ for some $r\geq 0$ and $s\geq 0$.
	\end{center}
	\caption{\label{tab:B}
		Some values of $B(r,s)$ and $w(r,s)$. 
		Compare to $\maxh(r,s)$ in Table \ref{tab:maxh}.
	}
\end{table}

We now specify the function $w(r,s)$. We first define an auxiliary function  $B:\mathbb{Z}\times\mathbb{Z}\to [0,1]$ using the recurrence relation below:
\begin{align}
\label{eq:brs}
\brs{r}{s} := 
\begin{cases}
1  &  s\leq 0;
\\
0  &  0<s \text{~and~} r<s;
\\
\min\bigg[
\frac{1}{2}[B(r-1,s)+B(r-1,s-1)]
,
B(r-2,s-1)
\bigg] & \text{otherwise}.
\end{cases}
\end{align}
A closed-form expression of $B(r,s)$ is derived in Appendix~\ref{sec:properties-of-B}.
Table \ref{tab:B} shows some values.

Now $w$ is defined by:
\begin{align}
\label{eq:wrs}
w(r,s) := B(r,s) - B(r-1,s).
\end{align}

Note that $w(r,s)=0$ whenever $s=0$; thus, members for whom the current allocation is already fair do not affect the weighted voting. Similarly, 
$w(r,s)=0$ whenever $r=0$, so members who do not value any of the remaining goods do not affect the voting either.

To demonstrate the operation of the protocol, we have implemented it in Python.%
\footnote{
{https://github.com/erelsgl/family-fair-allocation}
}
An example run is shown in Appendix \ref{sample:rwav}.

\subsubsection{Analysis of RWAV}
RWAV provides a democratic fairness guarantee for various fairness criteria. 
A fairness criterion is represented by an integer function $s(r)$ that maps the total number of desired goods of an agent (which we denoted above by $r$) to the number of desired goods this agent should get in order to satisfy the fairness criterion (which we denoted above by $s$). For example, for 1-out-of-3 MMS-fairness, this function is $s(r) = \lfloor{\frac{r}{3}}\rfloor$.

\begin{lemma}
\label{binary-positive-general}
Given a fairness criterion represented by an integer function $s(r)$, for every group $i\in\{1,2\}$, 
the RWAV protocol 
yields an allocation that is fair for at least 
a fraction $h_i$ of the agents in group $i$, where:
\begin{align*}
h_1 &= \inf_{r = 1, 2, \ldots} B(r,s(r));
\\
h_2 &= \inf_{r = 1, 2, \ldots} B(r-1,s(r)).
\end{align*}
\end{lemma}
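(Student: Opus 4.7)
The plan is to adapt the Erd\H{o}s--Selfridge potential-function argument. Attach to each agent $j$ with current parameters $(r_j, s_j)$ the value $B(r_j, s_j)$, and track the group totals $\Phi_i := \sum_{j \in A_i} B(r_j, s_j)$ as RWAV proceeds. Once all goods are exhausted every $r_j$ equals $0$, so agent $j$ contributes $B(0, s_j) = 1$ precisely when $s_j \leq 0$ (i.e., $j$ is happy) and $B(0, s_j) = 0$ otherwise; hence $\Phi_i^{\text{final}}$ equals the number of happy agents in $A_i$, and it suffices to lower bound $\Phi_i^{\text{final}}$. Two consequences of the recurrence~\eqref{eq:brs} will be the workhorses: the ``$B(r-2,s-1)$'' branch gives $B(r-2, s-1) \geq B(r, s)$, and the averaging branch gives $B(r-1, s-1) - B(r, s) \geq B(r, s) - B(r-1, s) = w(r, s)$. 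A short joint induction (or the closed form proved in Appendix~\ref{sec:properties-of-B}) also shows that $B$ is non-decreasing in $r$, so $w \geq 0$.

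\textbf{Group $1$ (picks first each round).} Initially $\Phi_1 = \sum_j B(r_j^{(0)}, s(r_j^{(0)})) \geq h_1 n_1$. Consider one full round: group $1$ picks $g_1$, then group $2$ picks $g_2$. For each $j \in A_1$, let $D_j, E_j \in \{0,1\}$ indicate whether $j$ desires $g_1, g_2$ respectively. The ``desires neither'' contribution to $\Delta \Phi_1$ is zero, the ``desires both'' contribution is $B(r_j-2, s_j-1) - B(r_j, s_j) \geq 0$, while the two mixed cases contribute $B(r_j-1, s_j-1) - B(r_j, s_j) \geq w(r_j, s_j)$ and $B(r_j-1, s_j) - B(r_j, s_j) = -w(r_j, s_j)$. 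Regrouping,
\[
\Delta \Phi_1 \;\geq\; \sum_{j:\, D_j = 1} w(r_j, s_j) \;-\; \sum_{j:\, E_j = 1} w(r_j, s_j),
\]
and the RWAV rule of choosing $g_1$ to maximize $\sum_{j:\, j \text{ desires } g} w(r_j, s_j)$ makes the right-hand side non-negative no matter which $g_2$ group $2$ picks. Iterating yields $\Phi_1^{\text{final}} \geq h_1 n_1$.

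\textbf{Group $2$ (picks second each round).} Here I start with the discounted potential $\Phi_2^{(0)} := \sum_{j \in A_2} B(r_j^{(0)} - 1, s(r_j^{(0)})) \geq h_2 n_2$. Immediately after group $1$'s very first pick, monotonicity of $B$ in $r$ shows that the undiscounted sum $\sum_j B(r_j^{(1)}, s_j^{(1)})$ dominates $\Phi_2^{(0)}$: for each $j \in A_2$, either $r_j$ is unchanged (raising the contribution from $B(r-1, s)$ to $B(r, s)$) or $j$ desired the chosen good (leaving it at $B(r-1, s)$). Re-pair the remaining picks as (group $2$, group $1$) rounds, to which the same telescoping case analysis applies verbatim, so the undiscounted $\Phi_2$ is round-wise non-decreasing. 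When $m$ is even one unpaired final pick by group $2$ remains; at that moment only a single good $g_2$ is left, and a direct check using only the base values of $B$ (namely $B(0,s)\in\{0,1\}$ and $B(1,s) \leq B(0, s-1)$) confirms this pick cannot decrease $\Phi_2$ either. Combining, $\Phi_2^{\text{final}} \geq h_2 n_2$.

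The main delicate step is the round-by-round telescoping itself: all four desire-patterns $(D_j, E_j)$ must be aggregated before the weighted-approval optimality of $g_1$ can be cleanly invoked; using only a single pick at a time is not enough. The secondary delicate point is the asymmetry for group $2$, which forces the initial discount-by-one in $\Phi_2$ together with the separate arguments for the very first pick (by group $1$) and, when $m$ is even, the very last pick (by group $2$).
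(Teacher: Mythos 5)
Your proposal is correct and is essentially the paper's own argument: your potential $\Phi_i=\sum_{j}B(r_j,s_j)$ is exactly the group ``balance'' that the paper maintains via fiat payments (Lemmas \ref{lem:balance}--\ref{lem:rwav-end}), your round-wise telescoping with the four desire-patterns is Lemma \ref{lem:balance-increasing}, and your discount-by-one for group 2's first round matches the paper's treatment of the first pick. The only (harmless) difference is that you track one-sided inequalities on $B$ directly, using $B(r-1,s-1)-B(r,s)\geq w(r,s)$ and $B(r-2,s-1)\geq B(r,s)$, whereas the paper maintains the exact invariant (balance $=-B(r_j,s_j)$) via the payment $\max[w(r,s),w(r-1,s-1)]$ and a two-case analysis of the recurrence.
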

\begin{example}
Suppose the chosen fairness criterion is 1-out-of-3 MMS-fairness. Then $h_2$ is the infimum of the sequence $B(r-1, \lfloor{\frac{r}{3}}\rfloor)$, illustrated by the boldfaced cells in Table \ref{tab:B}. We show in Lemma \ref{binary-positive-1ofc} that this infimum is $B(2,1)=0.75$, and in Lemma \ref{B-increasing} that when $s$ is fixed, $B(r,s)$ is an increasing function of $r$. This implies $h_1\geq h_2$ (in this example $h_1=0.875$). Therefore, Lemma \ref{binary-positive-general} implies that the allocation returned by RWAV is $0.75$-democratic 1-out-of-3 MMS-fair.
\qed
\end{example}

To prove Lemma \ref{binary-positive-general}, we add to the protocol several steps in which each group pays/receives fiat money to/from its members. We emphasize that these additional steps are not needed in a practical implementation of the protocol---they are used only in the analysis. We present the additional steps for group 1; the steps for group 2 are analogous.
\begin{itemize}
\item \emph{Initialization}: before the protocol starts, the \emph{balance} of the group and the balance of each member is initialized to 0. Then, each member $j$ pays $B(r_j, s_j)$ to its group, so now $j$'s balance is $-B(r_j,s_j)$ and the group balance is $+\sum_{j\in \text{group 1}} B(r_j,s_j)$.
\item After group 1 picks a good $g$, every member $j$ who values $g$ at 1  (and thus is satisfied with the group's choice) pays $\max[w(r_j,s_j),w(r_j-1,s_j-1)]$ to group 1.%
\footnote{\label{ftn:conference}In the conference version of this paper \citep{ijcai2018democratic}, 
we erroneously wrote that such a member 
should always pay $w(r_j,s_j)$.
}
\item After group 2 picks a good $g$, every member $j$ who values $g$ at 1  (and thus is dissatisfied with the other group's choice) receives $w(r_j,s_j)$ from group 1.
\end{itemize}

\begin{example}[Example \ref{exm:rwav} continued]
Before the protocol starts, Alice has $r=2$ and $s=1$ so she pays $B(2,1)=0.75$, while Bob has $r=4$ and $s=2$ so he pays $B(4,2)=0.625$. The balance of group 1 is $1.375$.

In the first turn, 
Alice's weight is $w(2,1)=0.25$ and Bob's weight is $w(4,2)=0.25$. Goods $w$ and $x$ both have the same total weight of $0.5$, so group 1 picks one of them arbitrarily, say $w$. 
Now, Alice pays $\max[w(2,1),w(1,0)]=\max[0.25,0]=0.25$ 
and Bob pays $\max[w(4,2),w(3,1)]=\max[0.25,0.125]=0.25$, so group 1's balance is now $1.875$.

In the second turn, Alice has $r=1$ and $s=0$ so her weight is $w(1,0)=0$, and Bob has $r=3$ and $s=1$ so his weight is $w(3,1)=0.125$.
Suppose that group 2 picks $x$. Since both Alice and Bob value $x$ at 1, both agents receive their weight from the group---Alice receives $0$ and Bob receives $0.125$. 
Group 1's balance is now $1.75$.

In the third turn, 
Alice's weight is $w(0,0)=0$ and Bob's weight is $w(2,1)=0.25$. Goods $y$ and $z$ both have the same total weight of $0.25$, so group 1 picks one of them arbitrarily, say $y$. 
Now, Alice pays nothing (since $y$ is not one of her desired goods), while Bob pays  $\max[w(2,1),w(1,0)]=\max[0.25,0]=0.25$. Group 1's balance is now $2$, and the balance of both Alice and Bob is $-1$.

From here on, both Alice's and Bob's weights remain 0, so they do not pay nor receive anything and do not affect the group's vote; the group just picks arbitrary goods.

In the final allocation, Alice's value is 1 and Bob's value is 2, so 1-out-of-2-MMS is satisfied for both of them.
\qed
\end{example}

To establish Lemma \ref{binary-positive-general}, we prove some auxiliary lemmas about the behavior of the protocol with the additional steps.
\begin{lemma}
\label{lem:balance}
During the protocol, the balance of each agent $j$ is always $-B(r_j,s_j)$.
\end{lemma}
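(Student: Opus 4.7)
The plan is to prove the identity by induction on the number of rounds of the protocol. The base case is immediate: in the initialization step each member $j$ pays $B(r_j, s_j)$ to the group, so the balance after initialization is precisely $-B(r_j, s_j)$, matching the claim before any good is picked.

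For the inductive step I would fix an agent $j$ and split on what happens when a single good is picked. Four cases arise, distinguished by which group makes the pick and whether $j$ values the picked good. If $j$ does not value the good, then $(r_j, s_j)$ is unchanged and no payment involving $j$ occurs, so the claim is preserved trivially. If group 2 picks a good that $j$ values, then $r_j$ drops by one, $s_j$ is unchanged, and $j$ receives $w(r_j, s_j)$ from the group; the definition $w(r, s) := B(r, s) - B(r-1, s)$ then immediately gives the new balance $-B(r_j, s_j) + w(r_j, s_j) = -B(r_j - 1, s_j)$, exactly as required.

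The main obstacle is the remaining case: group 1 picks a good that $j$ values, so $(r_j, s_j) \to (r_j - 1, s_j - 1)$ and $j$ pays $\max[w(r_j, s_j), w(r_j - 1, s_j - 1)]$. Writing $r = r_j$ and $s = s_j$, I would need to establish the identity $B(r-1, s-1) - B(r, s) = \max[w(r, s), w(r-1, s-1)]$. My approach is to split on which branch of the min in the recurrence~\eqref{eq:brs} defines $B(r, s)$. If the averaging branch is active, i.e., $B(r, s) = [B(r-1, s) + B(r-1, s-1)]/2$, then a short algebraic manipulation gives $B(r-1, s-1) - B(r, s) = [B(r-1, s-1) - B(r-1, s)]/2 = w(r, s)$; the hypothesis $B(r, s) \leq B(r-2, s-1)$ of this subcase rearranges to $w(r, s) \geq w(r-1, s-1)$, so the max equals $w(r, s)$. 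If instead $B(r, s) = B(r-2, s-1)$, then $B(r-1, s-1) - B(r, s) = w(r-1, s-1)$, and the reverse inequality $[B(r-1, s) + B(r-1, s-1)]/2 \geq B(r-2, s-1)$ rearranges to $w(r-1, s-1) \geq w(r, s)$, so the max equals $w(r-1, s-1)$.

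Finally, I would briefly dispatch the boundary regimes $s \leq 0$ and $r < s$: in the former $B \equiv 1$ on the relevant indices and all pertinent $w$ values vanish, while in the latter $B \equiv 0$ and again the $w$ values vanish, so the identity reduces to $0 = 0$ in both regimes and no payment changes the balance. The whole argument is routine once the correct case split in the recurrence is identified; that case split is the only subtle point and it is precisely what the asymmetric payment $\max[w(r, s), w(r-1, s-1)]$ (rather than the simpler $w(r, s)$ used in the conference version, see footnote~\ref{ftn:conference}) is designed to absorb.
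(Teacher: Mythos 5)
Your proposal is correct and follows essentially the same route as the paper: induction over picks, the easy cases for non-desired goods and for the other group's pick, and then the key identity $B(r-1,s-1)-B(r,s)=\max[w(r,s),w(r-1,s-1)]$ resolved by a case split tied to which branch of the $\min$ in the recurrence~\eqref{eq:brs} is active (the paper merely runs this case split in the opposite direction, starting from which of the two $w$ values is larger and deducing the active branch). The only point worth noting is that your dispatch of the regime $r<s$ implicitly uses $r\geq 1$ (which holds because the agent desires the picked good); otherwise $B(r-1,s-1)$ need not vanish.
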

\begin{proof}
We prove the claim for members of group 1; the proof for group 2 is analogous.

The proof is by induction. The induction  base is handled by the initialization step.

After group 2 picks a good $g$,
the balance of each member who values $g$ at 1 increases.
The new  balance of each such member with parameters $r$ and $s$ is:
\begin{align*}
&
- B(r,s) + w(r,s)
~~=~~
- B(r-1,s) && \text{by definition of $w$,}
\end{align*}
and indeed, for each such member, $r$ drops by 1 while $s$ remains unchanged.
For each member who values $g$ at 0, both the balance and $r,s$ do not change.

After group 1 picks a good $g$, 
the balance of each member who values $g$ at 1 decreases.
The new  balance of each such member with parameters $r$ and $s$ is $- B(r,s) - \max[w(r,s), w(r-1,s-1)]$.
For each such member, both $r$ and $s$ drop by 1, so we have to prove that this new balance equals $-B(r-1,s-1)$. We consider the two cases relevant for the max operation:

\emph{Case 1:} $w(r,s)\geq w(r-1,s-1)$. This implies:
{\small
\begin{align*}
&
B(r,s)-B(r-1,s) \geq B(r-1,s-1)-B(r-2,s-1) ~~~ \text{by definition of $w$}
\\
\implies&
B(r,s)+B(r-2,s-1) \geq B(r-1,s)+B(r-1,s-1) 
\\
\implies&
B(r-2,s-1) \geq \frac{1}{2}[B(r-1,s)+B(r-1,s-1)]
\\
&
\text{~~~~since by \eqref{eq:brs}, either $B(r,s) = B(r-2,s-1)$
or $B(r,s) = \frac{1}{2}[B(r-1,s)+B(r-1,s-1)]$}
\\
\implies&
B(r,s) = \frac{1}{2}[B(r-1,s)+B(r-1,s-1)] ~~~ \text{by \eqref{eq:brs}}
\end{align*}
}
Then the new balance is:
\begin{align*}
&
- B(r,s) - w(r,s)
\\
=& 
- 2 B(r,s) + B(r-1,s) && \text{by definition of $w$}
\\
=& 
- [B(r-1,s)+B(r-1,s-1)] + B(r-1,s) && \text{substituting $B(r,s)$ from above}
\\
=& -B(r-1,s-1).
\end{align*}

\emph{Case 2:} $w(r,s)\leq w(r-1,s-1)$. This implies:
{\small
\begin{align*}
&
B(r,s)-B(r-1,s) \leq B(r-1,s-1)-B(r-2,s-1) ~~~ \text{by definition of $w$}
\\
\implies&
B(r,s)+B(r-2,s-1) \leq B(r-1,s)+B(r-1,s-1) 
\\
\implies&
B(r-2,s-1) \leq \frac{1}{2}[B(r-1,s)+B(r-1,s-1)]
\\
&
\text{~~~~since by \eqref{eq:brs}, either $B(r,s) = B(r-2,s-1)$
or $B(r,s) = \frac{1}{2}[B(r-1,s)+B(r-1,s-1)]$}
\\
\implies&
B(r,s) = B(r-2,s-1) ~~~ \text{by \eqref{eq:brs}}
\end{align*}
}
Then the new balance is:
\begin{align*}
&
- B(r,s) - w(r-1,s-1)
\\
=& 
- B(r,s) - B(r-1,s-1) + B(r-2,s-1) && \text{by definition of $w$}
\\
=& 
- B(r-2,s-1) - B(r-1,s-1) + B(r-2,s-1) && \text{substituting $B(r,s)$ from above}
\\
=& -B(r-1,s-1).
\end{align*}
This completes the proof.
\end{proof}

\begin{lemma}
\label{lem:balance-increasing}
For each group $i$, 
in each pair of consecutive turns in which group $i$ picks a good and then the other group picks a good,
the balance of group $i$ weakly increases.
\end{lemma}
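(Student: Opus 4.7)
The plan is to track the net change in group $i$'s balance over the specified pair of turns and show that the payments coming in from its own pick dominate the payments going out when the other group picks. By symmetry I may take $i=1$, and I write $(r_j,s_j)$ for the parameters of each member $j$ of group 1 at the start of the pair, with $g_1$ and $g_2$ denoting the goods picked by groups 1 and 2 respectively.

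First I would sort the members of group 1 into four classes according to their approvals: $A=\{j:v_j(g_1)=v_j(g_2)=1\}$, $B=\{j:v_j(g_1)=1,\,v_j(g_2)=0\}$, $C=\{j:v_j(g_1)=0,\,v_j(g_2)=1\}$, and a fourth class that contributes to neither transaction. Members in $A\cup B$ each pay $\max[w(r_j,s_j),w(r_j-1,s_j-1)]$ to group 1 after $g_1$ is chosen, so group 1's gain equals $\sum_{j\in A\cup B}\max[w(r_j,s_j),w(r_j-1,s_j-1)]$. When group 2 then picks $g_2$, members in $A$ have already had their parameters updated to $(r_j-1,s_j-1)$ by the first transaction while members in $C$ still carry $(r_j,s_j)$, so group 1's loss is $\sum_{j\in A}w(r_j-1,s_j-1)+\sum_{j\in C}w(r_j,s_j)$.

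Subtracting loss from gain, the $A$ contributions combine into $\sum_{j\in A}(\max[w(r_j,s_j),w(r_j-1,s_j-1)]-w(r_j-1,s_j-1))$, which is termwise nonnegative. For the $B$ contributions I would use the trivial bound $\max[w(r_j,s_j),w(r_j-1,s_j-1)]\geq w(r_j,s_j)$, reducing the remaining portion of the net change to at least $\sum_{j\in B}w(r_j,s_j)-\sum_{j\in C}w(r_j,s_j)$. The final ingredient is the argmax property of the weighted approval vote: since group 1 selected $g_1$ to maximize the total weight of approving members, $\sum_{j\in A\cup B}w(r_j,s_j)\geq\sum_{j\in A\cup C}w(r_j,s_j)$, and cancelling the common $A$ sum gives exactly $\sum_{j\in B}w(r_j,s_j)\geq\sum_{j\in C}w(r_j,s_j)$. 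Putting the three estimates together shows the net change is nonnegative.

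The only real obstacle is careful bookkeeping of when the parameters $(r_j,s_j)$ are updated during the pair of turns: for members in $A$, the second transaction must use the post-update values $(r_j-1,s_j-1)$, whereas for members in $C$ the original values still apply. Once this is handled, the inequality drops out from the nonnegativity of the $\max$ trick and the optimality of $g_1$ in the weighted vote, with no further use of the recurrence \eqref{eq:brs} for $B(r,s)$.
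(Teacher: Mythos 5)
Your proposal is correct and follows essentially the same route as the paper's proof: your classes $A$, $B$, $C$ are exactly the paper's $D_0$, $D_i$, $D_{-i}$, you make the same observation that each member desiring both goods pays at least what it later receives, and you close with the same argmax inequality from the weighted approval vote. The bookkeeping of when $(r_j,s_j)$ is updated is handled correctly and matches the paper.
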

\begin{proof}
	We calculate the change in the balance of group $i$ in a pair of turns in which group $i$ picks a good $g_i$ and then the other group picks a good $g_{-i}$.
	The change in balance is determined by the weights of three groups of agents, which we denote by:
	\begin{itemize}
		\item $D_i$: agents who desire $g_i$ and do not care about $g_{-i}$. Each agent $j$ in this group pays $\max[w(r_j,s_j),w(r_j-1,s_j-1)] \geq w(r_j,s_j)$.
		\item $D_{-i}$: agents who do not care about $g_i$ and desire $g_{-i}$. Each agent in this group receives $w(r_j,s_j)$.
		\item $D_{0}$: agents who desire both $g_i$ and $g_{-i}$. Each agent in this group first pays $\max[w(r_j,s_j), w(r_j-1,s_j-1)]$ and then receives $w(r_j-1,s_j-1)$.
	\end{itemize}
Each agent in $D_0$ pays at least as much as he receives. 	Therefore, the total change in the group balance after the two turns satisfies:
	\begin{align*}
	\Delta[Balance] \geq 
	\sum_{j\in D_i} w(r_j,s_j) - \sum_{j\in D_{-i}} w(r_j,s_j).
	\end{align*}
	Now, the group chose $g_i$ while $g_{-i}$ was still available, which means that the total weight of $g_i$ is weakly larger:
	\begin{align*}
	\sum_{j\in D_i} w(r_j,s_j) + \sum_{j\in D_0} w(r_j,s_j)
	\geq
	\sum_{j\in D_{-i}} w(r_j,s_j) + \sum_{j\in D_0} w(r_j,s_j).
	\end{align*}
	This means that $\Delta[Balance]\geq 0$, as desired.
\end{proof}

Let us call a group member \emph{happy} if the final allocation is fair according to this member's valuation function and the chosen fairness criterion. Otherwise the member is \emph{unhappy}.
\begin{lemma}
\label{lem:rwav-end}
When the RWAV protocol ends, the balance of each group equals the number of its happy members.
\end{lemma}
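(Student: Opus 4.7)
The plan is to combine Lemma \ref{lem:balance} with a simple conservation-of-money argument applied at the terminal configuration. The key observation is that every monetary transaction in the augmented protocol happens between a member and the group; no money leaves the closed system. Hence the quantity
\[
\text{(group balance)} + \sum_{j} \text{(member $j$'s balance)}
\]
is invariant, and since all balances start at $0$ it remains $0$ throughout the execution.

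First I would apply Lemma \ref{lem:balance} at termination: for every member $j$, the final balance equals $-B(r_j, s_j)$ evaluated at the \emph{final} values of $r_j$ and $s_j$. At termination, all goods have been allocated, so the final $r_j$ is $0$ for every member. The final $s_j$ is the initial $s_j$ minus the number of desired goods the member's group picked. I would then split into two cases using the boundary values built into the definition \eqref{eq:brs} of $B$:

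\begin{itemize}
\item If $j$ is happy, the member's group picked at least $s_j^{\text{initial}}$ desired goods, so the final $s_j \leq 0$, giving $B(0, s_j) = 1$ and hence final balance $-1$.
\item If $j$ is unhappy, the final $s_j \geq 1$ while $r_j = 0$, so the ``$0 < s$ and $r < s$'' branch of \eqref{eq:brs} gives $B(0, s_j) = 0$ and hence final balance $0$.
\end{itemize}

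Finally, I would plug these into conservation of money. Summing the final balances over all members of a given group yields $-(\text{number of happy members})$, so invariance forces the group's balance to equal the number of its happy members, as claimed. The analogous computation is valid for each of the two groups independently, since the initialization, the payments after the group's own pick, and the receipts after the other group's pick only ever transfer money between a group and its own members.

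I do not expect a real obstacle here, since the heavy lifting has already been done in Lemma \ref{lem:balance}; the one small point to verify carefully is that once a member becomes happy (i.e.\ $s_j$ drops to $0$ or below) the invariant continues to hold, because the weight $w(r_j, s_j) = B(r_j, 0) - B(r_j - 1, 0) = 0$ makes the member inert for all subsequent turns, freezing their balance at $-1$, in agreement with $-B(0, s_j^{\text{final}})$.
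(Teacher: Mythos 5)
Your proposal is correct and follows essentially the same route as the paper: apply Lemma \ref{lem:balance} at termination (where $r_j=0$, so $B(0,s_j)$ is $1$ for happy members and $0$ for unhappy ones by the boundary cases of \eqref{eq:brs}), then use the fact that all payments are internal transfers between a group and its own members, so the group balance is the negative of the sum of its members' balances. The only cosmetic difference is that you phrase the last step as an explicit conservation-of-money invariant and allow $s_j\leq 0$ rather than $s_j=0$ for happy members, neither of which changes the argument.
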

\begin{proof}
When the protocol ends, all agents have $r=0$. For a happy member $s = 0$, while for an unhappy member $s>0$ so $s>r$. By the recurrence relation of $B$, for a happy member $B(r,s)=B(r,0)=1$, while for an unhappy member $B(r,s)=0$.
By Lemma \ref{lem:balance} the balance of each happy member is $-1$ and the balance of each unhappy member is $0$. 
Since all payments are between the group and its members, the group balance is the negative of the sum of its members' balances, which is exactly the number of happy members.
\end{proof}

Now we tie the knots and prove the main lemma.

\begin{proof}[Proof of Lemma \ref{binary-positive-general}]
We first prove that the initial balance of group $i\in\{1,2\}$ at its first turn to pick a good is at least $h_i\cdot n_i$.
This is obvious for group 1 since, by definition of the initial payments, each member $j$ initially pays the group $B(r_j, s(r_j))$, which is at least $h_1$.
As for group 2, before its first turn it might have to pay to members who wanted the first good which was picked by group 1. 
To each member $j$, group 2 has to pay either 0 or $w(r_j,s(r_j))$. After than, the new net payment of each member is at least $B (r_j,s(r_j))-w(r_j,s(r_j)) = B(r_j-1,s(r_j))$, which is at least $h_2$.
	
By Lemma \ref{lem:balance-increasing}, the balance of each group weakly increases from its first turn to the end of the protocol, so the final balance of group $i$ is at least $h_i \cdot n_i$.
Lemma \ref{lem:rwav-end} then implies that
when RWAV ends, in each group $i$ there are at least $h_i \cdot n_i$ happy members.
\end{proof}

\begin{remark}
For simplicity, we assumed in Lemma \ref{binary-positive-general} that both groups have the same fairness criterion.
In general, however, each group $i$ can use a different function $s_i(\cdot)$ and get the corresponding guarantee regardless of the function $s_{-i}$ used by the other group. 
\end{remark}

\begin{remark}
It is possible to obtain improved guarantees if, instead of picking goods in deterministically alternating turns, we give the next turn to one of the two groups at  random. See Appendix~\ref{sec:randomized} for details.
\end{remark}

\subsection{1-of-best-$c$ fairness}
\label{sub:binary-1ofbestc}
We now use Lemma \ref{binary-positive-general} to obtain specific fairness guarantees, starting with 1-of-best-$c$ fairness.
Proposition \ref{binary-negative-1outofc} gives an upper bound of $h\leq 1-1/2^c$. The following theorem almost matches this bound.
\begin{theorem}
\label{binary-positive-1ofbestc}
For every $c\geq 2$,
RWAV can guarantee 
1-of-best-$c$ fairness 
to at least  $1-1/2^{c-1}$ of the members in both groups 
(and to at least $1-1/2^{c}$ of the members in the first group).
\end{theorem}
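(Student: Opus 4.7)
The plan is to apply Lemma \ref{binary-positive-general} with the function $s(r)$ that encodes 1-of-best-$c$ fairness, and then compute the resulting infima in closed form. For a binary agent who desires $r$ goods, the value of her $c$-th best good is $1$ if $r\geq c$ and $0$ otherwise. Hence 1-of-best-$c$ fairness corresponds to
\[
s(r) = \begin{cases} 0 & r<c, \\ 1 & r\geq c. \end{cases}
\]
So Lemma \ref{binary-positive-general} gives the guarantees $h_1 = \inf_{r\geq 1} B(r,s(r))$ and $h_2 = \inf_{r\geq 1} B(r-1,s(r))$, and the problem reduces to understanding $B(r,1)$.

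Next I would derive a closed form for $B(r,1)$. Using the recurrence \eqref{eq:brs}, note that $B(r-2,0) = 1$ and $\tfrac{1}{2}[B(r-1,1)+B(r-1,0)] = \tfrac{1}{2}[B(r-1,1)+1] \leq 1$, so the minimum is always achieved by the averaging branch. Together with the boundary values $B(0,1)=0$ and $B(1,1) = \tfrac{1}{2}[B(0,1)+B(0,0)] = \tfrac{1}{2}$, induction on $r$ yields
\[
B(r,1) = 1 - \frac{1}{2^r} \qquad \text{for all } r\geq 1.
\]
In particular $B(r,1)$ is strictly increasing in $r$, and $B(r,0)=1$ always.

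Finally, I evaluate the infima. For the first group, when $r<c$ we have $s(r)=0$ so $B(r,s(r))=1$, while for $r\geq c$ we have $B(r,s(r)) = B(r,1) = 1 - 1/2^r \geq 1 - 1/2^c$, with equality at $r=c$. Thus $h_1 = 1 - 1/2^c$. For the second group, when $r\leq c-1$ we have $s(r)=0$ so $B(r-1,s(r))=1$, while for $r\geq c$ we have $B(r-1,s(r)) = B(r-1,1) = 1 - 1/2^{r-1} \geq 1 - 1/2^{c-1}$, with equality at $r=c$. Thus $h_2 = 1 - 1/2^{c-1}$.

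There is no real obstacle here: once the correct $s(r)$ is written down and the closed form of $B(r,1)$ is established, the result is immediate from Lemma \ref{binary-positive-general}. The only place one has to be slightly careful is verifying that for $s=1$ the recurrence's minimum is always realized by the averaging branch (so that the closed form $1 - 1/2^r$ is valid), which is a one-line check.
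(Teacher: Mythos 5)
Your proposal is correct and follows essentially the same route as the paper: apply Lemma \ref{binary-positive-general} with $s(r)=1$ for $r\geq c$ and $s(r)=0$ otherwise, then show $B(r,1)=1-1/2^{r}$ by induction on $r$ (the paper phrases this as an induction establishing $B(c,1)=1-1/2^c$ directly, but the computation is the same, including the observation that the averaging branch realizes the minimum in the recurrence). The only cosmetic difference is that you make the monotonicity of $B(r,1)$ explicit when locating the infima, whereas the paper invokes it implicitly.
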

\begin{proof}
1-of-best-$c$ means that every agent with $r\geq c$ desired goods must get $s \geq 1$ goods.
By Lemma \ref{binary-positive-general}, we can guarantee this condition to at least $B(c-1,1)$ of the agents in both groups (and at least $B(c,1)$ of the agents in the first group). 
It remains to prove that $B(c,1) = 1 - 1/2^c$. The proof is by induction on $c$. For $c=0$, $B(0,1)=0$ by the boundary condition of $B$.
Now assume that $c>0$ and that the claim is true for $c-1$.
Then, by the recurrence \eqref{eq:brs} defining $B$:
\begin{align*}
B(c,1) &= 
\min\big[
(B(c-1,1)+B(c-1,0))/2,~~
B(c-2,0)
\big]
\\
&=
\min\big[
(1-1/2^{c-1} + 1)/2,~~
1
\big] \text{~~ (using the induction assumption)}
\\
&=
1-1/2^{c}.\qedhere
\end{align*}
\end{proof}
\begin{example}
\label{exm:binary-positive-1ofbestc}
Let $c=5$ and suppose that each agent in group 1 has 5 desired goods. 
By the proof of Theorem \ref{binary-positive-1ofbestc},
$w(r,1) = B(r-1)-B(r-1,1) = 1/2^r$.
When RWAV starts, all group members have $r=5$ and $s=1$ so their weight is $w(5,1)= 1/32.$ 
In the first turn, group 1 picks a good that is desired by some members; all these members now have $s=0$ so their weight changes to $0$ and they do not affect the voting from here on.
In the next turn, group 2 picks a good. If this good is desired by some members of group 1, then all these members have $r\to r-1 = 4$,
so their voting weight in the next turn becomes $1/2^4 = 1/16$---twice the voting weight of a member with $5$ desired goods. 
The process continues in this fashion, with
the weight of a member who ``loses'' a desired good (since it is taken by the other group)  multiplied by 2. Thus the interests of poorer agents are prioritized, in the spirit of the egalitarian philosophy.
\qed
\end{example}

Asymptotically the lower bound of Theorem \ref{binary-positive-1ofbestc}
almost matches the upper bound, but for small values of $c$ there is a gap. In particular, for $c=2$ the upper bound is $h \leq 2/3$ (Proposition \ref{binary-negative-2/3}) and the lower bound is $h\geq 1/2$. We conjecture that the correct value of $h$ is 2/3:
\begin{conjecture}
\label{conj:binary-positive-2/3}
$2/3$-democratic 1-of-best-2 fairness 
is always attainable.
\end{conjecture}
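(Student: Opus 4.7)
The plan is to design a new protocol, tailored to the $c = 2$ case, that beats the $1/2$ guarantee of Theorem \ref{binary-positive-1ofbestc} and matches the $2/3$ upper bound of Proposition \ref{binary-negative-2/3}. The reason RWAV seems to fall short here is that the weights $B(r,s)$ depend only on the scalar statistics $(r,s)$ of each agent and ignore the structure of the desire sets, which for $r = 2$ agents is quite restrictive.

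I would separate each group's agents into two classes: the \emph{hard} agents who desire exactly two goods, and the \emph{easy} agents who desire at least three (agents with $r \leq 1$ are vacuously already fair). Easy agents survive a uniformly random bipartition of $G$ with probability at least $1 - 1/2^3 = 7/8$, so I expect they can be satisfied by any reasonable protocol with plenty of slack. The real work lies in the hard agents, whom I would encode as a multigraph $H_i$ on vertex set $G$: each hard agent contributes an edge joining her two desired goods. Under this encoding, an allocation $(G_1,G_2)$ corresponds to a vertex $2$-coloring of $G$, and a hard agent of group $i$ is unhappy iff her edge lies entirely in the complement of $G_i$. The task thus reduces to finding a bipartition of $G$ such that at most $|E(H_1)|/3$ edges of $H_1$ are monochromatic of color $2$, at most $|E(H_2)|/3$ edges of $H_2$ are monochromatic of color $1$, and simultaneously most easy agents in each group receive at least one desired good.

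Proving the graph-theoretic lemma for the pure hard-agent case is the main obstacle. A uniformly random bipartition makes every hard agent unhappy with probability exactly $1/4$, so the expected unhappy fraction in each group is at most $1/4$; unfortunately, Markov's inequality applied to the sum of the two normalized unhappy counts only guarantees a bipartition where this sum is at most $1/2$, rather than each summand individually under $1/3$. I would therefore attempt a derandomization via the method of conditional expectations applied to a smoothed potential such as $\exp(\lambda X_1) + \exp(\lambda X_2)$ with a carefully tuned $\lambda$, combined with a structural analysis of the extremal instance: in the tight example of Proposition \ref{binary-negative-2/3}, $H_1 = H_2 = K_3$, so a case split on the disjoint triangles shared by $H_1$ and $H_2$ should identify and resolve the bottleneck.

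Finally, the graph lemma and an RWAV-style protection of the easy agents need to be fused into a single protocol. The delicate step will be ensuring that the bipartition produced by the graph argument is also compatible with satisfying a large fraction of the easy agents, since picking a good that helps many easy agents might convert a ``good'' bipartition for the hard ones into a ``bad'' one. Because each class of agents has comfortable slack on its own, I would expect that a sufficiently flexible protocol---for instance, one that runs a joint potential-based voting procedure with separate weights for hard and easy agents---can exploit this slack to meet both objectives simultaneously.
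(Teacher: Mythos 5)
This statement is a \emph{conjecture} in the paper, not a theorem: the authors do not prove it, and instead offer only partial support, namely a $3/5$ guarantee for general instances (Theorem~\ref{binary-positive-1ofbestc-improved}, via the enhanced RWAV protocol) and the full $2/3$ guarantee only in the special case of two identical groups (Theorem~\ref{binary-positive-1ofbestc-identical}, via a local-search argument on the counts $p_y(g),q_y(g)$). So there is no proof in the paper to compare yours against, and the question is whether your proposal actually closes the open problem. It does not.

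The gap is that your central graph-theoretic lemma is never established, and it \emph{is} the conjecture. Your reduction of the hard ($r=2$) agents to two multigraphs $H_1,H_2$ on vertex set $G$, with an allocation being a $2$-coloring and an unhappy agent being an edge monochromatic in the other group's color, is exactly the equivalent reformulation the paper states immediately after the conjecture; it repackages the problem but does not advance it. Your own accounting shows where the argument stops: a uniformly random bipartition makes each edge bad with probability $1/4$, so Markov on the sum of the two normalized bad-edge counts only yields a coloring with \emph{sum} at most $1/2$, which is consistent with one group having $1/2$ of its agents unhappy and the other none. The proposed repair---a derandomization of $\exp(\lambda X_1)+\exp(\lambda X_2)$ by conditional expectations---cannot work as stated, because the target bound of $1/3$ unhappy per group is attained with equality on the $K_3$ instance of Proposition~\ref{binary-negative-2/3} (three edges per group, and every bipartition leaves exactly one bad edge in some group). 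There is zero slack, so any exponential-moment or concentration argument, which necessarily loses a multiplicative or additive factor, cannot reach the exact constant $1/3$; moreover $|E(H_i)|$ can be as small as $3$, so no concentration is available at all. A ``structural analysis of the extremal instance'' is named but not carried out. Finally, the claim that the easy agents ($r\geq 3$) can be absorbed ``with plenty of slack'' is unsupported: the paper's own best general bound, even after the enhancement that handles a popular good separately, is $3/5$, and the interaction between satisfying easy agents and preserving a good bipartition for the hard agents is precisely the kind of coupling that has no analysis here. In short, the proposal is a research plan whose key step is the open problem itself.
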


An equivalent formulation of the conjecture is that if every agent desires exactly two goods, then there exists an allocation such that at least two-thirds of the agents in each group receive at least one desirable good.

We support the conjecture with two theorems. The first improves the lower bound from $1/2$ to $3/5$ using an enhancement to the RWAV protocol.  The second improves the lower bound to $2/3$ for the special case of identical groups.

\begin{theorem}
\label{binary-positive-1ofbestc-improved}
For every $c\geq 2$, there is a protocol that guarantees
1-of-best-$c$ fairness to at least $\frac{2^c-1}{2^c+1}$ of the members in each group.
In particular, it guarantees $3/5$-democratic 1-of-best-$2$ fairness%
.
\end{theorem}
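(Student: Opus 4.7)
The plan is to modify the standard RWAV protocol from Theorem \ref{binary-positive-1ofbestc} so as to balance the first-mover asymmetry between the two groups. In that theorem the first-picking group is guaranteed a happy fraction of $1-1/2^c$ while the second-picking group only gets $1-1/2^{c-1}$, and the new target $\frac{2^c-1}{2^c+1}$ lies strictly between these two values. So the enhancement must slightly weaken the first-picker's guarantee in exchange for strengthening the second-picker's guarantee, and do so uniformly across all instances.

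As a first sanity check, observe that the randomized variant in which, with probability $1/2$, the order of the two groups is swapped yields expected happy-fractions of $1-3/2^{c+1}$ for each group, and one can check $1-3/2^{c+1} > \frac{2^c-1}{2^c+1}$ for every $c\geq 2$. This shows that ``on average'' the target is attainable, but it does not immediately give a deterministic guarantee, because either realization of the coin still leaves one of the groups with only the standard second-picker bound $1-1/2^{c-1} < \frac{2^c-1}{2^c+1}$. So a genuine derandomization is needed.

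To derandomize, I would introduce a preliminary step before the ordinary RWAV rounds. In this preliminary step, the first good is allocated using a joint weighted vote with modified weights, designed so that the fiat-money initial balances of both groups end up at least $\frac{2^c-1}{2^c+1}\cdot n_i$ after the step. From that point on, standard RWAV rounds alternate as before, and Lemma \ref{lem:balance-increasing} guarantees that the balance of each group weakly increases across every subsequent pair of turns, so that by Lemma \ref{lem:rwav-end} the number of happy members in each group is at least its final balance, namely at least $\frac{2^c-1}{2^c+1}\cdot n_i$.

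The main technical obstacle is making the preliminary step compatible with the balance invariant of Lemma \ref{lem:balance}. Because the target $\frac{2^c-1}{2^c+1}$ does not arise from the tidy recurrence $B(r,1)=(1+B(r-1,1))/2$ that governs the original potential $B(r,1)=1-1/2^r$, one cannot simply plug in a new boundary value. Instead, the modified weight function used in the preliminary step, together with the associated payments between members and group, must be chosen so that each member's running balance after the first move agrees with a well-defined modified potential $B'(r,1)$ satisfying $B'(c-1,1)=B'(c,1)=\frac{2^c-1}{2^c+1}$; one must then verify identities analogous to Case~1 and Case~2 in the proof of Lemma \ref{lem:balance} to confirm that the accounting remains consistent from the preliminary step through the final RWAV round.
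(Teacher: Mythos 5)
Your proposal correctly diagnoses where the slack lies (the second group's loss after the first group's opening pick) and your arithmetic for the randomized warm-up is right, but the deterministic construction you sketch is left unresolved at exactly the point where the real idea is needed, and I do not believe it can be completed in the form you describe. The hard instance is one where (nearly) every member of \emph{both} groups desires the same good $g$: whichever group fails to receive $g$ has all of its members lose a desired good, so its balance after the first pick is at most $n_i\,B(c-1,1) = n_i\bigl(1-1/2^{c-1}\bigr) < n_i\cdot\frac{2^c-1}{2^c+1}$, and no reweighting of the vote on the first good can avoid this, since the problem is not \emph{which} good is picked but the sheer number of members who lose it. Your proposed modified potential with $B'(c-1,1)=B'(c,1)$ is also internally inconsistent with the accounting of Lemma \ref{lem:balance}: Case~1 there forces $B'(c,1)=\frac{1}{2}[B'(c-1,1)+B'(c-1,0)]=\frac{1}{2}[B'(c,1)+1]$, i.e.\ $B'(c,1)=1$, a contradiction.

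The paper's proof escapes this with a case distinction rather than a new weight function. If some good $g$ is desired by at least a $\frac{2^c-1}{2^c+1}$ fraction of one group, give $g$ to that group and \emph{all remaining goods} to the other group; the first group then has the required fraction of happy members, and every member of the second group still holds at least $c-1\geq 1$ desired goods, so \emph{all} of them are happy. Otherwise, run unmodified RWAV: group 1's initial balance is $(1-1/2^c)n_1 > \frac{2^c-1}{2^c+1}n_1$, and group 2's balance before its first turn drops by at most $w(c,1)=1/2^c$ per affected member, where by assumption fewer than $\frac{2^c-1}{2^c+1}n_2$ members are affected, giving a balance above $(1-1/2^c)n_2 - \frac{1}{2^c}\cdot\frac{2^c-1}{2^c+1}n_2 = \frac{2^c-1}{2^c+1}n_2$. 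The observation that the ``everyone wants the same good'' case can be dispatched by handing the entire remainder to the losing group is the missing ingredient in your plan; without it, no preliminary joint vote consistent with the balance invariant will reach the stated bound.
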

\begin{proof}
For 1-of-best-$c$ fairness, we can ignore all agents who want less than $c$ goods (since they may be given a value of 0), and assume that all agents want at least $c$ goods. Then the following protocol can be used.
\begin{framed}
\textbf{Enhanced RWAV protocol:}
\begin{itemize}
\item If, in one of the groups, at least $\frac{2^c-1}{2^c+1}$ of the agents desire the same good $g$, then give $g$ to that group and give all other goods to the other group.
\item Otherwise, run RWAV as usual.
\end{itemize}
\end{framed}

As in the proof of Lemma~\ref{binary-positive-general}, we have to prove that, for each group $i$, its balance when it first picks a good is at least $\frac{2^c-1}{2^c+1}\cdot n_i$.

We have $r\geq c$ and $s=1$ for all agents, so the initial payment of each agent is at least $B(c,1) = 1 - 1 / 2^c$.
Therefore the initial balance of group 1 is $\frac{2^c-1}{2^c}\cdot n_1 > \frac{2^c-1}{2^c+1}\cdot n_1$. 

As for group 2, before its first turn it might have to pay to members who ``lose'' a desired good to group 1. There are less than $\frac{2^c-1}{2^c+1}\cdot n_2$ such members, and the weight of each is at most $w(c,1)=1/2^c$. Therefore the initial balance of group 2 is above $(1-1/2^c) n_2 - (1/2^c)\cdot \frac{2^c-1}{2^c+1}\cdot n_2 = \frac{2^c-1}{2^c+1}\cdot n_2$.
\end{proof}

\begin{example}
\label{exm:binary-positive-1ofbestc-improved}
Consider an instance with five goods $\{v,w,x,y,z\}$ and two groups with 10 members in each family (for brevity, a member is represented by a concatenation of the goods he desires):
\begin{itemize}
\item Group 1: $vw,vx,vy,vz,wx,wy,wz,xy,xz,yz$;
\item Group 2: $vw,vw,vw,vx,vx,vx,vy,vy,vz,vz$.
\end{itemize}
Consider first RWAV without the enhancement. 
In group 1's first turn, all goods have the same total weight. 
For concreteness, we assume that ties are broken in the order $v>w>x>y>z$, so group 1 picks $v$.
In group 2's turn, $w$ and $x$ are tied, so it picks $w$. 
In group 1's next turn, all goods are again tied, so it picks $x$. Then group 2 picks $y$ and group 1 picks $z$.
In the final allocation, only 5 out of 10 members in group 2 ($vw,vw,vw,vy,vy$) are happy; this exactly matches the lower bound guaranteed by Theorem \ref{binary-positive-1ofbestc}.

In contrast, the enhanced RWAV gives $v$ to group 2 and all other goods to group 1. Then all members in both groups are happy.
\qed
\end{example}

Two groups are said to be \emph{identical} if 
there exists a bijection mapping each agent in one group to an agent in the other group with an identical utility function. We show that if the two groups are identical, the bound $2/3$ can be achieved.\footnote{We are grateful to Katie Edwards for the proof idea: https://math.stackexchange.com/a/2412319/29780.} This exactly matches the upper bound in Proposition~\ref{binary-negative-2/3}, which also applies to identical groups.
\begin{theorem}
\label{binary-positive-1ofbestc-identical}
For two identical groups,
there is an efficient protocol that guarantees 1-of-best-$2$ fairness to at least $2/3$ of the members in each group.
\end{theorem}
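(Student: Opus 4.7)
The plan is to reduce to a maximum-cut problem on an auxiliary multigraph. Since $1$-of-best-$2$ fairness is automatic for any agent who desires at most one good, I focus on the $n'\le n$ agents with $r_j\ge 2$, whose desired sets $S_j$ all have size at least $2$. Because the two groups are identical, the agents in group $i$ who are unhappy are exactly those $j$ with $S_j\subseteq G_{3-i}$, so $2/3$-democratic fairness amounts to finding a bipartition $(G_1,G_2)$ of the goods such that at most $n/3$ of the $S_j$ lie entirely in $G_1$ and at most $n/3$ lie entirely in $G_2$.

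For each agent $j$ with $r_j\ge 2$, arbitrarily pick two desired goods $g,g'\in S_j$ and form an edge $e_j=\{g,g'\}$; these edges define a multigraph $H$ on the ground set $G$ with exactly $n'$ edges. Since $e_j\subseteq S_j$, the implication $S_j\subseteq G_i\Rightarrow e_j\subseteq G_i$ shows that the number of monochromatic sets of color $i$ is at most the number of $H$-edges internal to $G_i$. Hence it suffices to exhibit a bipartition $(V_1,V_2)$ of $V(H)$ with at most $n'/3\le n/3$ internal edges on each side. The protocol I propose is a maximum-cut local search on $H$: start from any bipartition and repeatedly flip a single vertex across the partition whenever this strictly enlarges the cut. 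This procedure terminates in polynomial time because the cut size is a bounded nonnegative integer that strictly increases at each step.

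The key lemma is that any locally maximum cut has the desired judicious property. Let $(V_1,V_2)$ be such a cut and write $a=e_H(V_1)$, $b=e_H(V_2)$, and $c=n'-a-b$ for the cut size. For each $v\in V_1$, local optimality rules out $\deg_{V_1}(v)>\deg_{V_2}(v)$ (else flipping $v$ would strictly enlarge the cut), so summing over $V_1$ yields $2a=\sum_{v\in V_1}\deg_{V_1}(v)\le\sum_{v\in V_1}\deg_{V_2}(v)=c$; symmetrically $2b\le c$. Combined with the trivial bound $\max(a,b)\le a+b=n'-c$, we obtain $\max(a,b)\le\min(c/2,\,n'-c)$. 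Elementary analysis of the right-hand side as a function of $c\in[0,n']$ shows that its maximum value is $n'/3$, attained at $c=2n'/3$. Thus $\max(a,b)\le n'/3$, matching the tight upper bound of Proposition~\ref{binary-negative-2/3}. The main subtlety is recognizing that combining the max-cut degree inequality with the trivial bound is what pushes the guarantee from the pure max-cut bound of $\max(a,b)\le c/2\le n'/2$ down to the optimal $n'/3$.
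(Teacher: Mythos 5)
Your proof is correct and is essentially the paper's own argument in max-cut clothing: after the reduction to two-element desired sets, your flip condition $\deg_{V_1}(g)>\deg_{V_2}(g)$ is exactly the paper's swap condition $q_0(g)>p_1(g)$, your cut size $c$ is (half) the paper's potential counting utility-1 agents across both groups, and your concluding bound $2a\le c$ together with $a+b+c=n'$ is the paper's double-counting of $\sum_{g\in G_1}q_0(g)\le\sum_{g\in G_1}q_1(g)$. The judicious-partition framing is a clean repackaging, but the algorithm, the strictly increasing integer potential, and the final degree-sum count coincide with the paper's proof.
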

\begin{proof}
Assume without loss of generality that each agent desires exactly $2$ goods.
We use the following notation. Given an allocation $(G_1,G_2)$, for each good $g\in G$ and integer $y\in\{0,1,2\}$, define $p_y(g)$ (resp. $q_y(g)$) as the number of agents in group 1 (resp. 2) who want $g$ and receive utility $y$ from the allocation. Since the groups are identical, we have that for every $g\in G$, $p_0(g)=q_2(g)$, $p_1(g)=q_1(g)$, and $p_2(g)=q_0(g)$.

The protocol proceeds as follows. 
Start with an arbitrary allocation $(G_1,G_2)$.
If there is a good $g\in G_1$ for which $q_0(g) > p_1(g)$, move $g$ to $G_2$. 
Similarly, if there is a good $g\in G_2$ for which $p_0(g) > q_1(g)$, move $g$ to $G_1$. Stop when no good satisfies either of the conditions.

The number of utility-1 agents (across both groups) strictly increases in each iteration: when a good is 
taken from $G_1$, 
the number of utility-1 agents in it increases by $p_2(g)-p_1(g)$, and when it is given to $G_2$, the number of utility-1 agents in it increases by $q_0(g)-q_1(g)$, so the net increase is $p_2(g)+q_0(g) -p_1(g)-q_1(g) = 2 q_0(g) - 2 p_1(g) \geq 2$. Similarly, when a good is moved from $G_2$ to $G_1$ the number of utility-1 agents increases by $2 p_0(g) - 2 q_1(g) \geq 2$. Hence the algorithm stops after at most $(n_1+n_2)/2$ iterations.

When the algorithm stops, for all $g\in G_1$ we have $q_0(g)\leq q_1(g)$. Since each agent with utility 1 is counted once in $q_1(g)$
and each agent with utility 0 is counted twice in $q_0(g)$, this implies
that the number of utility-1 agents in group 2 is at least twice the number of utility-0 agents in group 2, so at most $1/3$ the agents in group 2 have utility 0. By similar considerations, at most $1/3$ the agents in group 1 have utility 0.
\end{proof}
\begin{example}
\label{exm:binary-positive-1ofbestc-identical}
Consider an instance with five goods $\{v,w,x,y,z\}$ and two identical groups with 10 members: $vw,vw,vw,vx,vx,vx,vy,vy,vz,vz$ 
(like group 2 in Example \ref{exm:binary-positive-1ofbestc-improved}).
Suppose the algorithm of Theorem \ref{binary-positive-1ofbestc-identical} starts by giving all goods to group 2, i.e., $G_1=\emptyset, G_2=\{v,w,x,y,z\}$.
The algorithm can proceed in many ways depending on the order in which the goods are considered.
For example, suppose $y$ is considered first. We have $p_0(y)=2$ while $q_1(y)=0$, so $y$ is moved to group 1.
Suppose $v$ is considered next. 
Now $p_0(v)=8$ while $q_1(v)=2$, so $v$ is moved to group 1.
The allocation becomes $G_1=\{v,y\}, G_2=\{w,x,z\}$.
Now $q_0(y)=2$ while $p_1(y)=0$, so $y$ is moved again to group 2. 
The final allocation is 
$G_1=\{v\}, G_2=\{w,x,y,z\}$.%
\footnote{
An implementation of this algorithm can be found at {https://github.com/erelsgl/family-fair-allocation}. It can be used to produce more examples.
}
\qed
\end{example}

\subsection{1-out-of-$c$ MMS fairness}
We now proceed to a stronger fairness requirement.
\begin{theorem}
\label{binary-positive-1-of-c}
For every $c\geq 3$,
RWAV can guarantee 
1-out-of-$c$ MMS-fairness 
to at least $1-1/2^{c-1}$ of the members in both groups.
\end{theorem}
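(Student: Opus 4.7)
By Lemma~\ref{binary-positive-general} applied to the integer function $s(r) = \lfloor r/c \rfloor$ that encodes 1-out-of-$c$ MMS fairness for binary agents, and using the monotonicity $B(r,s) \geq B(r-1,s)$ of Lemma~\ref{B-increasing} (so the group-2 bound is the tighter of the two), it suffices to prove
\[
\inf_{r \geq 1}\ B\bigl(r-1,\ \lfloor r/c \rfloor\bigr)\ \geq\ 1 - \frac{1}{2^{c-1}}.
\]

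I would partition the values of $r$ into blocks on which $k := \lfloor r/c \rfloor$ is constant. For $1 \leq r \leq c-1$ one has $k = 0$ and $B(r-1,0) = 1$ by the boundary condition, so the bound holds trivially. For $r \geq c$, the block $r \in \{kc, kc+1, \ldots, (k+1)c-1\}$ has $k \geq 1$; within this block, Lemma~\ref{B-increasing} says $B(r-1,k)$ is nondecreasing in $r$, so the block-minimum is attained at the left endpoint $r = kc$. The task therefore collapses to showing
\[
B(kc - 1,\ k)\ \geq\ 1 - \frac{1}{2^{c-1}} \quad \text{for every integer } k \geq 1.
\]
The case $k = 1$ is exactly the computation $B(c-1, 1) = 1 - 1/2^{c-1}$ already carried out inside the proof of Theorem~\ref{binary-positive-1ofbestc}.

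For $k \geq 2$, my plan is a double induction: an outer induction on $s$ with the invariant ``$B(r,s) \geq 1 - 1/2^{c-1}$ for every $r \geq sc - 1$,'' and an inner induction on $r$ within each stratum. The $B(r-2, s-1)$ branch of the min in recurrence~\eqref{eq:brs} is handled at once by the outer hypothesis combined with monotonicity, since $r - 2 \geq sc - 3 \geq (s-1)c - 1$ whenever $c \geq 2$. The delicate branch, and the main obstacle, is the averaging term $\frac{1}{2}[B(r-1, s) + B(r-1, s-1)]$ at the left endpoint $r = sc - 1$: the summand $B(sc - 2, s)$ lies strictly below the target (e.g.\ $B(4,2) = 5/8 < 3/4$ when $c = 3$), and one must show that the surplus in $B(sc-2, s-1)$ above $1 - 1/2^{c-1}$---coming from the gap between $sc-2$ and the boundary $(s-1)c - 1$---is enough to compensate when the two terms are averaged. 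I expect to close this step either by invoking the closed form for $B(r,s)$ recorded in Appendix~\ref{sec:properties-of-B}, or by strengthening the invariant to a quantitative lower bound on $B(r,s) - (1 - 1/2^{c-1})$ that grows as $r$ moves past the boundary $sc - 1$, so that the deficit at $r = sc - 2$ is dominated by the accumulated surplus at $r = sc - 2$ in the $s-1$ row.
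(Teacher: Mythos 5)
Your reduction is sound and matches the paper's: Lemma~\ref{binary-positive-general} with $s(r)=\lfloor r/c\rfloor$, together with the monotonicity of $B$ in $r$ (Lemma~\ref{B-increasing}), collapses the claim to the single-parameter inequality $B(cs-1,s)\geq 1-1/2^{c-1}$ for all $s\geq 1$, and your $k=1$ base case $B(c-1,1)=1-1/2^{c-1}$ is the same computation as in Theorem~\ref{binary-positive-1ofbestc}. You have also correctly located the obstacle: a naive induction on the recurrence~\eqref{eq:brs} breaks at the left endpoint $r=sc-1$ because $B(sc-2,s)$ dips below the target (your example $B(4,2)=5/8<3/4$ is exactly right), so one must quantify the surplus of $B(sc-2,s-1)$ over the target and show it dominates the deficit after averaging.

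The problem is that this last step is the entire substance of the theorem, and you have not carried it out --- you only name two candidate strategies (``invoke the closed form'' or ``strengthen the invariant'') without executing either. In the paper this step occupies all of Appendix~\ref{sec:properties-of-B}: one first proves (Lemma~\ref{large-r}) that for $r\geq 2s-1$ the $\min$ in~\eqref{eq:brs} is always attained by the averaging branch, which yields the simplified recurrence~\eqref{eq:brs-simplified} and hence the closed form $B(r,s)=\frac{1}{2^r}\sum_{i=s}^{r-s+1}\binom{r}{i}$ (Lemma~\ref{B-closed-form}); the target inequality then becomes $\sum_{i=0}^{s-1}\binom{cs-1}{i}+\sum_{i=0}^{s-2}\binom{cs-1}{i}\leq 2^{cs-c}$, which is proved by induction on $c$ with a nontrivial base case $c=3$ resting on the estimate $\binom{3s-1}{s-1}\frac{3s}{s+2}\leq 2^{3s-3}$ and a partial-binomial-sum bound (Lemmas~\ref{3s-1}, \ref{3s-1*}, \ref{cs-1}, \ref{binary-positive-1ofc}). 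None of this is routine, and your alternative ``quantitative invariant'' route is offered without a candidate invariant, so there is no evidence it closes. As it stands the proposal is a correct framing plus an accurate diagnosis of where the difficulty lies, but the proof of the key inequality for $s\geq 2$ is missing.
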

\begin{proof}
1-out-of-$c$ MMS-fairness means that, for every $d\range{0}{c-1}$ and every $s\geq 1$, an agent with $r = c s + d$ desirable goods should receive at least $s$ such goods. 
Without loss of generality, we assume that all agents have only $c s$ desirable goods (if an agent has $c s + d$ desirable goods, we simply ignore $d$ of them). 
By Lemma \ref{binary-positive-general}, we can guarantee this fairness condition to at least $\inf_{s\geq 1} B(c s-1,s)$ of the agents in both groups. Hence, to prove the theorem it is sufficient to prove that, for every $c\geq 3$ and $s\geq 1$:
\begin{align*}
B(c s - 1, s) \geq 1-1/2^{c-1}.
\end{align*}
The proof requires various technical lemmas on binomial coefficients. These can be found in Appendix~\ref{sec:properties-of-B}. The claim itself is proved as Lemma \ref{binary-positive-1ofc}.\footnote{
This theorem appeared in the conference version of this paper \citep{ijcai2018democratic}.
Due to the error explained in Footnote~\ref{ftn:conference},
the proof there was much shorter. Happily, the theorem still holds after correcting the error, albeit with a much longer proof.
}
\end{proof}

While our main focus is on democratic fairness for any number of agents, we note that our results imply unanimous fairness when the number of agents is bounded.
\begin{corollary}
	\label{binary-unanimous}
	For two groups each containing at most $n$ agents with binary valuations, there exists a unanimous $1$-out-of-$(\lceil \log_2 (n+1) \rceil + 1)$ MMS-fair allocation.
\end{corollary}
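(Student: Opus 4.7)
The plan is to bootstrap from Theorem \ref{binary-positive-1-of-c} by observing that a sufficiently small fraction of unsatisfied agents in a bounded group must round down to zero. Concretely, set $c := \lceil \log_2 (n+1) \rceil + 1$, so that $2^{c-1} \geq n+1$. Assume first that $n \geq 2$, so that $c \geq 3$ and Theorem \ref{binary-positive-1-of-c} applies.

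Running the RWAV protocol with the fairness criterion $s(r) = \lfloor r/c \rfloor$ produces an allocation that is 1-out-of-$c$ MMS-fair for at least a fraction $1 - 1/2^{c-1}$ of the agents in each group. Hence the number of agents in a group for which 1-out-of-$c$ MMS-fairness fails is at most
\begin{equation*}
    n_i \cdot \frac{1}{2^{c-1}} \;\leq\; \frac{n}{2^{c-1}} \;\leq\; \frac{n}{n+1} \;<\; 1.
\end{equation*}
Since this count is a nonnegative integer strictly less than $1$, it must equal $0$, so the allocation is 1-out-of-$c$ MMS-fair for \emph{every} agent in both groups, i.e., it is unanimously fair.

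The only edge case is $n = 1$, where $c = 2$ and Theorem \ref{binary-positive-1-of-c} does not directly apply; but then each group contains a single binary agent, and a unanimous 1-out-of-$2$ MMS-fair (equivalently, MMS-fair) allocation exists trivially by giving each singleton agent at least $\lfloor r/2 \rfloor$ of her desired goods via a round-robin pick. The main conceptual point, and the only non-routine step, is noticing that the $1/2^{c-1}$ exponential decay in Theorem \ref{binary-positive-1-of-c} lets the tolerable fraction of unhappy agents outrun $n$ with only a logarithmic increase in $c$; no extra protocol design is needed.
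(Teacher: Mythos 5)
Your proof is correct and follows essentially the same route as the paper: choose $c=\lceil \log_2(n+1)\rceil+1$ so that $2^{c-1}\geq n+1$, invoke Theorem~\ref{binary-positive-1-of-c}, and observe that the permitted fraction $1/2^{c-1}$ of unhappy agents corresponds to fewer than one agent in a group of size at most $n$, hence to none. Your explicit treatment of the $n=1$ case (where $c=2$ falls outside the hypothesis of Theorem~\ref{binary-positive-1-of-c}) is a small but genuine point of care that the paper's own proof passes over.
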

\begin{proof}
	Let $c=\lceil \log_2 (n+1) \rceil + 1$, so $n\leq 2^{c-1}-1$. By Theorem \ref{binary-positive-1-of-c}, we can attain $(1-1/2^{c-1})$-democratic $1$-out-of-$c$ MMS-fairness. However, $(1-1/2^{c-1})$-democratic fairness and unanimous fairness are equivalent when the number of agents in each group is at most $2^{c-1}-1$. 
\end{proof}

Proposition \ref{binary-negative-1outofc} implies that the $O(\log n)$ rate cannot be improved.

\subsection{MMS and EF1}
Since Theorem \ref{binary-positive-1-of-c} is not valid for $c=2$, it does not say anything about MMS fairness (or about EF1, which is equivalent to MMS for binary agents). In fact, currently RWAV gives no meaningful lower bounds for MMS fairness. This is because the sequence $B(2 s-1, s)$ is decreasing and approaches 0 (see Table \ref{tab:B}). 
However, we can prove an existential lower bound that matches the upper bound of Proposition \ref{binary-negative-EFc}.\footnote{
In fact, in Section \ref{sec:2groups-additive} we present a protocol for two groups that guarantees $1/2$-democratic EF1 even for monotonic agents (Theorem~\ref{monotonic-positive-1/2}). 
However, Theorem \ref{binary-positive-ef1} is interesting since it shows that this guarantee can also be attained by a round-robin protocol.
}$^{,}$\footnote{We are grateful to J. Kreft for the proof idea: https://mathoverflow.net/a/307677/34461.}
\begin{theorem}
\label{binary-positive-ef1}
When a round-robin protocol is used for dividing goods between two groups: 

(a) The first group can always pick goods in such a way that the resulting allocation is envy-free for at least $1/2$ of its members; 

(b) The second group can always pick 
goods in such a way that the resulting allocation is EF1 (and MMS-fair) to at least $1/2$ of its members.
\end{theorem}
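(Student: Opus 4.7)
My plan is to analyze a natural greedy strategy for each group. For part (a), at each of group 1's turns it picks a good $g$ that maximizes the count $|\{j \in A_1 : g \in D_j,\ |G_1^{(t-1)}\cap D_j| \leq |G_2^{(t-1)}\cap D_j|\}|$, where $D_j$ is $j$'s set of desired goods and $G_i^{(t-1)}$ is the bundle accumulated by group $i$ so far; in words, group 1 prioritizes goods desired by members who are not currently envy-free. For part (b), group 2 uses the analogous strategy with the EF1 condition, picking at each of its turns a good $g$ maximizing $|\{j \in A_2 : g \in D_j,\ |G_1^{(t-1)}\cap D_j| \geq |G_2^{(t-1)}\cap D_j| + 2\}|$, i.e., prioritizing goods desired by members who are not yet EF1-satisfied. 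Note that for binary agents with two groups, EF1 coincides with MMS-fairness (Lemma~\ref{EF1-is-PROP} and Lemma~\ref{PROP-is-MMS}(c)), so the EF1 guarantee in part (b) immediately yields MMS-fairness.

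I would analyze the protocol by pairing each of group $i$'s turns with the opponent's subsequent pick into a round $(g_t, h_t)$, and tracking for each member $j$ the signed gap $\Delta_j := |G_1\cap D_j| - |G_2\cap D_j|$, which changes by at most $\pm 1$ per round. The key local inequality is that since the opponent's good was available when the greedy pick was made, the number of currently-unsatisfied members (of group $i$) desiring the greedy pick is at least the number desiring the opponent's pick. The plan is to convert this local dominance into the global $n_i/2$ bound by an amortized correspondence: to each member who finishes unsatisfied, match a distinct member of the same group who finishes strictly above the fairness threshold, using the ``credit'' accumulated from the per-round inequalities.

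The main obstacle I expect is that the correspondence can break down when members oscillate around the threshold during the protocol, so a single round's local dominance does not translate cleanly into a cumulative bound. If the direct potential argument stalls, an alternative attack I would keep in reserve is probabilistic: by symmetry of the hypergeometric distribution, a uniformly random bundle $G_1$ of size $\lceil m/2\rceil$ satisfies every fixed member of group 1 with probability at least $1/2$ (and the analogous statement for EF1 holds for a random $G_2$ of size $\lfloor m/2\rfloor$), so in expectation at least $n_i/2$ members are satisfied. Derandomizing via the method of conditional expectations, implemented pick by pick in the round-robin order, would then deliver a deterministic strategy for the corresponding group, recovering the theorem.
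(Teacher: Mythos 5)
There is a genuine gap: neither of your two routes is carried out, and both face an obstacle that the paper itself documents. Your primary plan (greedy pick plus an amortized correspondence) is essentially the weighted-approval/potential approach that the paper analyzes at length via the function $B(r,s)$; for the envy-freeness criterion, where a member with $2s-1$ desired goods needs $s$ of them, that analysis yields only $B(2s-1,s)=\binom{2s-1}{s}/2^{2s-1}\to 0$, and the paper explicitly remarks that RWAV ``gives no meaningful lower bounds for MMS fairness'' for exactly this reason. You correctly anticipate that the local per-round dominance does not accumulate, but you do not resolve it. Your backup plan fails at the derandomization step: the hypergeometric symmetry does show a random balanced bundle satisfies each member with probability at least $1/2$, but the method of conditional expectations only lets \emph{you} preserve the conditional expectation on \emph{your} picks; the opponent picks adversarially and can strictly decrease it. The relevant conditional-expectation potential is precisely the paper's $C(r,s)$ (Appendix~\ref{sec:randomized}), and making it robust to the adversary's intervening pick is exactly what degrades $C$ to $B$ via the $\min$ in recurrence \eqref{eq:brs} --- losing the $1/2$.

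The paper's actual proof sidesteps all of this with a short non-constructive strategy-stealing argument. Assume WLOG each member of group 1 desires an odd number $2s_j-1$ of goods (ignore one if even); then exactly one of the two groups ends up with at least $s_j$ of $j$'s desired goods. If no first-player strategy secured $s_j$ desired goods for at least half of group 1's members, the second player would have a strategy that ``steals'' $s_j$ of $j$'s goods for more than half of them; but group 1 could adopt that very strategy (wasting its first move), a contradiction. Part (b) then follows by viewing the game after group 1's first pick as a fresh game in which group 2 moves first, so group 2 gets EF from that point on and EF1 overall; MMS follows since EF1 and MMS coincide for binary agents with two groups, as you note. If you want to rescue a constructive proof along your lines, you would need a potential that does not lose ground on the adversary's turn --- the paper leaves open whether any polynomial-time strategy achieves the $1/2$ bound here.
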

\begin{proof} We prove each part in turn.

\begin{enumerate}[(a)]
\item Each agent $j \in A_1$ with $2 s_j -1$ or $2 s_j$ desired goods needs to get at least $s_j$ of these goods in order to be envy-free. 
Without loss of generality, we assume that
$j$ has only $2 s_j - 1$ desirable goods (otherwise we ignore one such good arbitrarily).

We claim that group 1 has a strategy that guarantees to at least half its members $j$, at least $s_j$ desired goods.
Suppose by contradiction that the claim is false.
This means that, for \emph{every} picking strategy of group 1, group 2 can (adversarially) pick goods so that more than half of group 1's members get less than $s_j$ goods.
Equivalently, for more than half of group 1's members, at least $s_j$ of their desired goods are picked by group 2. 

However, if group 2 had such a strategy, group 1 could just copy this strategy and play it against group 2 (from the second step onwards). This would guarantee that more than half of the members of group 1 receive at least $s_j$ of their desired goods. 

\item We consider the game after group 1 picks its first good, as a new game in which group 2 plays first. By  part (a), group 2 has a strategy that guarantees that the allocation from this point on will be $1/2$-democratic EF. Accounting for the first good picked by group 1, the entire allocation is $1/2$-democratic EF1.
\qedhere
\end{enumerate}
\end{proof}

\subsection{Maximizing the fraction of happy agents}
While the main focus of this paper is on finding worst-case bounds on $h$, we briefly discuss the related problem of maximizing $h$ in a specific instance. 
In particular, in the spirit of egalitarianism, we would like to maximize the number of agents who get a positive fraction of their MMS.
Unfortunately this problem does not admit a polynomial-time algorithm unless $\text{P}=\text{NP}$: we prove this by showing that even deciding whether an instance admits an allocation that gives all agents a positive utility is NP-hard.
\begin{proposition}
\label{binary-npcomplete}
(a) Deciding whether a binary instance with two groups admits an allocation that gives every agent a positive utility is NP-complete.

(b) Deciding whether a binary instance with two groups admits an allocation that gives every agent a positive fraction of the MMS is NP-complete.
\end{proposition}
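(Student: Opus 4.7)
My plan is to reduce both parts to the \emph{Set Splitting} problem (equivalently, $3$-uniform hypergraph $2$-coloring): given a universe $U$ and a family $\mathcal{H}=\{H_1,\ldots,H_m\}$ of $3$-element subsets of $U$, decide whether $U$ can be partitioned as $U_1\sqcup U_2$ so that no $H_i$ lies entirely within $U_1$ or entirely within $U_2$. This problem is NP-complete by a classical result of Lov\'asz, and is also obtainable as the special case of Schaefer's dichotomy theorem corresponding to monotone NAE-$3$SAT.

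Membership in NP is immediate for (a): given an allocation, one verifies in polynomial time that every agent receives at least one desired good. For (b) I would first observe that for a binary agent in a two-group instance with $r$ desired goods, $\text{MMS}^2_{ij}(G) = \lfloor r/2 \rfloor$, which is positive precisely when $r\geq 2$. Hence the condition ``every agent gets a positive fraction of her MMS'' is equivalent to ``every agent with $r\geq 2$ receives at least one desired good'', and this is again checkable in polynomial time.

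For NP-hardness, given a Set Splitting instance $(U,\mathcal{H})$, I would construct a binary allocation instance with $G := U$ and with \emph{two} agents per hyperedge---one in group $A_1$ and one in group $A_2$---each desiring exactly the three goods of $H_i$. Then an allocation $(G_1,G_2)$ gives every agent positive utility iff for every $H_i$ we have both $G_1\cap H_i\neq\emptyset$ (from the group-$1$ copy of $H_i$) and $G_2\cap H_i\neq\emptyset$ (from the group-$2$ copy), i.e.\ iff no $H_i$ is monochromatic under the $2$-coloring induced by $(G_1,G_2)$. This gives part (a). Since $|H_i|=3\geq 2$, every agent in the constructed instance has positive MMS, so positive-fraction-MMS for every agent is equivalent to positive utility for every agent, and the same reduction yields part (b).

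The main point where I would need care is the symmetric duplication of each hyperedge into both groups: if a hyperedge $H_i$ were represented only in one group, say $A_1$, then the condition $G_1\cap H_i\neq\emptyset$ is strictly weaker than non-monochromaticity (it allows $H_i\subseteq G_1$), and the equivalence with Set Splitting would fail. Using one copy in each group is exactly what forces both sides of the partition to hit every hyperedge, matching the two-sided nature of the $2$-coloring problem. Beyond this, the reduction is entirely elementary and does not rely on any machinery from the preceding sections.
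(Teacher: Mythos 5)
Your proof is correct, and it takes a genuinely different route from the paper. The paper reduces from \textsc{Monotone SAT}: each all-positive clause becomes an agent in group~1 and each all-negative clause becomes an agent in group~2, so the two groups may have entirely unrelated agent sets; for part~(b) the paper then reduces from part~(a) itself, with a preprocessing step that peels off agents desiring only a single good until every remaining agent has at least two desired goods (at which point positive-MMS and positive utility coincide). You instead reduce from Set Splitting / monotone NAE-3SAT, duplicating each hyperedge into one agent per group. The two reductions are close cousins---your construction is essentially the special case of the paper's in which every positive clause is paired with the negative clause on the same variables---but yours buys two things: (i) part~(b) follows immediately with no preprocessing, because every agent in your construction desires exactly three goods and hence has MMS $\lfloor 3/2\rfloor = 1 > 0$; and (ii) your instance has \emph{identical} groups in the sense of Theorem~\ref{binary-positive-1ofbestc-identical}, so you in fact establish the slightly stronger statement that the problem remains NP-complete even when the two groups are identical, which the paper's asymmetric reduction does not give. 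Your caveat about needing both copies of each hyperedge is exactly right: with only one copy the condition degenerates to a one-sided covering constraint and the equivalence with 2-coloring fails. The only minor point to tighten is to state explicitly (as you implicitly do) that the verification step for part~(b) uses the closed form $\text{MMS}^2_{ij}(G)=\lfloor r/2\rfloor$ for binary agents, which the paper records in Section~\ref{sub:binary-negative}.
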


\begin{proof}
For any allocation, we can clearly verify in polynomial time whether it yields a positive utility to every agent, 
and whether it is positive-MMS fair for every agent.
We now show NP-hardness.

(a) We reduce from {\normalfont \scshape Monotone SAT}, a variant of the classical satisfiability problem where each clause contains either only positive literals or only negative literals. {\normalfont \scshape Monotone SAT} is known to be NP-hard \cite[p.~259]{GareyJo79}.
	
Given a {\normalfont \scshape Monotone SAT} formula $\phi$ with variables $x_1,\dots,x_m$, let there be $m$ goods corresponding to the $m$ variables. For each clause that contains only positive literals, we construct an agent in the first group who values exactly the goods contained in this clause. Similarly, for each clause that contains only negative literals, we construct an agent in the second group who values exactly the goods contained in this clause. Any assignment that satisfies $\phi$ gives rise to an allocation where the goods corresponding to true variables in the assignment are allocated to the first group and those corresponding to false variables in the assignment are allocated to the second group; this allocation gives every agent a positive utility (at least 1). Likewise, any allocation that gives every agent a positive utility yields a satisfying assignment of $\phi$. Hence the reduction is valid.
	
(b) We reduce from the problem of part (a), {\normalfont \scshape Positive Utility}. Given an instance, consider first all agents who desire a single good. If two of them are in different groups but desire the same good, then it is clearly impossible to give all agents a positive utility, so return False.
Otherwise, give all these agents their desired goods and remove them and their goods from the instance.
Repeat this process until there remains an instance in which every agent has at least two desirable goods, so the MMS of every agent is at least 1. An allocation is now positive-MMS for an agent if and only if it gives the agent a positive utility; 
use the solver for the positive-MMS problem and return its answer.
\end{proof}

The above reduction can be used in the opposite direction: if we have a SAT solver that can quickly solve {\normalfont \scshape Monotone SAT} problems, then we can use it to quickly decide whether a fair division instance admits a unanimous positive-MMS allocation.

However, this reduction does not work for the related maximization problem. Consider the problem MAX-SAT: given a formula $\phi$, find an assignment satisfying a maximum number of clauses in $\phi$. 
A solver for MAX-SAT can be used to find an allocation which gives a positive utility to a maximum number of agents, but ignores their groups. For example, it prefers an allocation in which 9 out of 10 members of the first group and 1 out of 10 members of the second group are happy, to an allocation in which 4 members of each groups are happy. 
In contrast, we are interested of maximizing the minimum number (or fraction) of happy agents in each group. For this we need to solve a problem that can be termed ``MAX-MIN-SAT'': given two formulas $\phi_1,\phi_2$, find an assignment which maximizes the minimum between the number of clauses satisfied in $\phi_1$ and number of clauses satisfied in $\phi_2$.
A more general problem, in which there are $k$ formulas, has been studied recently by \citet{bhangale2015simultaneous}. When $k$ is sufficiently small (in particular, when $k=2$ as in our case), they provide a polynomial-time constant-factor approximation.


\section{Two Groups with Additive or General Valuations}
\label{sec:2groups-additive}
In this section, we assume that there are two groups and each agent can have either an additive or a general monotonic utility function.

\subsection{Negative results}
The negative results for binary agents (Section \ref{sub:binary-negative}) obviously also hold for additive agents. In fact, for additive valuations we have a stronger impossibility: we cannot guarantee more than $1/2$ of the MMS to more than $1/3$ of the members in each group.
\begin{proposition}
\label{additive-negative-1/3}
For any $h>1/3$ and $q>1/2$, there is an additive instance with two groups in which no allocation is $h$-democratic $q$-fraction-MMS-fair.
\end{proposition}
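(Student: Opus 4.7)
The plan is to exhibit a small, tight instance with two identical groups of three agents and three goods. Let $G = \{g_1, g_2, g_3\}$ and $n_1 = n_2 = 3$, and for $i \in \{1,2\}$ and $j \in \{1,2,3\}$ define the additive valuation $u_{ij}(g_j) = 2$ and $u_{ij}(g_{j'}) = 1$ for $j' \neq j$. I would first compute $\text{MMS}^2_{ij}(G)$ for each agent: every $2$-partition of $G$ isolates exactly one good, giving minimum value $2$ when $g_j$ is isolated and minimum value $1$ when any non-favorite is isolated, so $\text{MMS}^2_{ij}(G) = 2$.

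Next, fix any $q > 1/2$. Then $q \cdot \text{MMS}^2_{ij}(G) > 1$, and because all utilities are integer-valued, $q$-fraction-MMS-fairness for $a_{ij}$ is equivalent to $u_{ij}(G_i) \geq 2$. From the form of $u_{ij}$, the inequality $u_{ij}(G_i) \geq 2$ holds if and only if $g_j \in G_i$ or $|G_i| \geq 2$. The key step is then a short case analysis on $|G_1|$: if $|G_1| \in \{0, 3\}$, one group receives the empty bundle and no agent in that group is fair; if $|G_1| = 1$, say $G_1 = \{g\}$, then in group $1$ only the unique agent whose favorite good is $g$ is fair, while the other two agents receive value $1 < 2$; the case $|G_1| = 2$ is symmetric, with only one fair agent appearing in group $2$.

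Hence in every allocation some group has at most $1$ out of $3$ agents that are $q$-fraction-MMS-fair, so no allocation is $h$-democratic $q$-fraction-MMS-fair whenever $h > 1/3$. There is no real obstacle here, just careful calibration: the favorite value $2$ and the non-favorite value $1$ are chosen so that the MMS equals exactly $2$ while any single non-favorite good has value $1$, which makes the integrality threshold $u_{ij}(G_i) \geq 2$ align exactly with the $q > 1/2$ cutoff; any weaker separation between favorite and non-favorite values would let a single non-favorite good satisfy a slightly larger $q$, and the bound $1/3$ would no longer be attained.
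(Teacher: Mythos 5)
Your proposal is correct and uses exactly the instance from the paper's own proof (three goods, two identical groups of three agents, each agent valuing her favorite good at $2$ and the others at $1$, with MMS equal to $2$); the case analysis on $|G_1|$ just spells out the paper's observation that some group receives at most one good and hence has at most one agent with utility exceeding $1$. No substantive difference.
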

\begin{proof}
	Consider an instance with $m=3$ goods and $n_1=n_2=3$ agents in each group, with utility vectors: $\textbf{u}_{i1}=(2,1,1)$, $\textbf{u}_{i2}=(1,2,1)$, and $\textbf{u}_{i3}=(1,1,2)$ for $i=1,2$. 
	The MMS of every agent is 2.
	In any allocation, one group receives at most one good, so at most one of its three agents receives utility more than 1.
	In that group, at most $1/3$ of the agents receive more than $1/2$ of their MMS.
\end{proof}
In the following subsection we will match this upper bound by showing that it is always possible to guarantee $1/2$ of the MMS to $1/2$ of the members in each group.

\subsection{Positive results}
The positive results for 1-of-best-$c$ fairness (Section \ref{sub:binary-1ofbestc}) are valid for additive valuations too: given an additive instance, we simply convert, for each agent, the utilities of her $c$ best goods to $1$ and the utilities of the other goods to $0$, and obtain a binary instance to which our previous positive results apply.

However, for the more general monotonic valuations, and for stronger fairness criteria such as EF1 and MMS, we need a different technique.

\begin{theorem}
\label{monotonic-positive-1/2}
For two groups of agents with monotonic valuations, 
$1/2$-democratic EF1 is attainable with an efficient protocol.
\end{theorem}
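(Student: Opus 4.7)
The plan is to use a discrete cut-and-choose protocol: Group $1$ produces a partition $(X, Y)$ of $G$, and Group $2$ selects one bundle to receive, with the other bundle going to Group $1$. For a $1/2$-democratic EF$1$ guarantee, Group $1$'s cut must be such that at least half of $A_1$ is EF$1$ no matter which bundle Group $2$ picks, and Group $2$'s pick must leave at least half of $A_2$ EF$1$.

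\textbf{Group $2$'s step is straightforward.} The key observation is that for any partition $(X, Y)$ and any monotonic agent $j$, at least one of $X, Y$ is EF$1$ for $j$ to receive: if $u_j(X) \geq u_j(Y)$, then $u_j(X) \geq u_j(Y) \geq u_j(Y \setminus \{g\})$ for every $g \in Y$ by monotonicity, so $X$ is EF$1$ for $j$; otherwise $Y$ is EF$1$ by symmetric reasoning. Hence every member of $A_2$ approves at least one of the two sides, and by pigeonhole some side is approved by at least $\lceil n_2/2 \rceil$ of them. Group $2$ picks this side.

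\textbf{The main obstacle is Group $1$'s cut.} Write $L_X, L_Y \subseteq A_1$ for the agents in Group $1$ who find receiving $X$, respectively $Y$, to be EF$1$. Group $1$ needs a partition with $|L_X| \geq n_1/2$ and $|L_Y| \geq n_1/2$ simultaneously, so that whichever side Group $2$ takes, the other side is EF$1$ for at least half of $A_1$. I would prove such a partition exists via a sweeping argument: fix any ordering $g_1, \ldots, g_m$ of the goods and consider the partitions $P_k = (\{g_1, \ldots, g_k\}, \{g_{k+1}, \ldots, g_m\})$. Monotonicity implies that for each agent $i \in A_1$ the set of $k$ with $X_k$ EF$1$ for $i$ is upward-closed and the set with $Y_k$ EF$1$ for $i$ is downward-closed (once a side is EF$1$, enlarging it keeps it EF$1$); combined with the single-side observation above, this gives $|L_X(k)| + |L_Y(k)| \geq n_1$ throughout the sweep, with $|L_X(\cdot)|$ rising weakly from near $0$ to $n_1$ and $|L_Y(\cdot)|$ falling weakly from $n_1$ to near $0$.

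The delicate point is that in a single step many agents could transition directly from ``only $Y$ EF$1$'' to ``only $X$ EF$1$'', so along a rigidly fixed sweep both counts might leap past $n_1/2$ simultaneously. I would handle this by making the sweep \emph{adaptive}: whenever the current state has $|L_X| < n_1/2 \leq |L_Y|$, transfer a good from $Y$ to $X$ whose movement keeps $|L_Y| \geq n_1/2$, and symmetric moves in the other direction if needed. A counting argument exploiting the upward-/downward-closed structure together with the invariant $|L_X| + |L_Y| \geq n_1$ shows that such a ``safe'' single-good move always exists until both counts reach $n_1/2$. Since each step runs in polynomial time and there are $O(m)$ steps, the resulting protocol is efficient, and combining it with Group $2$'s pigeonhole step yields the claimed $1/2$-democratic EF$1$ allocation.
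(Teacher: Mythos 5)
Your architecture (cut\nobreakdash-and\nobreakdash-choose with Group 2 deciding by pigeonhole) is sound and close in spirit to the paper's protocol, and Group 2's step is handled correctly. But the heart of the proof---the existence of a partition $(X,Y)$ with $|L_X|\geq n_1/2$ and $|L_Y|\geq n_1/2$---is exactly the part you leave unproved. The ingredients you cite (upward-/downward-closedness of the EF1 sets along the sweep and the invariant $|L_X|+|L_Y|\geq n_1$) are genuinely insufficient: they constrain each snapshot but say nothing about how much $|L_Y|$ can drop in a single step, which is precisely the ``delicate point'' you identify. Asserting that ``a counting argument shows a safe move always exists'' is not a proof, and nothing you have written rules out the simultaneous leap.

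The missing ingredient is a \emph{transition lemma}, which is the paper's key step: if an agent does \emph{not} find $X_{k-1}$ EF1, then $u(X_{k-1}) < u(Y_{k-1}\setminus\{g'\})$ for \emph{every} $g'\in Y_{k-1}$; taking $g'=g_k$ (the good being moved) gives $u(Y_k) = u(Y_{k-1}\setminus\{g_k\}) > u(X_{k-1}) = u(X_k\setminus\{g_k\})$, so that agent finds $Y_k$ EF1 after the move. Hence $L_Y(k)\supseteq A_1\setminus L_X(k-1)$. With this lemma your rigidly fixed sweep already works---no adaptive moves are needed: take the first index $k$ with $|L_X(k)|\geq n_1/2$; then $|L_X(k-1)|<n_1/2$, so $|L_Y(k)|\geq n_1-|L_X(k-1)|>n_1/2$, and both counts clear the threshold at step $k$. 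The paper's own proof uses this same observation but packages it more economically: it sweeps until the prefix is EF1 for half of \emph{either} group, hands the prefix to that group, and applies the transition lemma once to certify the other group, thereby avoiding the need for a cut that is simultaneously balanced for both sides of Group 1. Either packaging is fine, but you must actually prove the transition lemma; as written, your proposal has a genuine gap at its central step.
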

\begin{proof}
We arrange the goods on a line and process them from left to right. Starting from an empty block, we add one good at a time until the current block is EF1 for at least half of the agents in at least one group. 
We allocate the current block to one such group, and the remaining goods to the other group.

Since the whole set of goods is EF1 for both groups, the protocol terminates. Assume without loss of generality that the left block $G_1$ is allocated to the first group $A_1$, and the right block $G_2$ to the second group $A_2$. By the description of the protocol, the allocation is EF1 for at least half of the agents in $A_1$, so it remains to show that the same holds for $A_2$. Let $g$ be the last good added to the left block. More than half of the agents in $A_2$ think that $G_1\backslash\{g\}$ is not EF1, so for these agents, $G_1\backslash\{g\}$ is worth less than $G_2\cup\{g\}\backslash\{g'\}$ for any $g'\in G_2\cup\{g\}$. Taking $g'=g$, we find that these agents value $G_1\backslash\{g\}$ less than $G_2$. But this implies that the agents find $G_2$ to be EF1, completing the proof.
\end{proof}

An example run of this algorithm is shown in Appendix \ref{sample:line2}.

\begin{remark}
Proposition~\ref{binary-negative-EFc} shows that the factor $1/2$ in Theorem~\ref{monotonic-positive-1/2} cannot be improved even for binary agents and even if we relax EF1 to EF$c$ for any constant $c$. 
\end{remark}

\begin{remark}
Theorem~\ref{monotonic-positive-1/2} shows that if the goods lie on a line, we can find a $1/2$-democratic EF1 allocation that moreover gives each group a contiguous block on the line.
This may be important, for example, if the goods are houses on a street and each group wants to have all its houses in a contiguous block \citep{BarreraNyRu15,BouveretCeEl17,Suksompong17}.
\end{remark}

\begin{remark} 
The proof of Theorem~\ref{monotonic-positive-1/2} can be modified to show the existence of an allocation such that for at least half of the agents in each group, their envy towards the other group can be eliminated by removing the same good from the other group's bundle. This is similar to the notion of \emph{strong envy-freeness up to one good (s-EF1)} proposed for individual fair division by \cite{ConitzerFrSh19}, where the good that needs to be removed from an agent's bundle is independent of the envying agent.
\end{remark}

For additive agents, 
Theorem \ref{monotonic-positive-1/2} can be combined with Lemmas \ref{EF1-is-PROP} and \ref{PROP-is-MMS}:

\begin{corollary}
\label{additive-positive-1/2}
For two groups with additive agents, the following are attainable with an efficient protocol:

(a) $1/2$-democratic \prop{1};
	
(b) $1/2$-democratic $1/2$-fraction-MMS-fairness;
	
(c) $1/2$-democratic $1$-out-of-$3$ MMS-fairness;

(d) $1/2$-democratic MMS-fairness, if the valuations are binary.
\end{corollary}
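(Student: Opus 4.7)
The plan is to use the efficient protocol from Theorem~\ref{monotonic-positive-1/2} as a black box and then simply chase the implications through Lemmas~\ref{EF1-is-PROP} and~\ref{PROP-is-MMS} specialized to $k=2$. Since Theorem~\ref{monotonic-positive-1/2} already produces, in polynomial time, an allocation that is EF1 for at least half of the agents in each group, and since additive valuations are a special case of monotonic valuations, the same allocation is available for use here. All four parts of the corollary will be established by fixing, for each group, the half of the agents for whom the allocation is EF1 and then arguing the stronger conclusion pointwise for each such agent.

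For part (a), I would directly apply Lemma~\ref{EF1-is-PROP} with $k=2$: for each additive agent in the distinguished half, EF1 implies \prop{1}, so at least half of each group receives a \prop{1} share. For part (b), I would then feed this \prop{1} guarantee into Lemma~\ref{PROP-is-MMS}(a) with $k=2$, which says \prop{(k-1)} implies $1/k$-fraction-MMS-fairness; this gives $1/2$-fraction-MMS-fairness for the same half of the agents. Part (c) follows identically from Lemma~\ref{PROP-is-MMS}(b) with $k=2$, since $2k-1 = 3$. Part (d) uses Lemma~\ref{PROP-is-MMS}(c), which upgrades \prop{1} to full MMS-fairness in the binary case.

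There is essentially no obstacle here: all of the substantive work has already been done in Theorem~\ref{monotonic-positive-1/2} (the protocol and its EF1-half guarantee) and in the two preceding lemmas (the per-agent implications). The only thing to verify carefully is that the implications in Lemmas~\ref{EF1-is-PROP} and~\ref{PROP-is-MMS} are stated per agent, so that applying them to each agent in the ``happy half'' preserves the $1/2$-democratic nature of the guarantee; inspection of those lemma statements confirms this is the case. Efficiency is inherited directly from Theorem~\ref{monotonic-positive-1/2}, since no further computation is needed to convert the EF1 guarantee into the proportionality or MMS guarantees; the same output allocation witnesses all four claims simultaneously.
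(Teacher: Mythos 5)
Your proposal is correct and is exactly the paper's argument: the paper derives this corollary by combining the efficient EF1 protocol of Theorem~\ref{monotonic-positive-1/2} with the per-agent implications in Lemmas~\ref{EF1-is-PROP} and~\ref{PROP-is-MMS} specialized to $k=2$, just as you do. No gaps; the same allocation witnesses all four parts.
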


The $1/2$-democratic  factor in part (b) is ``almost'' tight by Proposition \ref{additive-negative-1/3}. For  part (d), it is tight by Proposition \ref{binary-negative-EFc}.
For part (c) we do not know if it is tight: currently our best upper bound for $h$-democratic 1-out-of-3 fairness is $h\leq 7/8$ for binary agents, by Proposition \ref{binary-negative-1outofc}.

Note that part (d) was proved existentially in Theorem \ref{binary-positive-ef1}, but Corollary \ref{additive-positive-1/2} also gives  an efficient protocol.
~

While our main focus is on democratic fairness for any number of agents, 
we can attain a weaker unanimous fairness guarantee by reducing to cake cutting.
\begin{theorem}
	\label{additive-positive-unanimous}
	For any two groups of agents with additive valuations, there exists an allocation that is EF$(n-1)$ for all agents, where $n = n_1+n_2$ is the total number of agents in both groups.
\end{theorem}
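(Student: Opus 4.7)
The plan is to reduce the problem to cake cutting and then discretize back to indivisible goods. First, I would encode the goods as a cake of length $m$ by placing good $g_k$ on the interval $[k-1, k] \subset \mathbb{R}$, and extend each additive agent $a_{ij}$'s valuation to a piecewise-constant density equal to $u_{ij}(g_k)$ on $[k-1, k]$. Under this encoding, the cake-value of any union of whole unit intervals coincides with the agent's additive value on the corresponding bundle, so it suffices to produce a cake partition with the desired structure and then round cleanly.

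Next, I would invoke a classical cake-cutting theorem, most naturally Stromquist's theorem, which produces a partition of the cake into $n$ contiguous pieces $P_1, \ldots, P_n$ separated by $n-1$ interior cut points, together with an assignment $\pi$ of pieces to agents under which each agent weakly prefers her own piece to every other piece. Aggregating pieces by group defines the cake-level bundles $P^*_i = \bigcup_{a \in A_i} P_{\pi(a)}$. To obtain an indivisible allocation $(G_1, G_2)$, I would assign each good whose unit interval lies entirely inside $P^*_i$ to group $i$, and assign each \emph{split good} (a good straddling a cut that separates pieces belonging to agents from different groups) to the group whose adjacent piece contains the larger length of that good. Because only inter-group cuts give rise to split goods, and there are at most $n-1$ interior cuts in total, the number of split goods is at most $n-1$.

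Finally, I would verify the EF$(n-1)$ guarantee. For each agent $a_{ij} \in A_i$, define $C := \{\text{split goods lying in } G_{-i}\} \subseteq G_{-i}$, so $|C| \leq n-1$ automatically. Using Stromquist's individual envy-freeness, aggregated over the pieces of the opposing group, together with the majority-rounding rule (which ensures that at most half of each split good's value is ``misattributed'' relative to the cake assignment), I would establish term-by-term that $u_{ij}(G_i) \geq u_{ij}(G_{-i} \setminus C)$.

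The main obstacle is the last step: Stromquist's theorem provides envy-freeness only at the individual piece level, whereas the conclusion concerns a comparison between two aggregated group bundles. Bridging this gap requires a careful accounting that simultaneously (i) uses individual envy-freeness to bound each piece of $P^*_{-i}$ against $P_{\pi(a_{ij})} \subseteq P^*_i$, and (ii) absorbs the discretization error of the majority rounding into the at-most-$n-1$ split goods placed into $C$. If the Stromquist-based accounting proves too loose at the group level, an alternative route is to apply the Simmons--Su / Hobby--Rice consensus-halving theorem instead; its cake-level equality $u_a(S_1) = u_a(S_2)$ for every agent translates the analysis into a purely additive discretization bookkeeping, at the cost of needing a finer argument to reduce the worst-case $n$ cuts to $n-1$ effective split goods in the step-function setting.
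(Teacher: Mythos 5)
Your primary route via Stromquist's theorem has a genuine gap that cannot be repaired by more careful accounting. Stromquist gives you, for agent $a_{ij}$, only the piece-level guarantee $u_{ij}(P_{\pi(a_{ij})})\geq u_{ij}(P_{\pi(a)})$ for each other agent $a$. After aggregation, the opposing group's bundle $P^*_{-i}$ is a union of up to $n_{-i}$ such pieces, so $u_{ij}(P^*_{-i})$ can be as large as $n_{-i}\cdot u_{ij}(P_{\pi(a_{ij})})$ while $u_{ij}(P^*_i)$ can be as small as $u_{ij}(P_{\pi(a_{ij})})$ (the agent may assign zero value to the pieces of her own groupmates). This multiplicative envy cannot be absorbed by deleting $n-1$ goods: if each opposing piece consists of many goods of individually small value, removing one good per piece barely changes its value, so $u_{ij}(G_i)\geq u_{ij}(G_{-i}\setminus C)$ fails for any $C$ with $|C|\leq n-1$. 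The issue is not that the accounting is ``too loose''; the piece-level envy-freeness is simply the wrong invariant for a two-bundle group comparison.

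Your fallback --- consensus halving --- is the right invariant and is essentially what the paper does (citing Alon's splitting-necklaces result): equality $u_{ij}(S_1)=u_{ij}(S_2)$ for every agent turns the rounding analysis into pure additive bookkeeping, with the discrepancy between the two discrete bundles bounded by the goods straddling the cuts. But you leave unresolved exactly the step where the paper's one idea lives: consensus halving for $n$ agents needs $n$ cuts, which would only yield EF$n$. The paper gets down to $n-1$ by first setting one agent aside via cut-and-choose --- the partition is computed to be a consensus halving for the \emph{other} $n-1$ agents only (hence $n-1$ cuts and $n-1$ boundary goods), and the set-aside agent then picks her preferred bundle, which makes the allocation fully envy-free for her. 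You should commit to this route and supply that reduction from $n$ to $n-1$ agents explicitly; as written, the proposal's main path is unsound and its backup path is missing the decisive step.
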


\begin{proof}
	Choose an arbitrary agent in one of the groups. We will partition the goods into two parts and let the agent choose the part that she prefers. The resulting allocation is envy-free for this agent and hence also EF$(n-1)$ for her. It therefore suffices to show that there exists a partition in which each bundle is EF$(n-1)$ (with respect to the other bundle) for all of the remaining $n-1$ agents.
	
	To this end, assume that there is a divisible good (``cake'') represented by the half-open interval $(0,m]$. The value-density functions of the agents over the cake are piecewise-constant: for every $l\in\{1,\dots,m\}$, the value-density $v_{ij}$ in the half-open interval $(l-1,l]$ equals $u_{ij}(g_l)$.
	
	It is known that there exists a partition of the cake into two parts, using at most $n-1$ cuts, in which every agent has equal value for both parts \citep{Alon87}. Starting with two empty bundles, for each $l\in\{1,\dots,m\}$, we add good $g_l$ to the bundle corresponding to the part that contains at least half of the interval $(l-1,l]$. (If both parts contain exactly half of the interval, we add $g_l$ to an arbitrary bundle.) 
	
	We claim that every agent finds either bundle to be EF$(n-1)$. Fix an agent $a_{ij}$ and a bundle $G'$. From our partitioning choice, we have that $u_{ij}(G\backslash G')-u_{ij}(G')\leq u_{ij}(G'')$ for some set $G''\subseteq G\backslash G'$ of size at most $n-1$. This implies that the agent finds $G'$ to be EF$(n-1)$ with respect to $G\setminus G'$, as claimed.
\end{proof}
Note that the result of \citet{Alon87} is existential: as far as we know, there is no efficient way to compute the cake partition. Therefore the proof of Theorem \ref{additive-positive-unanimous} does not give rise to an efficient implementation.

\section{Three or More Groups}
\label{sec:3groups}

In this section, we study the most general setting where we allocate goods among any number of groups. Similarly to the previous sections, we start with negative results and move on to positive results. Recall that $k$ denotes the number of groups.

\subsection{Negative results}
The following proposition generalizes Proposition~\ref{binary-negative-2/3}.

\begin{proposition}
	\label{kgroups-binary-negative-2/3}
	For any $k\geq 2$ and any $h>k/(2k-1)$, there is a binary instance in which no allocation is $h$-democratic positive-MMS-fair.
\end{proposition}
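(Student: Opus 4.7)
The plan is to generalize the construction of Proposition~\ref{binary-negative-2/3} by using a symmetric design built on all $k$-subsets of a ground set of size $2k-1$. Specifically, I would consider the binary instance with $m = 2k-1$ goods, and in each of the $k$ groups place $n = \binom{2k-1}{k}$ agents, one for each distinct $k$-subset of $G$; each such agent's desired set is the corresponding $k$-subset.

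Next, I would verify that the MMS of every agent is strictly positive. Since each agent desires exactly $k$ goods and there are $k$ groups, the partition that puts one desired good into each of $k$ bundles (and distributes the remaining $k-1$ non-desired goods arbitrarily) certifies $\mathrm{MMS}^k_{ij}(G) \geq 1$. Hence an allocation is positive-MMS-fair for an agent if and only if her group receives at least one of her desired goods.

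Then I would apply the pigeonhole principle: since $2k-1$ goods are distributed among $k$ groups, some group $A_i$ receives at most $\lfloor (2k-1)/k \rfloor = 1$ good. If $A_i$ receives no good, every agent in $A_i$ has utility $0$, so the fraction of happy agents is $0 < k/(2k-1)$. If $A_i$ receives exactly one good $g$, an agent in $A_i$ is happy iff $g$ lies in her $k$-subset; the number of $k$-subsets of $[2k-1]$ containing a fixed element is $\binom{2k-2}{k-1}$, so the happy fraction in $A_i$ equals
\[
\frac{\binom{2k-2}{k-1}}{\binom{2k-1}{k}} = \frac{k}{2k-1},
\]
which would be the short binomial identity to carry out explicitly. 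This matches the claimed bound exactly.

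I do not foresee a real obstacle here: the construction is forced by the $k=2$ base case, the MMS computation is immediate, and the only non-trivial step is the one-line binomial identity above. The main thing to be careful about is to state the instance so that the number of agents per group ($\binom{2k-1}{k}$) is the same in all groups, which makes the happy-fraction bound independent of $i$, and to note that the bound is achieved in the worst group regardless of which good is picked, so the argument works uniformly over all allocations.
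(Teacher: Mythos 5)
Your proof is correct and follows essentially the same approach as the paper: pigeonhole forces some group to receive at most one good, and the desired sets are designed so that any single good is desired by exactly a $k/(2k-1)$ fraction of a group. The only difference is the combinatorial design---the paper places $2k-1$ goods on a circle and uses the $2k-1$ blocks of $k$ consecutive goods (each good lying in exactly $k$ blocks), which yields a more economical instance than your complete family of all $\binom{2k-1}{k}$ $k$-subsets, but both give the same tight fraction.
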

\begin{proof}
	There are $2k-1$ goods placed in a circle. In each group there are $2k-1$ members. Each member of a group values a unique block of $k$ consecutive goods on the circle; the member has utility 1 for each of the $k$ goods and utility 0 for the remaining $k-1$ goods. Each agent has a positive MMS (1). However, in any allocation some group gets at most one good, and only $k$ members of the group get positive utility.
\end{proof}
In particular, when the number of groups is large, it is not possible to satisfy more than about half of the agents.

Next, we generalize Proposition \ref{binary-negative-maxh} to an arbitrary number of groups.
\begin{proposition}
\label{kgroups-binary-negative-maxh}
Let $r, s$ be integers such that $r\geq s \geq 1$.
There exists a binary instance with $k$ groups in which each agent desires exactly $r$ goods and needs $s$ goods in order to consider the allocation fair, where it is impossible to attain more than $\maxh(r,s)$-democratic fairness, where:
\begin{align*}
\maxh(r,s) = 
\begin{cases}
0 &  \text{~~when~~} r \leq k s - 1;
\\
\frac{1}{k^r}\sum_{i=s}^r (k-1)^{r-i} \binom{r}{i} & \text{~~when~~} r\geq k s.
\end{cases}
\end{align*}
\end{proposition}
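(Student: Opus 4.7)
The plan is to mimic the two-group construction of Proposition \ref{binary-negative-maxh}, splitting into the two cases exactly as before, but replacing the role of ``$1/2$ of the goods'' by ``$1/k$ of the goods'' via a pigeonhole argument.

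For the case $r\leq ks-1$, I would take an instance with $r$ goods in which every agent in every group desires all $r$ goods and requires $s$ of them to be happy. Since the total number of goods is $r<ks$, at least one group receives at most $\lfloor r/k\rfloor<s$ goods by pigeonhole, so every agent in that group is unhappy. This gives $\maxh(r,s)=0$, matching the definition.

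For the case $r\geq ks$, I would take an instance with $km$ goods for some $m\gg r$, and in each group put $\binom{km}{r}$ members, one for each possible $r$-subset of the goods, each member requiring $s$ desired goods to be happy. By pigeonhole, at least one group receives at most $m$ goods; as in the two-group proof, fewer goods can only reduce the fraction of happy members, so I may assume the group receives exactly $m$ goods. A member of that group is happy iff her chosen $r$-subset intersects the group's $m$ goods in at least $s$ elements. Counting $r$-subsets by the size $i$ of the intersection with the group's $m$ goods (the remaining $r-i$ elements lying in the other $(k-1)m$ goods) yields the fraction of happy agents
\[
\frac{\sum_{i=s}^{r}\binom{m}{i}\binom{(k-1)m}{r-i}}{\binom{km}{r}}.
\]

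Finally I would let $m\to\infty$. Using $\binom{m}{i}\sim m^i/i!$, $\binom{(k-1)m}{r-i}\sim ((k-1)m)^{r-i}/(r-i)!$ and $\binom{km}{r}\sim (km)^r/r!$, the fraction tends to
\[
\frac{1}{k^r}\sum_{i=s}^{r}(k-1)^{r-i}\binom{r}{i},
\]
which is exactly $\maxh(r,s)$. Hence no $h$-democratic fair allocation with $h>\maxh(r,s)$ can exist for sufficiently large $m$. The computation is essentially routine; the only mild subtlety is ensuring that the pigeonhole step is applied to the group that receives the fewest goods and that rounding down to $m$ goods never increases the fraction of happy members, both of which carry over verbatim from the two-group argument.
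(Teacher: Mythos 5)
Your proposal is correct and follows essentially the same route as the paper's proof: the same two-case split, the same all-goods-desired instance for $r\leq ks-1$, the same $\binom{km}{r}$-agent construction with the hypergeometric count for $r\geq ks$, and the same asymptotic evaluation as $m\to\infty$. The extra remark about reducing to exactly $m$ goods is a harmless carry-over from the two-group argument.
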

\begin{proof}
For the case $r\leq k s - 1$, consider an instance with $r$ goods, where all agents in all groups desire all goods. At least one group will get at most $r/k<s$ goods, so all of its members will be unhappy.
	
For the case $r\geq k s$, consider an instance with $k m$ goods, for some $m\gg r$. In each group there are $\binom{k m}{r}$ members, each of whom wants a distinct subset of $r$ goods. At least one group will get at most $m$ goods. In this group, the fraction of happy agents will be at most:
	\begin{align*}
	\frac{
		\sum_{i=s}^r \binom{m}{i}\cdot \binom{km-m}{r-i}
	}{
		\binom{k m}{r}
	}.
	\end{align*}
	When $m\gg r$, the numerator is approximately $\sum_{i=s}^r \frac{m^i}{i!}\cdot  \frac{(k-1)^{r-i} m^{r-i}}{(r-i)!}
	=
	\sum_{i=s}^r\frac{ (k-1)^{r-i}  m^r}{i! (r-i)!}
	$ and the denominator is approximately $\frac{(k m)^r}{ r!} = \frac{k^r m^r}{r!}$. Therefore when $m\to\infty$ the expression approaches $\frac{1}{k^r}\sum_{i=s}^r (k-1)^{r-i}\binom{r}{i}$.
\end{proof}
In particular, we get the following generalization of Proposition \ref{binary-negative-1outofc}:
\begin{proposition}
	\label{kgroups-binary-negative-1outofc}
	For any integers $k,c\geq 2$ and any
	$h>1-\left(\frac{k-1}{k}\right)^c$, there is a binary instance with $k$ groups in which no allocation is $h$-democratic  1-of-best-$c$ fair (hence no allocation is 	$1$-out-of-$c$ MMS-fair).
\end{proposition}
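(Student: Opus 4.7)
My plan is to derive this proposition directly from Proposition \ref{kgroups-binary-negative-maxh} by specializing to the parameters that encode the 1-of-best-$c$ criterion. As noted in the ``binary fairness requirements'' discussion, 1-of-best-$c$ fairness corresponds to setting $s = 1$ whenever $r \geq c$, so the tightest instance comes from choosing $r = c$ and $s = 1$. I would therefore invoke Proposition \ref{kgroups-binary-negative-maxh} with $(r,s) = (c,1)$ and show that the resulting bound $\maxh(c,1)$ equals $1 - \left(\frac{k-1}{k}\right)^c$.

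The main computation splits along the case distinction in the definition of $\maxh$. When $c \geq k$, we are in the regime $r \geq ks$ and the formula gives
\[
\maxh(c,1) \;=\; \frac{1}{k^c}\sum_{i=1}^{c}\binom{c}{i}(k-1)^{c-i}.
\]
Here I would simply apply the binomial theorem in the form $k^c = (1+(k-1))^c = \sum_{i=0}^{c}\binom{c}{i}(k-1)^{c-i}$ and isolate the $i=0$ term to obtain $\sum_{i=1}^{c}\binom{c}{i}(k-1)^{c-i} = k^c - (k-1)^c$, so that $\maxh(c,1) = 1 - \left(\frac{k-1}{k}\right)^c$, exactly the claimed upper bound.

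When $c < k$ (that is, $r \leq ks - 1$), the formula yields $\maxh(c,1) = 0$: the witnessing instance has only $c$ goods, all desired by every agent, so at least one of the $k$ groups gets no good and all its members are unhappy. Since $0 < 1 - \left(\frac{k-1}{k}\right)^c$ for every $k \geq 2$ and $c \geq 1$, the proposition holds a fortiori in this regime. The parenthetical implication for 1-out-of-$c$ MMS-fairness (which is defined only for $c \geq k$) is then immediate from the implication chain following Definition \ref{def:maximin}, since 1-out-of-$c$ MMS-fair implies 1-of-best-$c$-fair. I do not anticipate any genuine obstacle: the only step with any content is the binomial identity, which is routine.
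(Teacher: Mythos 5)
Your proposal is correct and matches the paper's proof essentially verbatim: the paper also applies Proposition~\ref{kgroups-binary-negative-maxh} with $(r,s)=(c,1)$ and simplifies $\frac{1}{k^c}\sum_{i=1}^c (k-1)^{c-i}\binom{c}{i}$ to $1-\left(\frac{k-1}{k}\right)^c$ via the same binomial identity. Your explicit treatment of the degenerate case $c<k$ (where $\maxh(c,1)=0$) is a minor point the paper absorbs into an inequality, but it changes nothing of substance.
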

\begin{proof}
	Apply Proposition \ref{kgroups-binary-negative-maxh} with $r=c$ and $s=1$.
	Then $\maxh(r,s) \leq \frac{1}{k^c} \sum_{i=1}^c (k-1)^{c-i}\binom{c}{i} = 
	\frac{k^c-(k-1)^c}{k^c} =
	1 - \left(\frac{k-1}{k}\right)^c$.
\end{proof}

As a generalization of Proposition \ref{binary-negative-EFc}, we get:
\begin{proposition}
	\label{kgroups-binary-negative-EFc}
	For any constant integer $c\geq 1$ and any $h>1/k$, there is a binary instance with $k$ groups in which no allocation is $h$-democratic EF$c$.
\end{proposition}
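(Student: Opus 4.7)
I plan to generalize the $k=2$ construction from Proposition~\ref{binary-negative-EFc}, but replace the weak ``own value at least a fixed threshold'' necessary condition (which via Proposition~\ref{kgroups-binary-negative-maxh} only yields $1/2$, too weak when $k\geq 3$) by the exact characterization of EF$c$ combined with an averaging argument over the $k$ groups. Fix large integers $N$ and $r$ with $c \ll r \leq N$; for concreteness take $r = \lfloor N/2 \rfloor$. Let the goods be $[N]$, and in each group place $\binom{N}{r}$ agents, one per $r$-subset of the goods, each desiring her own subset. For a binary agent with desired set $X$ in group $i$, EF$c$ holds iff $|X \cap G_i| \geq \max_{j} |X \cap G_j| - c$, because removing $c$ goods from another bundle can erase at most $c$ of her desired goods from it.

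Fix any allocation $(G_1,\ldots,G_k)$ with $m_i = |G_i|$ and view $X$ as uniformly random. Writing $T_j := |X \cap G_j|$, the fraction of EF$c$ agents in group $i$ equals $P_i := \Pr[T_i \geq \max_j T_j - c]$. The set $\{i : T_i \geq \max_j T_j - c\}$ always contains at least one argmax, and has size $\geq 2$ only if $T^{(1)} - T^{(2)} \leq c$ (where $T^{(1)} \geq T^{(2)}$ are the top two values among $T_1,\ldots,T_k$), so
\[
\sum_{i=1}^{k} P_i \;=\; \mathbb{E}\bigl[\,|\{i : T_i \geq \max_j T_j - c\}|\,\bigr] \;\leq\; 1 + (k-1)\Pr\bigl[T^{(1)} - T^{(2)} \leq c\bigr].
\]
A union bound gives $\Pr[T^{(1)} - T^{(2)} \leq c] \leq \binom{k}{2}\max_{i \neq j}\Pr[|T_i - T_j| \leq c]$, so once the latter is shown to be $o(1)$ uniformly in the allocation, $\min_i P_i \leq \frac{1}{k}\sum_i P_i \leq (1+o(1))/k$.

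I finish via a case split on the allocation. If some $m_{i^{\ast}} \leq N/k - N^{3/4}$, then setting $i^{\ast\ast} = \arg\max_j m_j$ (with $m_{i^{\ast\ast}} \geq N/k$ by pigeonhole) gives $\mathbb{E}[T_{i^{\ast\ast}} - T_{i^{\ast}}] = r(m_{i^{\ast\ast}} - m_{i^{\ast}})/N = \Omega(N^{3/4})$ while $\mathrm{sd}(T_{i^{\ast\ast}} - T_{i^{\ast}}) = O(\sqrt{N})$, so Chebyshev forces $P_{i^{\ast}} \to 0$ and hence $\min_i P_i \to 0$. Otherwise every $m_i$ lies within $(k-1)N^{3/4}$ of $N/k$; an indicator-level computation gives $\mathrm{Var}(T_i - T_j) = \Theta\bigl(r(m_i + m_j)/N\bigr) = \Theta(N)$ for every pair, and a local central limit estimate (a Stirling bound on the lattice point-probabilities of $T_i - T_j$) yields $\Pr[|T_i - T_j| \leq c] = O(c/\sqrt{N})$, so $\sum_i P_i \leq 1 + O(k^3 c/\sqrt{N})$ and $\min_i P_i \leq 1/k + O(k^2 c/\sqrt{N})$. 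In both cases $\min_i P_i \leq 1/k + o(1)$ as $N \to \infty$; given $h > 1/k$, choosing $N$ large enough produces an instance admitting no $h$-democratic EF$c$ allocation, which proves the proposition. The main technical obstacle is the uniform local-limit bound in the balanced case: a $O(c/\sqrt{N})$ estimate on the interval probabilities of the discrete random variable $T_i - T_j$ that is robust to the approximately-balanced bundle sizes, which follows from a Berry--Esseen bound on the c.d.f.\ combined with a standard smoothing/anti-concentration argument.
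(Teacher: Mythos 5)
Your proposal is correct, and its probabilistic core is the same as the paper's: show by a central-limit/anti-concentration estimate that only a vanishing fraction of agents value two bundles within $c$ of each other, so that essentially every agent finds exactly one bundle EF$c$-acceptable, and then argue that in some group at most (roughly) a $1/k$ fraction can have that unique bundle be their own. Where you genuinely diverge is in how the $1/k$ is extracted and how unbalanced allocations are treated. The paper populates each group with all $2^m$ subsets, first restricts to allocations giving every group exactly $m/k$ goods, uses the exchangeability of $(T_1,\dots,T_k)$ to conclude that a $1/k$ share of the ``unique-acceptable-bundle'' agents accept their own bundle, and then reduces arbitrary allocations to the balanced case by a monotone good-moving argument. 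You instead write $\min_i P_i\le\frac{1}{k}\sum_i P_i$ and compute $\sum_i P_i$ as the expected number of indices within $c$ of the maximum, namely $1+(k-1)\Pr[T^{(1)}-T^{(2)}\le c]$; this needs no symmetry and is valid for arbitrary bundle sizes, with heavily unbalanced allocations killed directly by Chebyshev. Your averaging step is arguably cleaner than the paper's perturbation argument. The price is paid in the anti-concentration step: by using $r$-uniform desired sets rather than all subsets, your indicators are sampled without replacement, so $T_i-T_j$ is not a sum of independent variables and you need a combinatorial CLT or a Berry--Esseen bound for hypergeometric-type sums (together with the covariance computation showing $\mathrm{Var}(T_i-T_j)=\Theta(N)$); this is standard but is the one place whose details must be written out carefully, whereas the paper's all-subsets construction makes the difference a sum of independent mean-zero terms and applies the classical CLT directly.
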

\begin{proof}
	Assume that there are $m=km'$ goods for some large positive integer $m'$. Each group consists of $2^m$ agents, each of whom values a distinct combination of the goods. Consider first an allocation that gives exactly $m'$ goods to each group, and fix a group. We claim that the fraction of the agents in the group whose utilities for some two bundles differ by at most $c$ converges to 0 for large $m'$. Indeed, this follows from the central limit theorem: Fix two bundles and consider a random agent from the group; let $X$ be the random variable denoting the (possibly negative) difference between the agent's utilities for the two bundles.  Then $X$ is a sum of $m'$ independent and identically distributed random variables with mean 0. The central limit theorem implies that for any fixed $\epsilon>0$, there exists a constant $d$ such that $\Pr[|X|\leq c]\leq \Pr[|X|\leq d\sqrt{m'}]\leq \epsilon$ for any sufficiently large $m'$. 
	
	Taking the union bound over all pairs of bundles, we find that the fraction of agents in the group who value some two bundles within $c$ of each other approaches 0 as $m'$ goes to infinity. This means that all but a negligible fraction of the agents find only one bundle to be EF$c$. By symmetry, $1/k$ of these agents find the bundle allocated to the group to be EF$c$. It follows that the fraction of agents in the group for whom the allocation is EF$c$ converges to $1/k$.
	
	It remains to consider the case where the allocation does not give the same number of goods to all groups. In this case, let $\mathcal{G}$ denote the set of bundles with the smallest number of goods, which must be strictly smaller than $m'$ goods. If we move goods from bundles with more than $m'$ goods to bundles in $\mathcal{G}$ in such a way that the number of goods in each bundle in $\mathcal{G}$ increases by exactly one, the fraction of agents in an arbitrary group that receives a bundle in $\mathcal{G}$ who finds the allocation to be EF$c$ can only increase. We can repeat this process, at each step possibly adding bundles to $\mathcal{G}$, until all bundles contain the same number of goods, which is the case we have already handled. Since the fraction of agents for whom the allocation is EF$c$ is bounded above by $1/k$ for large $m'$ in the latter allocation, and this fraction only increases during our process of moving goods, the same is true in the original allocation for the groups with less than $m'$ goods.
\end{proof}
~
\subsection{Approximate envy-freeness}
When there are two groups, the protocol in Theorem~\ref{monotonic-positive-1/2} is computationally efficient and yields an allocation that is both approximately envy-free and approximately proportional. In this subsection and the next, we present two ways of generalizing the result to multiple groups: one keeps the approximate envy-freeness guarantee but loses computational efficiency, while the other keeps only the approximate proportionality guarantee but also retains computational efficiency.

Our first theorem establishes the existence of a $1/k$-democratic EF2 allocation for $k$ groups with arbitrary monotonic valuations. Note that by Proposition~\ref{kgroups-binary-negative-EFc}, the factor $1/k$ cannot be improved.

\begin{theorem}
	\label{kgroups-monotonic-positive-EF}
	For $k$ groups with agents having arbitrary monotonic valuations, there exists an allocation that is $1/k$-democratic EF2.
\end{theorem}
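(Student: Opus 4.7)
The plan is to prove the theorem by induction on $k$, generalizing the leftmost-block construction of Theorem~\ref{monotonic-positive-1/2}. First I would fix an arbitrary ordering $g_1,\dots,g_m$ of the goods and prove the stronger claim that there exists a partition of the line into $k$ contiguous blocks (in left-to-right order), together with an assignment of blocks to groups, that is EF2 for at least $\lceil n_\ell/k\rceil$ members of each group $A_\ell$. The base cases $k=1$ and $k=2$ are trivial and follow from Theorem~\ref{monotonic-positive-1/2} respectively (since EF1 implies EF2).

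For the inductive step I would grow a leftmost block $L$ one good at a time starting from $\emptyset$. For each group $A_\ell$ define $\alpha_\ell(L)$ as the fraction of agents $a\in A_\ell$ satisfying $u_{a}(L)\ge u_{a}((G\setminus L)\setminus C)$ for some $C\subseteq G\setminus L$ with $|C|\le 2$. By monotonicity of the utilities, $\alpha_\ell(L)$ is weakly increasing in $L$, and it equals $1$ when $L=G$, so there is a smallest block $L^*$ for which $\alpha_{\ell^*}(L^*)\ge 1/k$ for some index $\ell^*$. I would assign $L^*$ to $A_{\ell^*}$. The key subset-monotonicity observation is this: if $u_{a}(L^*)\ge u_{a}((G\setminus L^*)\setminus C)$ for some $|C|\le 2$, then for any $B'\subseteq G\setminus L^*$ one also has $u_{a}(L^*)\ge u_{a}(B'\setminus (C\cap B'))$ with $|C\cap B'|\le 2$. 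Hence the $\ge n_{\ell^*}/k$ happy members of $A_{\ell^*}$ remain EF2-satisfied against the whole rest of the allocation, regardless of how $G\setminus L^*$ is later subdivided. I would then apply the inductive hypothesis to the instance $(G\setminus L^*,\, A\setminus A_{\ell^*})$.

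The main obstacle is that the recursive allocation only guarantees EF2 of each remaining group's happy members against the \emph{other} recursively-assigned bundles, not against $L^*$ itself. To close this gap I would exploit the minimality of $L^*$: just before adding the last good $g$, every remaining group $A_{\ell'}$ had fewer than a $1/k$ fraction of members finding $L^*\setminus\{g\}$ EF2 against its complement, so more than a $(k-1)/k$ fraction satisfy $u_a(L^*\setminus\{g\})<u_a(G\setminus L^*)$ (obtained by specializing the removable pair in the failing EF2 condition to include $g$). Combined with the second unit of slack in EF2---namely, one can remove from $L^*$ not only $g$ but a second good---this should translate, via monotonicity of $u_a$ and the containment of each recursive sub-bundle in $G\setminus L^*$, to EF2 of the sub-bundle against $L^*$. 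The delicate point is that I cannot afford a union bound between the recursive happy set and the ``dominates $L^*\setminus\{g\}$'' set; the two properties must hold for the \emph{same} $\ge n_{\ell'}/k$ members of each $A_{\ell'}$. I would therefore strengthen the inductive hypothesis so that the recursion carries along any previously-committed outer block $L^*$ and produces $1/k$-democratic EF2 against both the co-recursive bundles \emph{and} this outer block simultaneously; the minimality argument above is precisely what feeds this strengthened hypothesis through the recursion. The resulting procedure runs in exponential time, matching the non-polynomial running time foreshadowed in the introduction, because the minimal $L^*$ at each level may need to be recomputed once the stronger cross-block constraint is imposed.
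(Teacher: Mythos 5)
Your inductive leftmost-block construction has a genuine gap at exactly the point you flag as ``delicate,'' and it is not repairable within this framework. The first half is fine: by monotonicity, any member of $A_{\ell^*}$ who finds $L^*$ EF2 against all of $G\setminus L^*$ also finds it EF2 against every later sub-bundle. The problem is the reverse direction. Minimality of $L^*$ tells you that for more than a $(k-1)/k$ fraction of each remaining group, $u_a(L^*\setminus\{g\})<u_a\bigl((G\setminus L^*)\setminus C\bigr)$ for every $C$ with $|C|\le 2$, and in particular $u_a(L^*\setminus\{g\})<u_a(G\setminus L^*)$. But this compares $L^*$ (minus one or two goods) with the \emph{entire} remainder, whereas EF2 requires comparing it with the single sub-bundle $X_{\ell'}\subseteq G\setminus L^*$ that group $A_{\ell'}$ eventually receives. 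The recursion only guarantees that $X_{\ell'}$ is EF2 relative to the \emph{other recursive bundles}; for an agent whose value inside $G\setminus L^*$ is spread across the $k-1$ remaining bundles this yields, even in the additive case, only $u_a(X_{\ell'})\gtrsim u_a(G\setminus L^*)/(k-1)$ minus a couple of goods, which is a factor of $k-1$ short of dominating $u_a(L^*)$. The ``second unit of slack'' in EF2 removes one more good, not a $(1-1/(k-1))$ fraction of the value, so it cannot close this gap. This is precisely why the leftmost-block recursion in the paper (Theorem~\ref{kgroups-additive-positive-proportional}) delivers only \prop{(k-1)} for $k\ge 3$ and explicitly loses the envy-freeness guarantee. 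Your proposed strengthened inductive hypothesis --- requiring the recursive bundles to be EF2 against the committed outer block as well --- is the right thing to want, but you give no mechanism for achieving it: the last group in the recursion receives whatever is left over, and nothing forces that leftover to dominate $L^*$ up to two goods for the right $1/k$ fraction of its members. Note also that specializing your argument to singleton groups would give an elementary greedy proof that contiguous EF2 allocations exist for $k$ individual monotonic agents, a statement whose only known proofs go through topological (Sperner-type) arguments; that is strong evidence the greedy route cannot work.

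The paper's actual proof takes an entirely different route: it starts from the Sperner-lemma-based existence proof of Bil\`{o} et al.\ for contiguous EF2 allocations among $k$ individual monotonic agents, and lifts it to groups by letting each group answer the ``which part do you prefer'' queries with plurality voting among its members. At the Sperner subsimplex produced by Bapat's generalization of Sperner's lemma, the pigeonhole principle guarantees that at least $1/k$ of each group agrees with its representative's winning choice, and those agents find the united partition EF2.
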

\begin{proof}
	\citet{bilo2018almost} proved that for any $k$ agents with monotonic valuations, if the goods lie on a line, there exists an EF2 allocation that gives each agent a contiguous block on the line. We present their proof briefly (keeping only the details required for our purposes), and then show how to adapt their proof to the group setting.
	
	Bil\`{o} et al.'s protocol considers a $k$-vertex simplex in $\mathbb{R}^k$ that is triangulated to smaller $k$-vertex sub-simplices. 
	It identifies each vertex of the triangulation with a \emph{sub-partition}, i.e., a partition of some subset of $G$ into $k$ parts (some goods might not appear in any part). This identification has the following properties:
\begin{enumerate}
\item In each main vertex $i\in \{1,\ldots,k\}$ of the main simplex, part $i$ contains all of $G$ while the other $k-1$ parts are empty.
\item In each face spanned by main vertices $i_1,\ldots,i_l$ of the main simplex, only parts $i_1,\ldots,i_l$ contain goods, while the other $k-l$ parts are empty.
\item In each small sub-simplex of the triangulation, consider the $k$ sub-partitions attached to its $k$ vertices. For each good $g\in G$, there is a unique $i\in \{1,\ldots,k\}$ such that 
\begin{enumerate}[(a)]
\item $g$ belongs to part $i$ in at least one of these $k$ sub-partitions;
\item $g$ does not belong to another part in any of these $k$ sub-partitions.
\end{enumerate}
\item For each part $i\range{1}{k}$ in any such sub-partition, there are at most two goods that do not belong to this part $i$ but belong to part $i$ in some other sub-partition. 
\end{enumerate}
	For instance, if there are $k=4$ agents and $m=12$ goods, a possible set of four sub-partitions that satisfy properties 3 and 4 is the following:
	\begin{itemize}
		\item $(\{g_1,g_2,g_3\},\{g_4,g_5,g_6\},\{g_7,g_8,g_9\},\{g_{11},g_{12}\})$
		\item $(\{g_1,g_2,g_3\},\{g_5,g_6\},\{g_7,g_8,g_9\},\{g_{11},g_{12}\})$
		\item $(\{g_1,g_2,g_3\},\{g_5,g_6\},\{g_8,g_9\},\{g_{11},g_{12}\})$
		\item $(\{g_1,g_2,g_3\},\{g_5,g_6\},\{g_8,g_9,g_{10}\},\{g_{11},g_{12}\})$
	\end{itemize}
	
For each vertex of the triangulation, the protocol asks each agent which of the $k$ parts in the attached sub-partition he prefers the most, and labels the vertex with the answers. Since the agents' valuations are monotonic, we can assume that an agent never prefers an empty bundle to a non-empty bundle. Hence, by property 1, each main vertex $i$ only has label $i$. By property 2, all face vertices are labeled only with labels from the endpoints $i_1,\dots,i_l$ of the face. Thus, each agent's labeling satisfies Sperner's boundary condition. Therefore, by \citet{Bapat89}'s generalization of Sperner's lemma, there exists a sub-simplex and a matching of its vertices to the agents such that, in the vertex matched to agent $i$, agent $i$ prefers part $i$. By property 3, if we unite all parts numbered $i$ in all the $k$ sub-partitions of the sub-simplex, we get a partition of $G$. In the example above, the united partition is:
\begin{itemize}
\item $(\{g_1,g_2,g_3\},\{g_4,g_5,g_6\},\{g_7,g_8,g_9,g_{10}\},\{g_{11},g_{12}\})$.
\end{itemize} 
By property 4, in this united partition, each part $j$ is larger than the corresponding parts $j$ in the sub-partitions by at most two goods. Therefore, each agent $i$ finds part $i$ better than any other part $j$, up to at most two goods. 
	As a result, giving part $i$ of the united partition to agent $i$ yields an EF2 allocation.
	
	To adapt Bil\`{o} et al.'s proof to the group setting, we define a \emph{representative} for each group, and run their protocol on the $k$ representatives. 
	Whenever a representative is asked to select a best part, he uses \emph{plurality voting} among the group members, and answers by specifying the index of the part preferred by the largest number of members (breaking ties arbitrarily).
	Since each agent prefers a non-empty bundle to an empty bundle, the representative also prefers a non-empty bundle to an empty bundle. Hence the representative's answers satisfy Sperner's boundary conditions.
	This means that there exists a sub-simplex in which, in each vertex $i$, the representative of group $i$ prefers part $i$. 
	By the pigeonhole principle, in the sub-partition attached to this vertex $i$, at least $1/k$ of the members in group $i$ prefer part $i$. In the united partition, these members 
	find part $i$ better than any other part up to at most 2 goods.
	Therefore, giving part $i$ to group $i$ yields a $1/k$-democratic EF2 allocation.
\end{proof}

\begin{example}
Consider an instance with $k=4$ groups with 5 agents in each group. Consider the subsimplex whose four corners correspond to the four sub-partitions shown above. Suppose the outcomes of the plurality votes are as follows:
\begin{itemize}
\item
In allocation $(\{g_1,g_2,g_3\},\{g_4,g_5,g_6\},\{g_7,g_8,g_9\},\{g_{11},g_{12}\})$, in group 1: $\{g_1,g_2,g_3\}$ wins two votes while each of the other bundles wins one vote.
\item
In allocation $(\{g_1,g_2,g_3\},\{g_5,g_6\},\{g_7,g_8,g_9\},\{g_{11},g_{12}\})$, in group 2: $\{g_5,g_6\}$ wins two votes while each of the other bundles wins one vote.
\item
In allocation $(\{g_1,g_2,g_3\},\{g_5,g_6\},\{g_8,g_9\},\{g_{11},g_{12}\})$, in group 3: $\{g_8,g_9\}$ wins two votes while each of the other bundles wins one vote.
\item
In allocation $(\{g_1,g_2,g_3\},\{g_5,g_6\},\{g_8,g_9,g_{10}\},\{g_{11},g_{12}\})$, in group 4: $\{g_{11},g_{12}\}$ wins two votes while each of the other bundles wins one vote.
\end{itemize}
Then,  in the final allocation, group 1 gets $\{g_1,g_2,g_3\}$, group 2 gets $\{g_4,g_5,g_6\}$, group 3 gets $\{g_7,g_8,g_9,g_{10}\}$ and group 4 gets 
$\{g_{11},g_{12}\}$. In each group, at least two members believe that their group's bundle is better than all other bundles in one of the sub-partitions.  These same agents believe that their group's bundle is better-up-to-2-goods than all other bundles in the final allocation. Hence, the allocation is $2/5$-democratic EF2 (and therefore also $1/4$-democratic EF2). 
\qed
\end{example}

In the special case that all agents have \emph{binary} valuations, the fairness guarantee can be improved to EF1 by adapting the proof for the individual setting \citep{BarreraNyRu15,Suksompong17} to the group setting. The adaptation uses plurality voting similarly to  Theorem~\ref{kgroups-monotonic-positive-EF}. We only state the result here and defer the proof to Appendix~\ref{app:proof-kgroups}.

\begin{theorem}
	\label{kgroups-binary-positive-EF}
	For $k$ groups in which all agents have binary valuations, there exists an allocation that is $1/k$-democratic EF1, \prop{(k-1)} and MMS-fair.
\end{theorem}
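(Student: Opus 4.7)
The plan is to adapt the Sperner-lemma proof of Theorem \ref{kgroups-monotonic-positive-EF}, replacing Bil\`{o} et al.'s ``at most two goods per part'' triangulation (property 4) by its binary-valuations counterpart that gives only one-good slack. For individual binary agents with the goods arranged on a line, \citet{BarreraNyRu15} and \citet{Suksompong17} produce a triangulation of the $k$-vertex simplex whose vertices are labeled with sub-partitions of $G$ satisfying properties (1)--(3) of Theorem \ref{kgroups-monotonic-positive-EF} together with a refined property (4$'$): in each small sub-simplex and for each part index $i$, the union of the $k$ copies of part $i$ across the sub-simplex's vertices exceeds any individual copy by at most one good. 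Binarity is precisely what enables this tighter bound. I would take their construction as a black box.

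To lift this to groups, at each triangulation vertex I would replace the individual ``pick your favorite part'' query by a plurality vote among the group's members, with the group's representative reporting the winning part. Since every binary agent weakly prefers a non-empty bundle to an empty one, the representative's answers still satisfy Sperner's boundary condition, and by \citet{Bapat89}'s generalization of Sperner's lemma there exists a sub-simplex together with a matching of its vertices to the representatives such that at the vertex matched to representative $i$, that representative votes for part $i$. By the pigeonhole principle, at least $n_i/k$ members of group $i$ voted for part $i$ at that vertex; each of these members weakly prefers part $i$ to every other part in the attached sub-partition. Property (4$'$) then implies that in the united partition these members find part $i$ envy-free up to at most one good relative to every other part, yielding $1/k$-democratic EF1.

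For the two remaining guarantees I would invoke the implications already established in Section \ref{sec:model}. Binary valuations are additive, so Lemma \ref{EF1-is-PROP} promotes EF1 to \prop{(k-1)} pointwise, and Lemma \ref{PROP-is-MMS}(c) further promotes \prop{(k-1)} to MMS-fairness for binary agents. Hence the same $1/k$-fraction of members in each group is simultaneously EF1, \prop{(k-1)} and MMS-fair.

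The main obstacle is the first step: confirming that the individual-setting constructions of \citet{BarreraNyRu15} and \citet{Suksompong17} really do yield a Sperner labeling with the refined property (4$'$), and that the passage from an individual picking her favorite part to a group picking via plurality does not disturb the boundary condition (the latter follows from the non-negativity of utilities, which ensures that a plurality-winning bundle is never empty when a non-empty one is available, and hence that each main vertex and each face of the simplex is labeled only with the expected indices). Once this technical transfer is made, the subsequent pigeonhole argument and the chain of implications via Lemmas \ref{EF1-is-PROP} and \ref{PROP-is-MMS} are routine.
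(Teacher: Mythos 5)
There is a genuine gap, and it sits exactly where you flagged ``the main obstacle'': the refined property (4$'$) you attribute to \citet{BarreraNyRu15} and \citet{Suksompong17} is not something those works provide, and it almost certainly cannot hold as a purely combinatorial property of a triangulation. Their EF1-type results for additive agents are \emph{not} proved by tightening Bil\`{o} et al.'s two-good slack to a one-good slack in the discrete Sperner labeling; they are proved by reducing to \emph{contiguous envy-free cake-cutting} (Stromquist/Su) and then rounding the cut points to good boundaries. Note also that if a valuation-independent triangulation with one-good slack existed, the very same Sperner argument you describe would immediately yield $1/k$-democratic EF1 for \emph{arbitrary monotonic} valuations and every $k$, which the paper explicitly states is unknown for $k\geq 5$ and which Bil\`{o} et al.\ could only obtain for $k\leq 4$ via a much more involved argument. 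So the black box you are invoking is empty, and the binary structure cannot be exploited at the level of the triangulation: it has to be exploited at the level of \emph{values}, via integrality.

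The paper's actual proof (Appendix~\ref{app:proof-kgroups}) does this as follows. Lemma~\ref{kgroups-positive-cake} inserts plurality voting into Su's Sperner argument on the space of \emph{contiguous cake partitions} to obtain a $1/k$-democratic envy-free contiguous cake division. Lemma~\ref{kgroups-additive-positive-EF} converts the goods instance to a piecewise-constant cake, applies this, and rounds each cut to a good boundary; each group's bundle then loses and gains strictly less than the value of one good, so each satisfied agent's envy is strictly below $2\,\uijmax$ (``EF-minus-2''). For binary agents $\uijmax\leq 1$ and utilities are integers, so envy strictly below $2$ means envy at most $1$, i.e., EF1. Your final step---promoting EF1 to \prop{(k-1)} via Lemma~\ref{EF1-is-PROP} and then to MMS-fairness via Lemma~\ref{PROP-is-MMS}(c)---is correct and coincides with the paper; and your observation that plurality voting preserves Sperner's boundary condition is the right group-lifting idea, it just needs to be applied to the continuous partition simplex rather than to a discrete triangulation with a slack that does not exist.
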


By Proposition \ref{kgroups-binary-negative-EFc}, the factor $1/k$ is again tight.

\begin{remark}
	\citet{bilo2018almost} proved that, for $k\leq 4$ agents, the fairness guarantee can be improved from EF2 to EF1 even when the agents have arbitrary monotonic valuations. 
	It is possible that this result can be adapted to the group setting using plurality voting in a similar manner as in the previous two theorems. This would mean that for $k\leq 4$ groups of agents with arbitrary monotonic valuations, there exists a $1/k$-democratic EF1 allocation. However, their proof is rather involved and we have not been able to verify that our reduction works for it. 
	Likewise, for $k\geq 5$ groups we do not know whether $1/k$-democratic EF1 is attainable.
\end{remark}

\citet{BarreraNyRu15}, \citet{Suksompong17} and \citet{bilo2018almost} did not provide efficient algorithms for computing the corresponding approximate envy-free allocations. 
It is an interesting question whether such allocations can be found in polynomial time, both for the individual setting and the group setting.

\subsection{Approximate proportionality}
In this subsection, we show that if the fairness requirement is weakened from approximate envy-freeness to approximate proportionality, a $1/k$-democratic fair allocation can be attained in time polynomial in the input size.

\begin{theorem}
\label{kgroups-additive-positive-proportional}
When agents have additive valuations, it is possible to efficiently compute 
a $1/k$-democratic \prop{(k-1)} allocation.%
\footnote{
Note that \prop{(k-1)} is slightly stronger than the proportionality relaxations considered by  \cite{ConitzerFrSh17}, \cite{AzizCaIg18}, and \citet{Suksompong17}. 
By our Lemma \ref{PROP-is-MMS}, it also implies 
$1/k$-fraction-MMS-fairness, $1$-out-of-$(2 k-1)$ MMS-fairness, and (if all agents have binary utilities) also MMS-fairness.}
\end{theorem}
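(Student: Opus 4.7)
The plan is to adapt the standard round-robin protocol---in which $k$ additive agents take turns picking their most preferred remaining good, producing an EF1 and hence (by Lemma~\ref{EF1-is-PROP}) a \prop{(k-1)} allocation for each agent---to the group setting. The idea is to have each group $A_i$ delegate all of its picks to a single \emph{representative} $a_{i,j_i^{\star}}\in A_i$, and then run standard round-robin among the $k$ representatives, each picking on behalf of her group. The individual-setting result immediately yields that the resulting allocation is \prop{(k-1)} for every representative; the nontrivial content is choosing the representatives so that this guarantee extends to at least $n_i/k$ members of each group.

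I would select representatives greedily, one group at a time. When it is group $A_i$'s turn to choose, I would fix the already-chosen representatives of $A_1,\ldots,A_{i-1}$ together with arbitrary tentative representatives for $A_{i+1},\ldots,A_k$, then enumerate the $n_i$ candidate representatives in $A_i$ and pick the one for which the most agents in $A_i$ find the resulting round-robin allocation \prop{(k-1)}. Since a single round-robin simulation runs in time $O(km)$, checking \prop{(k-1)} for a fixed allocation is polynomial, and there are at most $n_i$ candidates per group, the whole selection procedure runs in time polynomial in the input size.

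The main claim to be proved is that, at each group's selection step, the best candidate serves at least $n_i/k$ agents of $A_i$. My plan is a pigeonhole argument inside $A_i$: classify each agent $a_{i\ell}\in A_i$ by a coarse ordinal feature of her preferences---for instance, by the position (among the $k$ turns per cycle) at which $a_{i\ell}$'s favorite good would first be picked in any reasonable round-robin execution. This partitions $A_i$ into at most $k$ classes, so one class contains at least $n_i/k$ agents. If the representative is drawn from that class, I would then argue that every class-mate inherits \prop{(k-1)} from the representative, exploiting the fact that \prop{(k-1)} is a statement about averaged values---each of the representative's picks beats the corresponding picks of the other $k-1$ groups on average---which is robust to ordinal perturbations, with the slack absorbed by the ``$k-1$ removed goods'' clause (one good per other group's first pick, as in the proof of Lemma~\ref{EF1-is-PROP}).

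The main obstacle I foresee is making the classification precise enough to simultaneously guarantee (a) at most $k$ classes and (b) \prop{(k-1)}-inheritance for all class-mates. I expect this to be the bulk of the technical work, whereas both the protocol itself and its polynomial running time follow directly once the classification is in place.
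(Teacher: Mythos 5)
Your proposal has a genuine gap at exactly the step you flag as ``the bulk of the technical work'': the classification-plus-inheritance claim is not just unproven, it is false as described. Round-robin run by a single representative guarantees EF1 (hence \prop{(k-1)}) only for the representative herself; every other member of the group receives a bundle assembled according to someone else's preferences. A coarse ordinal feature such as ``the turn at which my favorite good is first picked'' cannot control whether an agent finds the final bundle \prop{(k-1)}, because that is a global condition on her value for the \emph{entire} bundle versus the entire remainder. Concretely, take $k=2$ and let group $1$ contain $n$ agents with mutually disjoint desired goods (binary valuations, two desired goods each): any representative picks only goods she values, her class-mates under any $k$-way classification share nothing relevant with her, and the other group can then capture both desired goods of almost every non-representative agent. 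No choice of representative, greedy or otherwise, rescues this, so the pigeonhole over $k$ classes does not deliver $n_i/k$ satisfied agents. (The enumeration over $n_i$ candidates is also not obviously well-founded, since the quality of a candidate for $A_i$ depends on the not-yet-chosen representatives of $A_{i+1},\ldots,A_k$.)

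The paper's proof takes a different route that sidesteps representatives entirely. It arranges the goods on a line and grows a block from left to right; after each addition, a group ``accepts'' the block if at least $1/k$ of its members find it \prop{(k-1)}, and the first accepting group takes the block. The crux is the argument for the last group: for more than $1/k$ of its members, \emph{none} of the earlier truncated blocks $X_1,\ldots,X_{k-1}$ (each block minus its last good $y_i$) was acceptable, i.e.\ $u(X_i) < \frac{1}{k}u(G\setminus C)$ with $C=\{y_1,\ldots,y_{k-1}\}$; since the $X_i$ together with the leftover $X_k$ partition $G\setminus C$, additivity forces $u(X_k) > \frac{1}{k}u(G\setminus C)$ for those members. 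This ``if nothing earlier was good enough, the leftover must be'' averaging argument is what replaces the inheritance step your plan is missing, and it is where the $k-1$ removed goods actually come from (one per earlier block), rather than from the round-robin structure.
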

\begin{proof}
Similarly to the previous subsection, we use an existing algorithm for fair division among individuals \citep{Suksompong17} and adapt it to groups. We first describe the individual division algorithm.

We arrange the goods in a line and process them from left to right. Starting from an empty block, we add one good at a time. After each addition we ask each agent whether the current block is \prop{(k-1)}.
If one or more agents answer ``yes'', an arbitrary one among them receives the current block, and the process continues with the remaining $k-1$ agents. Finally, the remaining goods are given to the last agent. To prove that the allocation is fair, it is sufficient to prove that the remaining goods indeed satisfy \prop{(k-1)} for the last agent.

Number the agents $1,2,\dots,k$ according to the order in which they receive their bundle. For each $i\range{1}{k-1}$, 
denote the bundle already allocated to agent $i$ by $X_i\cup \{y_i\}$, where $y_i$ is the rightmost good (the last good added to the bundle) and $X_i$ is the set of the other goods in the bundle. By our assumptions, the bundles $X_1,\ldots,X_{k-1}$ do not satisfy \prop{(k-1)} for agent $k$.

The remaining bundle is $X_k = G\setminus\left( \cup_{i=1}^{k-1}X_i \bigcup \cup_{i=1}^{k-1}\{y_i\}\right)$. We claim that $X_k$ satisfies \prop{(k-1)} for agent $k$.
Specifically, we denote $C := \{y_1,\ldots,y_{k-1}\}$, note that $C\subseteq G\setminus X_k$, and claim that $u_k(X_k)\geq \frac{1}{k}u_k(G\setminus C)$.

Indeed, for all $i\range{1}{k-1}$: $C\subseteq G\setminus X_i$, and the bundle $X_i$ does not satisfy \prop{(k-1)}, so $u_k(X_i) < \frac{1}{k}u_k(G\setminus C)$. But $\cup_{i=1}^k X_i = G\setminus C$, so  
$\sum_{i=1}^k u_k(X_i) = u_k(G\setminus C)$.
Hence $u_k(X_k)>\frac{1}{k}u_k(G\setminus C)$.

We use the same algorithm for dividing the goods among groups.
Whenever a group is asked whether the current block satisfies \prop{(k-1)}, it poses this question to its members, and answers ``yes'' if at least $1/k$ of the agents answer yes.
To prove that the allocation is fair, it is sufficient to prove that the remaining goods satisfy \prop{(k-1)} for at least $1/k$ of the agents in the $k$th group. 
For each $i\range{1}{k-1}$, 
denote the bundle already allocated to group $i$ by $X_i\cup \{y_i\}$, where $y_i$ is the rightmost good and $X_i$ consists of the other goods. By our assumptions, the bundles $X_i$ satisfy \prop{(k-1)}
for less than $1/k$ of the members in group $k$. 
Hence, for more than $1/k$ of the members in group $k$, \emph{none} of the bundles $X_1,\ldots,X_{k-1}$ satisfies \prop{(k-1)}. For these members, $X_k$ must satisfy the fairness criterion due to the same argument as above. Hence, the allocation is $1/k$-democratic \prop{(k-1)}.
\end{proof}

An example run of this algorithm is shown in Appendix \ref{sample:line3}.


\subsection{1-of-best-$c$ fairness}
The results in the previous subsections guarantee $1/k$-democratic fairness, which means that the fraction of happy agents approaches 0 when the number of groups grows to infinity.
In this subsection we show that the 
RWAV protocol of Section \ref{sub:binary-positive} 
can be adapted to attain $h$-democratic 1-of-best-$c$ fairness where $h$ approaches a positive constant ($1/3$) even when $k\to \infty$.

\subsubsection{RWAV protocol for $k$ groups}
The round-robin protocol for $k$ groups proceeds exactly the same as for two groups: each group in its turn picks a good until all goods are taken.
As in Section \ref{sub:binary-positive}, each group picks a good using weighted approval voting among its members.
The only difference is in the weight-function $w(r,s)$.
For each $k\geq 2$, we use the following weight function (where $L_k := 2^{1/(k-1)}$):
\begin{align*}
w_k(r, 0) &= 0  \hspace{23.2mm} \forall r\in \mathbb{Z};
\\
w_k(r, 1) &= 
\begin{cases}
(L_k-1)/L_k^{r}  & r\geq 1;
\\
0  & r\leq 0.
\end{cases}
\end{align*}
Currently,  we know how to define the weights only for $s\leq 1$; hence we can handle only 1-of-best-$c$ fairness, for which $s\leq 1$.
Note that for $k=2$ we get $L_k=2$ and $w_k(r,1)=1/2^{r}$, which matches the calculation in Example \ref{exm:binary-positive-1ofbestc}.

\subsubsection{Analysis of RWAV}
Below we extend the analysis of Section \ref{sub:binary-positive} to $k$ groups. 
First, we extend the budget function $B(r,s)$ to:
\begin{align*}
B_k(r, 0) &= 1            \hspace{17.6mm} \forall r\in \mathbb{Z};
\\
B_k(r, 1) &= 
\begin{cases}
1 - 1/L_k^r  & r\geq 0;
\\
0  & r\leq 0.
\end{cases}
\end{align*}
Note that, analogously to the case $k=2$, we have:
\begin{align*}
w_k(r,s) = B_k(r,s) - B_k(r-1,s) && \forall s\in\{0,1\}, r\in\mathbb{Z}.
\end{align*}
We add to the protocol the following payment steps:
\begin{itemize}
\item \emph{Initialization}: each member $j$ of group $i$ pays $B_k(r_j,s_j)$ to group $i$.
\item Whenever group $i$ picks a good $g$, every member $j$ of group $i$ who wants $g$ pays $1-B_k(r_j,s_j)$ to group $i$.
\item Whenever a group $i'\neq i$ picks a good $g$, every member $j$ of group $i$ who wants $g$ receives $w_k(r_j,s_j)$ from group $i$.
\end{itemize}

The analogue of Lemma \ref{lem:balance} is:
\begin{lemma}
\label{lem:balance-k}
During the protocol, the balance of each agent $j$ is always $-B_k(r_j,s_j)$.
\end{lemma}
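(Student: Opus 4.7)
The plan is to mirror the inductive proof of Lemma \ref{lem:balance}, but since here $s_j \in \{0,1\}$ throughout (because 1-of-best-$c$ fairness requires only $s\leq 1$), the case analysis simplifies considerably and the $\min$ in the recurrence does not appear.

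I would induct on the sequence of events in the protocol. For the base case, the initialization step charges each member $j$ exactly $B_k(r_j,s_j)$, so the balance starts at $-B_k(r_j,s_j)$. For the inductive step, I fix an agent $j$ in some group $i$ and consider what happens at a single turn. If the group picking is $i'\neq i$ and $j$ does not want the chosen good, or if $i'=i$ and $j$ does not want the chosen good, then neither the balance nor $(r_j,s_j)$ changes, and the claim is preserved trivially.

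The two nontrivial cases are handled as follows. First, suppose some group $i'\neq i$ picks a good $g$ wanted by $j$. Then $r_j$ decreases by $1$ and $s_j$ stays the same, and $j$ receives $w_k(r_j,s_j)$ from group $i$. The new balance is
\[
-B_k(r_j,s_j) + w_k(r_j,s_j) \;=\; -B_k(r_j,s_j) + \bigl[B_k(r_j,s_j) - B_k(r_j-1,s_j)\bigr] \;=\; -B_k(r_j-1,s_j),
\]
exactly the required value. Second, suppose group $i$ itself picks a good $g$ that $j$ wants. Then $r_j\to r_j-1$ and $s_j\to \max(s_j-1,0)$, and $j$ pays $1-B_k(r_j,s_j)$, bringing the balance to $-1$. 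Since $s_j\in\{0,1\}$, the new $s_j$ is $0$, so the desired value is $-B_k(r_j-1,0)=-1$ (using the boundary definition $B_k(r,0)=1$), and the claim is preserved.

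The only potential obstacle is being careful with the edge cases $s_j=0$: such agents have $w_k(r_j,0)=0$ and $1-B_k(r_j,0)=0$, so all payments to or from them vanish and their balance stays fixed at $-B_k(r_j,0)=-1$; this is automatically consistent with the formula since $B_k(\cdot,0)\equiv 1$. Once these cases are verified, the inductive step covers every event of the protocol and the lemma follows.
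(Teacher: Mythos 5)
Your proof is correct and follows essentially the same route as the paper's: induction over the protocol's events, using the identity $w_k(r,s)=B_k(r,s)-B_k(r-1,s)$ when another group takes a desired good, and observing that the payment $1-B_k(r_j,s_j)$ drives the balance to $-1=-B_k(\cdot,0)$ when the agent's own group takes a desired good. Your explicit handling of the $s_j=0$ edge cases is a minor elaboration the paper leaves implicit, but the argument is the same.
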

\begin{proof}
We prove the claim for members of group 1; the proof for other groups is analogous.
We proceed by induction. The induction base is handled by the initialization.

After a group $i'\neq 1$ picks a good $g$,
the balance of each member who wants $g$ increases.
The new  balance of each such member with $r$ remaining goods is:
\begin{align*}
- B_k(r,s) + w_k(r,s) ~~=~~-B_k(r-1,s)
\end{align*}
and indeed, for each such member, $r$ drops by 1 while $s$ does not change.

After group 1 picks a good $g$, 
the balance of each member who values $g$ at 1 becomes $-1 = -B_k(r_j,0)$, and indeed,  for each such member, $s$ becomes 0.
\end{proof}

Below we will need the following inequality on $w_k$:
\begin{align}
\label{eq:wk-series}
&
\sum_{i=2}^k w_k(r-i+2,1) = 
w_k(r,1) + \cdots + w_k(r-k+2,1)
\\
\notag
\leq ~ &
\frac{L_k-1}{L_k^r}(1 + \cdots + L_k^{k-2}) && \text{(definition of $w_k$)}
\\
\notag
= ~ &
\frac{L_k-1}{L_k^r}\cdot
\frac{L_k^{k-1}-1}{L_k-1}
 && \text{(sum of geometeric series)}
\\
\notag
= ~ &
\frac{2-1}{L_k^r} = \frac{1}{L_k^r}
 && \text{(definition of $L_k$)}
\\
\notag
= ~ &
1 - B_k(r,1)
 && \text{(definition of $B_k$)}
\end{align}
 (In fact, $w_k$ was calculated precisely to satisfy this inequality; it holds with equality when $r\geq k-1$.)

The analogue of Lemma \ref{lem:balance-increasing} is:
\begin{lemma}
\label{lem:balance-increasing-k}
For each group $i$, 
after $k$ consecutive turns starting at the turn of group $i$,
the balance of group $i$ weakly increases.
\end{lemma}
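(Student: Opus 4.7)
The plan is to mimic the two-group proof of Lemma \ref{lem:balance-increasing}: express the balance change of group $i$ over the $k$-turn cycle as a single gain collected at turn $1$ minus losses paid out at turns $2,\ldots,k$, and then use the fact that group $i$ chose $g_i$ to maximize weight in order to absorb all the losses. The complication, absent when $k=2$, is that $w_k$ can \emph{grow} during the cycle: if a member $j$ of group $i$ has $s_j=1$ but does not want $g_i$, then each time another group carries off one of $j$'s remaining wants, $r_j$ drops by $1$ and so $w_k(r_j,1)=(L_k-1)/L_k^{r_j}$ is scaled by a factor of $L_k$. The calibration $L_k=2^{1/(k-1)}$, which makes the geometric sum in inequality~\eqref{eq:wk-series} telescope to exactly $1$, is precisely what will allow the argument to close.

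First I would set up notation. Let $g_1=g_i,g_2,\ldots,g_k$ be the goods picked at the $k$ turns of the cycle, let $A=\{j\in A_i : s_j=1\}$ be the members of group $i$ that are not yet satisfied at the start of the cycle, and let $W_t\subseteq A$ collect those who want $g_t$. Define $S_t := \sum_{j\in W_t} 1/L_k^{r_j}$ using the \emph{initial} values $r_j$. The gain at turn $1$ is exactly $S_1$, because each $j\in W_1$ pays $1-B_k(r_j,1)=1/L_k^{r_j}$ when its $s$-value drops to $0$. I would also observe that for any $j\in A\setminus W_1$ the value $s_j$ remains $1$ throughout the cycle (group $i$ does not pick again until after turn $k$), so only such members contribute to the losses of turns $2,\ldots,k$.

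Next I would bound the loss $L^{(t)}$ at each turn $t\geq 2$. The members paid are those in $W_t\setminus W_1$, each receiving $w_k(r_j^{(t)},1)=(L_k-1)L_k^{\ell_j^{(t)}}/L_k^{r_j}$, where $\ell_j^{(t)}$ is the number of $j$'s wants already picked before turn $t$. The key observation is that $\ell_j^{(t)}\leq t-2$: since $j\notin W_1$, turn $1$ did not cost $j$ a want, and only $t-2$ picks happened in turns $2,\ldots,t-1$. Combined with the weight-maximization at turn $1$ (which gives $S_1\geq S_t$, because $g_t$ was still available when group $i$ chose), this yields
\[
L^{(t)} \;\leq\; (L_k-1)\,L_k^{t-2}\sum_{j\in W_t\setminus W_1}\frac{1}{L_k^{r_j}} \;\leq\; (L_k-1)\,L_k^{t-2}\,S_1.
\]

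Finally, summing over $t=2,\ldots,k$ and invoking the identity $(L_k-1)\sum_{t=2}^k L_k^{t-2}=L_k^{k-1}-1=1$ (exactly the calculation in inequality~\eqref{eq:wk-series}) gives $\sum_{t=2}^k L^{(t)}\leq S_1$, matching the gain; hence the balance of group $i$ weakly increases over the cycle. The main obstacle I anticipate is precisely the bound $\ell_j^{(t)}\leq t-2$, which quietly uses both the exclusion $j\notin W_1$ (to rule out turn $1$ as a source of a loss for $j$) and the fact that at most one good is picked per turn. Once that is in place, the rest is the clean geometric identity baked into the definition of $L_k$.
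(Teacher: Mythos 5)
Your proof is correct and follows essentially the same route as the paper's: the gain at turn $1$ is matched against the losses of turns $2,\dots,k$ via the telescoping identity behind inequality~\eqref{eq:wk-series}, the bound $\ell_j^{(t)}\leq t-2$ for members outside $W_1$ is exactly the paper's observation that a member of $D_i\setminus D_1$ receives at most $w_k(r_j-i+2,1)$, and the scaling $w_k(r-i+2,1)=L_k^{i-2}w_k(r,1)$ together with the weight-maximization at turn $1$ closes the argument just as in the paper. The only difference is cosmetic bookkeeping (you aggregate turn-by-turn via the sums $S_t$, while the paper distributes the gain per member and then regroups by turn).
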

\begin{proof}
We prove the claim for group 1; the proof for other groups is analogous.
We calculate the change in the balance of group $1$ in a sequence of $k$ turns in which group $1$ picks a good $g_1$ and then the other groups pick goods $g_{2},\ldots,g_{k}$.
The members with $s_j=0$ have a weight of 0, so they neither pay nor receive anything; therefore we assume without loss of generality that all members have $s_j=1$.

Denote by $D_1$ the subset of group 1 members who desire $g_1$. Each member $j$ in $D_1$ pays to the group $1 - B_k(r_j,1)$, which by \eqref{eq:wk-series} above is at least:
\begin{align*}
\sum_{i=2}^k w_k(r-i+2,1).
\end{align*}


For $i\geq 2$, denote by $D_{i}$ the set of group 1 members who desire $g_i$.
Each member $j$ in $D_2\setminus D_1$ receives $w_k(r_j,1)$ from the group account.
Each member $j$ in $D_3\setminus D_1$ who had $r_j$ remaining goods at the start of the turn might now have only $r_j-1$ desired goods, so he may receive up to $\max[w_k(r_j-1,1),w_k(r_j,1)]$; since $w_k$ is a weakly decreasing function of $r$, this expression equals $w_k(r_j-1,1)$.
Similarly, for each $i\range{2}{k}$, each member in $D_i\setminus D_1$ may receive at most $w_k(r_j-i+2,1)$.
Therefore the total change in the group balance after the $k$ turns is:
\begin{align}
\notag
\Delta[Balance]
&
\geq
\sum_{j\in D_1}
\sum_{i=2}^k 
w_k(r_j-i+2,1)
-
\sum_{i=2}^k
\sum_{j\in D_{i}\setminus D_1}
w_k(r_j-i+2,1)
\\
\label{eq:DeltaBalance}
&
=
\sum_{i=2}^k 
\left(
\sum_{j\in D_1}
w_k(r_j-i+2,1)
-
\sum_{j\in D_{i}\setminus D_1} 
w_k(r_j-i+2,1)
\right).
\end{align}

Now, the group chose $g_1$ while $g_2,\ldots,g_k$ were still available, which means that the total weight of $g_1$ is weakly larger than the total weight of each of the other goods:
\begin{align*}
\forall i\range{2}{k}:
\sum_{j\in D_1} w_k(r_j,1)
\geq
\sum_{j\in D_{i}} w_k(r_j,1)
\geq
\sum_{j\in D_{i}\setminus D_1} w_k(r_j,1).
\end{align*}
By the expression for $w_k$, for every $i$, $w_k(r_j-i+2,1) = w_k(r_j,1)\cdot L_k^{i-2}$.
Therefore
$\sum w_k(r_j-i+2,1) = L_k^{i-2}\cdot \sum w_k(r_j,1)$, so the above inequality implies:
\begin{align*}
\forall i\range{2}{k}:
\sum_{j\in D_1} w_k(r_j-i+2,1)
\geq
\sum_{j\in D_{i}\setminus D_1} w_k(r_j-i+2,1).
\end{align*}
Substituting this in \eqref{eq:DeltaBalance} implies that $\Delta[Balance]\geq 0$.
\end{proof}

Lemma \ref{lem:rwav-end} is still true as-is in our case:
\emph{When the RWAV protocol ends, the balance of each group equals the number of its happy members.}
The proof is the same: when the protocol ends,
all happy members have $r=0, s=0$ so by Lemma \ref{lem:balance-k} their balance is $-B_k(0,0)=-1$, and all unhappy members have $r=0, s=1$ so by Lemma \ref{lem:balance-k} their balance is $-B_k(0,1)=0$. The group balance is the negative of the sum of its members' balances, which is exactly the number of happy members.

The analogue of Lemma \ref{binary-positive-general} is:
\begin{lemma}
\label{kgroups-binary-positive-general}
Let $c\geq 2$ be a constant. 
Suppose there are $k\geq 2$ groups with additive valuations. 
Then, the RWAV protocol 
yields an allocation that is 1-of-best-$c$ for at least 
a fraction $B_k(c-i+1,1)$ of the members in group $i$, for $i\range{1}{k}$.
\end{lemma}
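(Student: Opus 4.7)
The plan is to follow the template of Lemma~\ref{binary-positive-general} from the two-group setting, adapted to $k$ groups using the new potential function $B_k$ together with the auxiliary Lemmas~\ref{lem:balance-k} and~\ref{lem:balance-increasing-k}.

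First, I would reduce to a binary instance in which every agent desires exactly $c$ goods and requires $s_j=1$: for each agent, convert her top-$c$ goods to value $1$ and the rest to value $0$, so that 1-of-best-$c$ fairness in the original additive instance is equivalent to receiving at least one of these $c$ goods in the binary instance. To avoid boundary effects at the end of the protocol, I would also pad the good set with enough dummy goods (value $0$ for everyone) so that every window of $k$ consecutive turns starting at any group's turn lies entirely inside the protocol. Dummies receive zero total weight, are therefore picked only after every real good, and trigger no payments, so they do not affect the allocation of the real goods.

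Second, I would compute the balance of group $i$ just before its first pick (turn $i$). By the initialization step, each of its $n_i$ members starts with balance $-B_k(c,1)$, giving group $i$ initial balance $B_k(c,1)\cdot n_i$ before turn $1$. During turns $1,\dots,i-1$, the other groups pick, and each time a good desired by member $j$ is taken, $j$ receives $w_k(r_j,1)$ from group $i$. By Lemma~\ref{lem:balance-k}, member $j$'s balance stays equal to $-B_k(r_j,1)$, where $r_j$ is her current number of remaining desired goods. Since at most $i-1$ of $j$'s desired goods have been removed so far, $r_j\geq c-i+1$. Because $B_k(r,1)=1-L_k^{-r}$ is increasing in $r\geq 0$, each member contributes at least $B_k(c-i+1,1)$ to the group balance, giving a total of at least $B_k(c-i+1,1)\cdot n_i$.

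Third, I would apply Lemma~\ref{lem:balance-increasing-k} repeatedly across the successive $k$-turn windows starting at group $i$'s turn $i$. Thanks to the padding, the last such window fits entirely inside the protocol, so group $i$'s balance only grows from turn $i$ onward and is still at least $B_k(c-i+1,1)\cdot n_i$ at the end. The direct $k$-group analogue of Lemma~\ref{lem:rwav-end} (whose proof is identical: every happy member ends with $(r,s)=(0,0)$ and balance $-1$, every unhappy member with $(r,s)=(0,1)$ and balance $0$) then says that group $i$'s final balance equals its number of happy members. Dividing by $n_i$ yields the claimed fraction $B_k(c-i+1,1)$.

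The main obstacle is the second step: identifying the worst case where all $i-1$ previously picked goods lie within each member's top-$c$ set, and telescoping the chain of payments $w_k(c,1)+w_k(c-1,1)+\dots+w_k(c-i+2,1)=B_k(c,1)-B_k(c-i+1,1)$ to the exact drop in the group balance. Once the monotonicity of $B_k(\cdot,1)$ is in hand, the boundary issue in the last round is a minor technicality settled by the dummy-good padding.
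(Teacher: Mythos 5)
Your proposal is correct and follows essentially the same route as the paper: reduce to the binary top-$c$ instance with $s_j=1$, lower-bound group $i$'s balance at its first turn by $B_k(c-i+1,1)\cdot n_i$ via Lemma~\ref{lem:balance-k} and the monotonicity of $B_k(\cdot,1)$, then conclude with Lemma~\ref{lem:balance-increasing-k} and the $k$-group analogue of Lemma~\ref{lem:rwav-end}. The dummy-good padding is a harmless extra precaution (an incomplete final window starting with group $i$'s own pick already only helps the balance), and the paper additionally notes that the case $i>c$ is trivial since $B_k(c-i+1,1)\leq 0$, which your monotonicity argument covers implicitly.
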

\begin{proof}
Assume without loss of generality that each agent values his $c$ best goods at 1 and the other goods at 0. So for each agent we have $r=c$ and $s=1$.

For $i>c$, the value of $B_k(c-i+1,1)$ is non-positive and the statement holds trivially, so we may assume that $i\leq c$.
At the first turn of group $i$, each of its members has at least $c-i+1$ desired goods, so the group balance is at least $B_k(c-i+1,1)\cdot n_i$. 
From then on, by Lemma \ref{lem:balance-increasing-k}, the group balance weakly increases, so the final balance 
is at least $B_k(c-i+1,1)\cdot n_i$. 
By Lemma \ref{lem:rwav-end},
the final number of happy agents is at least $B_k(c-i+1,1)\cdot n_i$. 
\end{proof}

The analogue of Theorem \ref{binary-positive-1ofbestc} follows by substituting the expression for $B_k$:
\begin{theorem}
\label{kgroups-binary-positive-1ofbestc}
For every $c\geq k$,
RWAV can guarantee 
1-of-best-$c$ fairness 
to at least  $1-1/(L_k)^{c-k+1}$ of the members in all groups.
\end{theorem}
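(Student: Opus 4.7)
The plan is to invoke Lemma \ref{kgroups-binary-positive-general} directly and identify the worst group. That lemma already asserts that RWAV guarantees $1$-of-best-$c$ fairness to a fraction of at least $B_k(c-i+1,1)$ of the members of group $i$, for each $i\range{1}{k}$. So the only remaining task is a calculation: find $\min_{i\in\{1,\ldots,k\}} B_k(c-i+1,1)$ and rewrite it in closed form.

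First I would observe that $B_k(r,1)=1-1/L_k^r$ is a weakly increasing function of $r$ for $r\geq 0$, since $L_k>1$. Consequently, among the values $B_k(c-i+1,1)$ for $i\range{1}{k}$, the smallest one is attained at $i=k$, namely $B_k(c-k+1,1)$. The hypothesis $c\geq k$ ensures $c-k+1\geq 1>0$, so we are in the regime where the formula $B_k(r,1)=1-1/L_k^r$ applies (rather than the boundary case $r\leq 0$, where $B_k(r,1)=0$ would give a vacuous guarantee).

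Substituting, the guaranteed fraction in the worst group is
\[
B_k(c-k+1,1)\;=\;1-\frac{1}{L_k^{\,c-k+1}}.
\]
Since by Lemma \ref{kgroups-binary-positive-general} every group $i$ receives a guarantee of at least $B_k(c-i+1,1)\geq B_k(c-k+1,1)$, every group enjoys this same lower bound. This yields the claimed fraction $1-1/(L_k)^{c-k+1}$ for all groups simultaneously.

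There is essentially no obstacle here; the work has already been done in Lemma \ref{kgroups-binary-positive-general} (which relies on Lemmas \ref{lem:balance-k}, \ref{lem:balance-increasing-k}, and the $k$-group analogue of Lemma \ref{lem:rwav-end}) and in the definition of $B_k$. The only mildly subtle point is checking monotonicity of $B_k(\cdot,1)$ so as to identify $i=k$ as the worst-case group, and verifying that $c\geq k$ keeps us out of the degenerate regime where $B_k$ vanishes.
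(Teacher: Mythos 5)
Your proposal is correct and follows exactly the route the paper takes: the paper derives this theorem directly from Lemma \ref{kgroups-binary-positive-general} by substituting the closed form of $B_k$, with the worst-case group being $i=k$. Your additional remarks on the monotonicity of $B_k(\cdot,1)$ and on $c\geq k$ keeping $c-k+1\geq 1$ are sound and simply make explicit what the paper leaves implicit.
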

An interesting special case is when $c = k$, since 1-of-best-$k$ fairness is equivalent to positive-MMS fairness. 
Theorem \ref{kgroups-binary-positive-1ofbestc} guarantees this for $1 - 1/L_k = 1 - 2^{-1/(k-1)}$ of the agents; this ratio approaches $0$ for large $k$. However, by enhancing the RWAV protocol analogously to Theorem \ref{binary-positive-1ofbestc-improved}, we can guarantee 1-of-best-$k$ fairness to a constant fraction of the agents:
\begin{theorem}
\label{kgroups-positive-1ofbestk}
For $k$ groups with additive agents,
a $1/3$-democratic
1-of-best-$k$ allocation (which implies $1/3$-democratic positive-MMS) exists and can be found efficiently.
\end{theorem}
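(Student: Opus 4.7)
The plan is to reduce to binary valuations and then enhance the RWAV protocol of Theorem~\ref{kgroups-binary-positive-1ofbestc} (with $c=k$) in a manner analogous to Theorem~\ref{binary-positive-1ofbestc-improved}. The binary reduction sets each agent's top $k$ goods to value $1$ and the rest to value $0$, so every agent desires exactly $r=k$ goods and needs only $s=1$ of them to be satisfied under 1-of-best-$k$. A simple direct application of Theorem~\ref{kgroups-binary-positive-1ofbestc} yields only a $(1 - 2^{-1/(k-1)})$-fraction, which tends to $0$ as $k\to \infty$, so the enhancement is essential.

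The enhanced protocol has a preprocessing phase followed by an RWAV phase. In the preprocessing phase we iteratively look for a currently unsatisfied group $i$ and an available good $g$ such that at least $n_i/3$ members of group $i$ desire $g$; if found, we assign $g$ to group $i$, mark $i$ as satisfied, and continue. Each satisfied group automatically has at least $n_i/3$ happy members. When preprocessing ends, either all groups are satisfied (and we are done), or no available good is desired by $\geq n_i/3$ members of any remaining group; in the latter case we run basic RWAV on the $t$ remaining groups and the leftover goods. Each preprocessing step runs in polynomial time and there are at most $k$ of them, and the RWAV phase is itself polynomial, so the whole protocol is efficient.

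The crux is the analysis of the RWAV phase. By Lemmas~\ref{lem:balance-k}, \ref{lem:balance-increasing-k}, and~\ref{lem:rwav-end} (applied with $k$ replaced by $t$), it suffices to show that the balance of each remaining group $i$ at its first turn in the RWAV phase is at least $n_i/3$. Since preprocessing removed at most $k-t$ goods, every remaining member $j$ has $r_j \geq t$, so the initial balance is at least $n_i\, B_t(t,1) = n_i(1 - 1/(2L_t))$, using $L_t^{t-1}=2$. Before group $i$'s first turn, at most $i-1$ picks by other groups occur, and by the preprocessing stopping condition each such pick affects fewer than $n_i/3$ members of $i$. Using convexity of $L_t^x - 1$, the total balance decrease is maximized when the same $n_i/3$ members are affected by every pick; this gives an upper bound of $(n_i/3)(L_t^{i-1}-1)/(2L_t)$. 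Combining, the balance at group $i$'s first turn is at least $n_i(6L_t - L_t^{i-1} - 2)/(6L_t)$, which is $\geq n_i/3$ precisely because $L_t^{i-1} \leq L_t^{t-1} = 2$ and $L_t \geq 1$ imply $4L_t \geq L_t^{i-1}+2$. The positive-MMS conclusion follows immediately, since for an additive agent having positive MMS requires at least $k$ goods of positive value, so receiving one of the top $k$ guarantees positive utility.

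The main technical obstacle is the convexity argument that concentrates the decrements. One must verify that among all ways of distributing $\sum_j d_{ij} \leq (n_i/3)(i-1)$ decrements (where $d_{ij}$ counts picks that reduced $r_j$), the sum $\sum_j (L_t^{d_{ij}} - 1)/L_t^{r_j^{\mathrm{init}}}$ is maximized by putting $d_{ij}=i-1$ on exactly $n_i/3$ members and $0$ on the rest, and that the worst case for the initial $r_j^{\mathrm{init}}$ is the uniform minimum $r_j^{\mathrm{init}}=t$ (members entering the RWAV phase with larger $r_j$ only help, contributing more to the initial balance and less per unit decrement). Once this worst-case arrangement is pinned down, the arithmetic that reduces the required inequality to $L_t\geq 1$ is routine, so the whole bound is tight up to the structural constants of the RWAV analysis.
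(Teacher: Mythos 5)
Your proposal is correct and follows essentially the same route as the paper's proof: reduce to binary valuations, pre-allocate any good desired by at least $n_i/3$ of the members of some group (the paper phrases this as a recursion on $k$ with base case $k=2$ rather than as an iterative preprocessing loop), and then show that each remaining group's balance at its first RWAV turn is at least $n_i/3$. Your convexity argument is just a more explicit version of the paper's ``worst case is paying repeatedly to the same $n_i/3$ members'' step, and the final arithmetic is equivalent.
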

\begin{proof}
We convert all valuations to binary by assuming that each agent desires only his/her $k$ best goods (breaking ties arbitrarily).
We prove that it is possible to give at least $1/3$ of the agents in each group at least one desired good.
	
The proof is by induction on $k$. For $k=2$ we already proved in Theorem \ref{binary-positive-1ofbestc-improved}
that it is possible to guarantee 1-of-best-2 to at least $3/5$ of the agents in each group.
Assume the claim is true up until $k-1$; we will prove it for $k$.
	
If in some group $i$ at least $n_i/3$ members desire the same good $g$, give them good $g$ and divide the remaining goods among the remaining groups recursively. 
Note that in each remaining group, every agent now desires at least $k-1$ goods, so by the inductive hypothesis, it is possible to satisfy at least a $1/3$ fraction of each group.

Otherwise, run RWAV modified for $k$ groups as explained above. As in the proof of Lemma~\ref{kgroups-binary-positive-general}, 
it is sufficient to prove that, for each group $i$, its balance when it first picks an item is at least $n_i/3$.

The initial payment of each member is $B_k(k,1) = 1-1/2^{k/(k-1)} > 1/2$, so the initial amount paid to each group $i$ is more than  $n_i/2$. This is also the balance of group 1 when it first picks a good.

The balance of groups $2,3,\ldots,k$ is smaller since they have to pay to their members whose desired goods were picked. Obviously group $k$ is in the worst situation since it has to pay $k-1$ times, so we focus on this group. Each time a good is picked, the group has to pay to at most $n_k/3$ members. It has to pay $w_k(r,1)$ to each member with $r$ remaining goods. Recall that $w_k(r,1)$ is larger when $r$ is smaller. Therefore, the worst case for group $k$ is when it has to pay again and again to the same $n_k/3$ members. 
In this case it has to pay $n_k/3\cdot \sum_{r=2}^{k}w_k(r,1) =n_k/3\cdot [B_k(k,1)-B_k(1,1)]$. 
The total balance remaining in group $k$'s account when it first picks a good is thus at least:
\begin{align*}
B_k(k,1)\cdot n_k - n_k/3\cdot [B_k(k,1)-B_k(1,1)]
&>
B_k(k,1)\cdot 2n_k/3
\\
&>
(1/2)\cdot 2n_k/3 
= n_k/3.
\end{align*}
This concludes the proof.
\end{proof}

\begin{remark}
By Proposition \ref{kgroups-binary-negative-2/3}, the asymptotic upper bound on $h$ when $k\to\infty$ is~$1/2$.
\end{remark}

\section{Conclusion and Future Work}
For two groups, we have a comprehensive understanding of possible democratic fairness guarantees. 
We have a complete characterization of possible envy-freeness approximations, and upper and lower bounds for maximin-share-fairness approximations.
Some remaining gaps are shown in 
Table~\ref{table:summary}; closing them raises interesting combinatorial challenges.

For $k\geq 3$ groups, the challenges are much greater. 
Currently all our fairness guarantees are to no more than 1/3 of the agents in each group.
From a practical perspective, it may be important in some settings to give fairness guarantees to at least half of the agents in all groups. Finding protocols that provide such guarantees is an avenue for future work. 
In addition, our work leaves open the question of whether a stronger fairness notion than 1-of-best-$k$ can be guaranteed for a fraction of the agents in each group if we do not allow the fraction to decrease as the number of groups grows. 
From an algorithmic perspective, it is interesting whether there exists a polynomial-time algorithm that guarantees EF1 to any positive fraction of the agents. 

A possible concern about democratic fairness is that it completely leaves aside a fraction of the agents in each group. As Proposition~\ref{binary-negative-2/3} shows, it might be inevitable to leave some agents with zero utility. In these cases, the goal of an egalitarianist is to minimize the fraction of such poor agents. While the weighting scheme used by our RWAV protocol indeed prioritizes the interests of poor agents (see Example~\ref{exm:binary-positive-1ofbestc}), it may be interesting to develop an algorithm that directly minimizes the maximum fraction of poor agents across all groups.

We end the paper with a number of additional directions for future work.
\begin{itemize}
\item Our democratic fairness notion treats all groups equally regardless of their size. This may lead to situations where a large group has to sacrifice a significant amount of utility in order to preserve the fairness for a small group. How do the fairness guarantees change if we let the required fraction depend on the size of the group?
\item In several domains, including voting and preference elication, restricting the preferences of the agents is a common approach for circumventing negative results \citep{ElkindLaPe17}. Our group fair division setting opens up the possibility of imposing the same kind of conditions, for example by assuming that agents in the same group have single-peaked or single-crossing preferences. Is it possible to obtain stronger fairness guarantees if these conditions are satisfied?
\item We have not addressed the issue of efficiency in this paper beyond the assumption that all goods must be allocated. In individual fair division, it is known that EF1 and Pareto optimality are compatible \citep{CaragiannisKuMo16}. Can we similarly strengthen Theorem~\ref{monotonic-positive-1/2} by adding Pareto optimality? For binary valuations this is indeed possible since any Pareto improvement preserves EF1, but the question remains open for additive and general valuations.
\end{itemize}

\appendix
\section{Properties of the Function $B$}
\label{sec:properties-of-B}
The function $B(r,s)$, defined in Section \ref{sub:binary-positive}, represents a lower bound on the fraction of agents that can be given at least $s$ out of $r$ desired goods. It is defined using the following recurrence relation \eqref{eq:brs}:
\begin{align*}
\brs{r}{s} := 
\begin{cases}
1  &  s\leq 0;
\\
0  &  0<s \text{~and~} r<s;
\\
\min\bigg[
\frac{1}{2}[B(r-1,s)+B(r-1,s-1)]
,
B(r-2,s-1)
\bigg] & \text{otherwise}.
\end{cases}
\end{align*}
Some values are shown in Table \ref{tab:B}. In this section we prove several properties of $B$.

\begin{lemma}
	\label{B-decreasing}
	For every fixed $r$, $B(r,s)$ is a weakly decreasing function of $s$.
\end{lemma}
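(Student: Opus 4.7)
The plan is to prove the lemma by induction on $r$, exploiting the three-clause definition of $B$ in \eqref{eq:brs}. Concretely, I would like to show that for every integer $r$ and every integer $s$, $B(r,s) \geq B(r,s+1)$.

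For the base of the induction I take $r \leq 0$. In this range the recursive clause of \eqref{eq:brs} is never triggered (since it requires $r \geq s \geq 1$), so $B(r,s) = 1$ when $s \leq 0$ and $B(r,s) = 0$ when $s \geq 1$. Weak monotonicity in $s$ is immediate.

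For the inductive step I fix $r \geq 1$, assume the claim for all $r' < r$, and split into four cases according to which clause of \eqref{eq:brs} produces $B(r,s)$ and $B(r,s+1)$. (a) If $s+1 \leq 0$, both sides equal $1$. (b) If $s = 0$, then $B(r,0) = 1$ and $B(r,1) \leq 1$. (c) If $s \geq 1$ and $r \leq s$, then $B(r,s+1) = 0$ and the inequality is trivial. (d) The substantive case is $s \geq 1$ and $r \geq s+1$, where both values are given by the recursive clause:
\begin{align*}
B(r,s)   &= \min\Bigl[\tfrac{1}{2}\bigl(B(r-1,s)+B(r-1,s-1)\bigr),\; B(r-2,s-1)\Bigr],\\
B(r,s+1) &= \min\Bigl[\tfrac{1}{2}\bigl(B(r-1,s+1)+B(r-1,s)\bigr),\; B(r-2,s)\Bigr].
\end{align*}
Applying the inductive hypothesis at $r-1$ gives $B(r-1,s-1)\geq B(r-1,s) \geq B(r-1,s+1)$, and applying it at $r-2$ gives $B(r-2,s-1)\geq B(r-2,s)$. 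Thus each of the two arguments of the min defining $B(r,s)$ dominates the corresponding argument defining $B(r,s+1)$, and therefore $B(r,s)\geq B(r,s+1)$.

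There is no real obstacle in this proof; the only mildly delicate point is ensuring the induction hypothesis is invoked at the right pairs of indices — once for each of the two arguments of the min, using $r-1$ for two consecutive values of $s$ and $r-2$ for a single pair — which the case split above handles cleanly.
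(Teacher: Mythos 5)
Your proof is correct and follows essentially the same route as the paper's: induction on $r$, using the fact that each of the two arguments of the min in \eqref{eq:brs} is, by the inductive hypothesis applied at $r-1$ and $r-2$, weakly decreasing in $s$, so the min is as well. Your version is if anything more careful than the paper's (which checks the base cases $r\le 3$ from Table~\ref{tab:B} and does not spell out the boundary clauses); the only unstated step is that your case (c) needs $B(r,s)\geq 0$, which is immediate since all base values are $0$ or $1$ and the recursive clause preserves nonnegativity.
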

\begin{proof}
	By induction on $r$. For $r=0,1,2,3$ this is apparent from Table \ref{tab:B} (there are finitely many values to check).
	Now let $r\geq 4$. We assume the claim is true for $r-2$ and $r-1$ and prove it is true for $r$.
	
	$B(r,s)$ is a minimum of two expressions. In each of these expressions, the first operand is less than $r$. Therefore, 
	by the induction assumption, each of these expressions is decreasing with $s$.
	Therefore the same is true for $B(r,s)$.
\end{proof}

\begin{lemma}
	\label{B-increasing}
	For every fixed $s$, $B(r,s)$ is a weakly increasing function of $r$.
\end{lemma}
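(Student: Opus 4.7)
The plan is to proceed by induction on $r$ (with $s$ fixed but arbitrary), using the already-established Lemma~\ref{B-decreasing} (monotonicity in $s$) as the main tool. Before the induction I would dispose of the easy boundary regions: when $s\leq 0$ we have $B(r,s)=B(r-1,s)=1$; when $s>0$ and $r<s$ both $B(r,s)$ and $B(r-1,s)$ equal $0$; and when $s>0$ and $r=s$, the recurrence gives a nonnegative $B(r,s)$ while $B(r-1,s)=B(s-1,s)=0$. So the only substantive case is $s\geq 1$ and $r\geq s+1$, where the recurrence \eqref{eq:brs} applies to $B(r,s)$.

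In that substantive case I would expand
\[
B(r,s)=\min\!\Bigl[\tfrac{1}{2}\bigl(B(r-1,s)+B(r-1,s-1)\bigr),\; B(r-2,s-1)\Bigr]
\]
and show that both arguments of the $\min$ are at least $B(r-1,s)$. The first argument exceeds $B(r-1,s)$ precisely when $B(r-1,s-1)\geq B(r-1,s)$, which is immediate from Lemma~\ref{B-decreasing}. So the only real work is the second argument: proving $B(r-2,s-1)\geq B(r-1,s)$.

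For that second inequality I would split on $r$ once more. If $r=s+1$, then $B(r-1,s)=B(s,s)=0$ (as observed in the boundary step), so $B(r-2,s-1)\geq 0=B(r-1,s)$ is trivial. If $r\geq s+2$, then $r-1>s$ and the recurrence also applies to $B(r-1,s)$, giving
\[
B(r-1,s)\;\leq\;\tfrac{1}{2}\bigl(B(r-2,s)+B(r-2,s-1)\bigr)\;\leq\;\tfrac{1}{2}\bigl(B(r-2,s-1)+B(r-2,s-1)\bigr)\;=\;B(r-2,s-1),
\]
where the first inequality is just the definition of $\min$, and the second is another application of Lemma~\ref{B-decreasing} (with $r$ replaced by $r-2$).

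The main obstacle is thus already visible: we cannot derive $B(r,s)\geq B(r-1,s)$ directly from the inductive hypothesis on $r$ alone, because the recurrence mixes $r$-steps and $s$-steps; what unlocks the proof is that the $s$-monotonicity lemma converts both the outer recurrence for $B(r,s)$ and the inner recurrence for $B(r-1,s)$ into bounds in terms of a common quantity $B(r-2,s-1)$. Once that observation is in place, the induction becomes a short case analysis rather than a delicate calculation.
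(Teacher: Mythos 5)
Your overall strategy is sound and genuinely different from the paper's, but one step as written is false. You assert that for $r=s+1$ one has $B(r-1,s)=B(s,s)=0$ ``as observed in the boundary step.'' Your boundary step only established $B(s-1,s)=0$ and $B(s,s)\geq 0$; it did not show $B(s,s)=0$, and indeed this fails for $s=1$: the recurrence \eqref{eq:brs} gives
\[
B(1,1)=\min\Bigl[\tfrac{1}{2}\bigl(B(0,1)+B(0,0)\bigr),\,B(-1,0)\Bigr]=\min\bigl[\tfrac{1}{2},1\bigr]=\tfrac{1}{2}.
\]
The damage is contained, though. For $s=1$ the second argument of the min is $B(r-2,0)=1$, which dominates $B(r-1,1)$ because $B$ never exceeds $1$, so the needed inequality $B(r-2,s-1)\geq B(r-1,s)$ holds at $(r,s)=(2,1)$ anyway. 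For $s\geq 2$ your claim $B(s,s)=0$ is true, but it needs its own one-line justification: $B(s,s)\leq B(s-2,s-1)=0$ by the recurrence together with the boundary condition ($0<s-1$ and $s-2<s-1$); this is essentially the content of Lemma~\ref{B-is-zero-when-r-is-small}, which the paper only proves later. So the case $r=s+1$ must be split into $s=1$ and $s\geq 2$ with these two separate arguments; with that repair your proof is complete.

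Once fixed, your route is genuinely different from the paper's. The paper proves Lemma~\ref{B-increasing} by a double induction on $r$ (assuming the claim for $r-1$ and $r-2$) and compares the two min-expressions for $B(r,s)$ and $B(r-1,s)$ term by term, checking $r\leq 3$ against the table. You instead use no induction on $r$ at all: Lemma~\ref{B-decreasing} alone shows that each argument of the min defining $B(r,s)$ already dominates $B(r-1,s)$, the first immediately and the second via the recurrence for $B(r-1,s)$. This is arguably cleaner, since it isolates exactly where $s$-monotonicity does the work and makes the boundary cases explicit, whereas the paper's term-by-term comparison implicitly assumes both $B(r,s)$ and $B(r-1,s)$ are in the recurrence regime. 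The price is precisely the extra care required at $r=s+1$, which is where your write-up slipped.
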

\begin{proof}
	We have to prove that, for every $s$ and every $r\geq 1$, $B(r,s)\geq B(r-1,s)$. 
	We prove this by induction on $r$. For $r=1,2,3$ this is apparent from Table \ref{tab:B} (there are finitely many values to check).
	Now let $r\geq 4$. We assume the claim is true for $r-2$ and $r-1$ and prove it is true for $r$. By the induction assumption, each term in the formula of $B(r,s)$ in \eqref{eq:brs} is no less than the corresponding term in the formula of $B(r-1,s)$. Hence it follows that $B(r,s)\geq B(r-1,s)$.
\end{proof}

\begin{lemma}
	\label{B-positive}
	For every $r,s$ such that $0\leq s\leq r$, 
	$0\leq B(r,s)\leq 1$.
\end{lemma}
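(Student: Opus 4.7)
The plan is to prove this by strong induction on $r$. In fact, I would prove a slightly stronger statement first: for \emph{all} integers $r, s$ (not only those with $0 \leq s \leq r$), the value $B(r,s)$ lies in $[0,1]$. This is convenient because the recursion in \eqref{eq:brs} refers to $B(r-1,s)$ and $B(r-2,s-1)$, which may themselves lie in the boundary regions ($s'\leq 0$ or $r'<s'$) even when we are evaluating $B$ at an interior point.

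\textbf{Base case.} For $r=0$, the only interior point is $(0,0)$, and $B(0,0)=1$ by the first clause of \eqref{eq:brs}. For the boundary cases (encompassing $r<0$ as well), $B(r,s)\in\{0,1\}\subseteq[0,1]$ directly from the definition.

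\textbf{Inductive step.} Fix $r\geq 1$ and assume $B(r',s')\in[0,1]$ for every $r'<r$ and every integer $s'$. Let $s$ be arbitrary.
\begin{itemize}
\item If $s\leq 0$, then $B(r,s)=1$, and if $0<s$ with $r<s$, then $B(r,s)=0$; in both cases the conclusion is immediate.
\item Otherwise $1\leq s\leq r$, and the third clause of \eqref{eq:brs} applies. By the inductive hypothesis, each of the three quantities $B(r-1,s)$, $B(r-1,s-1)$, $B(r-2,s-1)$ lies in $[0,1]$. The average $\tfrac{1}{2}[B(r-1,s)+B(r-1,s-1)]$ therefore lies in $[0,1]$, and the minimum of two numbers in $[0,1]$ again lies in $[0,1]$. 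Hence $B(r,s)\in[0,1]$.
\end{itemize}
Restricting to pairs $(r,s)$ with $0\leq s\leq r$ yields the lemma.

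I do not foresee a genuine obstacle: the argument is a straightforward structural induction once one observes that both operations appearing in the recursion (averaging two values and taking a minimum of two values) preserve membership in $[0,1]$. The only subtlety is remembering to run the induction over \emph{all} integer second arguments rather than only those satisfying $0\leq s\leq r$, so that the recursive calls that drop into the boundary regimes are handled uniformly by the boundary clauses of the definition.
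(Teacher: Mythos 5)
Your proof is correct, but it takes a different route from the paper's. The paper proves Lemma \ref{B-positive} in two lines by combining the boundary values $B(r,0)=1$ and $B(r,r+1)=0$ with the previously established Lemma \ref{B-decreasing} (monotonicity of $B(r,\cdot)$), so that every value with $0\leq s\leq r$ is sandwiched between $0$ and $1$. You instead argue by strong induction on $r$ directly from the recurrence \eqref{eq:brs}, observing that both operations it uses---averaging two values and taking a minimum of two values---preserve membership in $[0,1]$, and that all boundary clauses (including those with negative first argument, e.g.\ $B(-1,0)$ arising from the call $B(r-2,s-1)$ when $r=1$, $s=1$) return values in $\{0,1\}$. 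Your approach is more self-contained: it does not depend on Lemma \ref{B-decreasing} and would survive even if the monotonicity property failed, whereas the paper's argument is shorter given that the monotonicity lemma has already been proved. Your extension of the induction to all integer second arguments $s$ is the right move to make the recursive calls land uniformly in cases already covered; that is exactly the subtlety you flag, and you handle it correctly.
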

\begin{proof}
	The boundary conditions on $B$ imply that, for every $r$, $B(r,0)=1$ and $B(r,r+1)=0$.
	Lemma \ref{B-decreasing} implies that, for every fixed $r$, $B(r,s)$ decreases from $1$ to $0$.
\end{proof}

\begin{lemma}
	\label{w-positive}
	For every $r,s$ such that $0\leq s\leq r$, 
	$0\leq w(r,s)\leq 1$.
\end{lemma}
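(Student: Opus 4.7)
The plan is to derive both bounds immediately from the two preceding lemmas about $B$, namely Lemma \ref{B-increasing} (monotonicity in $r$) and Lemma \ref{B-positive} (range in $[0,1]$).

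For the lower bound $w(r,s)\geq 0$, I would recall the definition $w(r,s) = B(r,s) - B(r-1,s)$ and apply Lemma \ref{B-increasing} directly: since $B(\cdot,s)$ is weakly increasing in its first argument, $B(r,s)\geq B(r-1,s)$, giving nonnegativity.

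For the upper bound $w(r,s)\leq 1$, I would invoke Lemma \ref{B-positive}. We need to be slightly careful here because the hypothesis of Lemma \ref{B-positive} requires $0\leq s\leq r$ for the term $B(r,s)$ and the analogous condition for $B(r-1,s)$. If $s\leq r-1$, then Lemma \ref{B-positive} gives $B(r,s)\leq 1$ and $B(r-1,s)\geq 0$, so $w(r,s)\leq 1 - 0 = 1$. In the edge case $s=r$, the second term $B(r-1,s)$ falls under the boundary rule $r<s$ in the recurrence \eqref{eq:brs}, which sets $B(r-1,r)=0$, and $B(r,r)\leq 1$ by Lemma \ref{B-positive}, so again $w(r,s)\leq 1$.

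There is no real obstacle; the statement is essentially a corollary of the two preceding structural lemmas on $B$, and the only subtlety is handling the boundary case $s=r$ where we must appeal to the boundary clause of the defining recurrence rather than to Lemma \ref{B-positive} itself.
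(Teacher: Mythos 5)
Your proof is correct and follows essentially the same route as the paper's: nonnegativity from Lemma \ref{B-increasing} and the upper bound from Lemma \ref{B-positive}, applied to $w(r,s)=B(r,s)-B(r-1,s)$. Your extra care with the edge case $s=r$ (where $B(r-1,s)=0$ comes from the boundary clause of \eqref{eq:brs} rather than from Lemma \ref{B-positive}) is a small refinement the paper glosses over, but the argument is the same.
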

\begin{proof}
By definition of $w$, $w(r,s) = B(r,s)-B(r-1,s)$.
By Lemma \ref{B-increasing}, this difference is at least 0.
By Lemma \ref{B-positive}, the difference is at most 1.
\end{proof}

\begin{lemma}
	\label{B-is-zero-when-r-is-small}
	For every $s\geq 1$ and every $r\leq 2 s-2$:
	\begin{align*}
	B(r,s)=0.
	\end{align*}
\end{lemma}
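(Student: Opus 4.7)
The plan is to prove the statement by induction on $s$, exploiting the fact that the recurrence gives the bound $B(r,s) \leq B(r-2,s-1)$, which shifts the second argument down by one while shifting the first argument down by two — exactly the right parity to preserve the inequality $r \leq 2s - 2$.

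For the base case $s = 1$, the condition $r \leq 2s - 2 = 0$ forces $r \leq 0 < 1 = s$, so $B(r,s) = 0$ directly from the boundary condition in \eqref{eq:brs}.

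For the inductive step, fix $s \geq 2$ and assume $B(r',s-1) = 0$ for all $r' \leq 2(s-1) - 2 = 2s - 4$. Take any $r \leq 2s - 2$. If $r < s$, then $B(r,s) = 0$ by the boundary condition and we are done. Otherwise $s \leq r \leq 2s - 2$, in which case the recurrence \eqref{eq:brs} gives
\[
B(r,s) \leq B(r-2,\, s-1).
\]
Since $r - 2 \leq 2s - 4 = 2(s-1) - 2$, the inductive hypothesis yields $B(r-2, s-1) = 0$, hence $B(r,s) \leq 0$. Combined with $B(r,s) \geq 0$ from Lemma~\ref{B-positive}, we conclude $B(r,s) = 0$.

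There is no real obstacle here — the statement is essentially a direct consequence of the ``$B(r-2,s-1)$'' branch of the recurrence, and the induction is straightforward because the parity of $2s - 2$ decreases correctly under the shift $(r,s) \mapsto (r-2, s-1)$. The only minor subtlety is splitting into the two subcases $r < s$ and $s \leq r \leq 2s-2$ so that the recurrence is actually applicable in the inductive step.
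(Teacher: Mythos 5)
Your proof is correct and follows essentially the same route as the paper's: induction on $s$, using the $B(r-2,s-1)$ branch of the recurrence together with the nonnegativity of $B$ from Lemma~\ref{B-positive}. The only difference is that you explicitly separate out the subcase $r<s$ (where the boundary condition applies directly), which the paper's proof glosses over; this is a minor presentational refinement rather than a different argument.
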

\begin{proof}
	By induction on $s$.
	When $s=1$, the claim should be verified only for $r=0$; indeed it is true by the boundary condition $B(0,1)=0$.
	
	Assume that $s\geq 2$ and that the claim is true for $s-1$.
	Let $r$ be an integer such that $r\leq 2 s - 2$.
	By the recurrence relation defining $B$:
	\begin{align*}
	B(r,s)\leq B(r-2,s-1).
	\end{align*}
	Since $r\leq 2 s - 2$, $r - 2 \leq 2 (s-1) - 2$.
	Hence by the induction assumption on $s-1$: $B(r-2,s-1)=0$,
	so $B(r,s)\leq 0$.
	But by Lemma \ref{B-positive}, $B(r,s)\geq 0$, so we must have $B(r,s)=0$.
\end{proof}

\begin{lemma}
	\label{large-r}
	For every $s\geq 1$ and every $r\geq 2 s -1$:
	\begin{align*}
	B(r,s) = \frac{1}{2}[B(r-1,s)+B(r-1,s-1)].
	\end{align*}
\end{lemma}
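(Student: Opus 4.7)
By the recurrence~\eqref{eq:brs}, the lemma is equivalent to showing that the first term in the defining minimum is the smaller one, i.e., $(\star)$: $\tfrac{1}{2}[B(r-1,s)+B(r-1,s-1)]\le B(r-2,s-1)$ for all $r\ge 2s-1$. I would prove $(\star)$ by induction on $s$. The base case $s=1$ reduces to $\tfrac{1}{2}[B(r-1,1)+1]\le 1$, which holds by Lemma~\ref{B-positive}.

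For the inductive step, assuming the lemma for $s-1$, I would split into two sub-cases according to whether $r=2s-1$ or $r\ge 2s$. When $r=2s-1$, Lemma~\ref{B-is-zero-when-r-is-small} gives $B(r-1,s)=B(2s-2,s)=0$, so $(\star)$ collapses to $\tfrac{1}{2}B(2s-2,s-1)\le B(2s-3,s-1)$. Applying the inductive hypothesis to $B(2s-2,s-1)$ (valid since $2s-2\ge 2(s-1)-1$) rewrites this as the ratio inequality $B(2s-3,s-2)\le 3B(2s-3,s-1)$. Iteratively unfolding $B(2s-3,s-2)$ via the inductive hypotheses for indices $s-1$ and $s-2$, the inequality eventually reduces to a binomial-unimodality assertion of the form $\binom{2s-3}{s}\le\binom{2s-3}{s-1}$, which holds because $s>(2s-3)/2$ for $s\ge 2$.

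For $r\ge 2s$, the recurrence directly gives $B(r-1,s)\le B(r-3,s-1)$, so $(\star)$ would follow from the decreasing-differences statement $B(r-3,s-1)+B(r-1,s-1)\le 2B(r-2,s-1)$. To verify this, I would apply the inductive hypothesis to both $B(r-1,s-1)$ and $B(r-2,s-1)$ (both eligible since $r-1,r-2\ge 2s-3$), expressing each as the average of two $B$-values with second argument $s-1$ or $s-2$; after algebraic simplification the target reduces to $B(r-3,s-1)+B(r-3,s-3)\le 2B(r-3,s-2)$, i.e.\ concavity of $B(r-3,\cdot)$ at $s-2$.

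The main obstacle is that a single induction on $s$ carrying only the statement of Lemma~\ref{large-r} is not self-sustaining: the inductive step forces a parallel concavity-in-$s$ property for $B$ over the same range of arguments. The cleanest way forward is to bundle Lemma~\ref{large-r} with this concavity into a joint inductive invariant, using Lemma~\ref{B-is-zero-when-r-is-small} to dispose of the many degenerate boundary values and the unimodality/symmetry of binomial coefficients to dispatch the residual scalar inequalities that appear in the $r=2s-1$ boundary case.
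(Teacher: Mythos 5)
Your reduction of the lemma to the inequality $(\star)\colon \tfrac{1}{2}[B(r-1,s)+B(r-1,s-1)]\le B(r-2,s-1)$ is exactly the right first move, and your base case $s=1$ is fine. But the argument as proposed has a genuine gap, which you yourself flag at the end: the induction on $s$ is not self-sustaining. In the case $r\ge 2s$, after substituting the inductive hypothesis you arrive at the concavity statement $B(r-3,s-1)+B(r-3,s-3)\le 2B(r-3,s-2)$, and in the case $r=2s-1$ you arrive at a residual scalar inequality that you only gesture at (``iteratively unfolding \dots eventually reduces to a binomial-unimodality assertion''). Neither of these is proven. The concavity claim is not even true for all arguments (e.g.\ $B(3,1)+B(3,3)=7/8>6/8=2B(3,2)$); it holds only in the range $r\ge 2s-2$ around the middle index, and verifying it there via the closed form of Lemma~\ref{B-closed-form} would be circular, since that closed form is derived from the simplified recurrence \eqref{eq:brs-simplified}, which is itself a consequence of the present lemma. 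So the ``joint inductive invariant'' you propose would have to be formulated and pushed through from scratch, and that is where all the work lies.

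The paper avoids this entirely by inducting on $r$ rather than on $s$. With the inductive hypothesis available at $r-1$ for \emph{all} admissible values of the second argument, the expression $B(r-1,s)+B(r-1,s-1)-2B(r-2,s-1)$ expands (using the equality form of the recurrence at $r-1$ and $r-2$, which the hypothesis licenses) into $\tfrac{1}{2}$ times a sum of two brackets, each of which is literally an instance of the same inequality at $r-1$ with parameters $s$ and $s-1$ respectively; the boundary case $r=2s-1$ is handled by Lemma~\ref{B-is-zero-when-r-is-small} killing the extra terms. No auxiliary concavity property and no binomial estimates are needed. If you want to salvage your approach, the cleanest fix is to switch the induction variable to $r$; otherwise you must actually state and prove the concavity-in-$s$ invariant by a separate induction, which your proposal does not do.
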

\begin{proof}
	By the recurrence relation \eqref{eq:brs}, it is sufficient to prove that whenever $r\geq 2 s-1$:
	\begin{align*}
	B(r-1,s)+B(r-1,s-1) - 2 B(r-2,s-1) \leq 0.
	\end{align*}
	
	We prove this by induction on $r$.
	For $r=1$ and $r=2$ we only have to check the case $s=1$; indeed the claim can be verified in Table~\ref{tab:B}.
	Assume that $r>2$ and that the claim holds for $r-1$ and $r-2$. We prove that it holds for $r$ by considering two cases.
	
	\emph{Case A}: $r = 2 s - 1$. Then, by Lemma \ref{B-is-zero-when-r-is-small}, $B(r-1,s)=0$.
	However, $B(r-1,s-1)$ and $B(r-2,s-1)$ are subject to the induction assumption, since $r-1 \geq 2(s-1)-1$ and $r-2 \geq 2(s-1)-1$. Hence:
	\begin{align*}
	&B(r-1,s)+B(r-1,s-1) - 2 B(r-2,s-1) 
	\\
	=\enspace&
	0 + \frac{1}{2}[B(r-2,s-1)+B(r-2,s-2)]
	- [B(r-3,s-1)+B(r-3,s-2)]
	\\
	=\enspace&
	\frac{1}{2}[B(r-2,s-1) + B(r-2,s-2) - 2 B(r-3,s-2)] ~~~~~ \text{(since $B(r-3,s-1)=0$)}
	\\
	\leq\enspace& 0 ~~~~~~~~ \text{(by the induction assumption on $r-1$, since $r-1\geq 2(s-1)-1$).}
	\end{align*}
	
	\emph{Case B}: $r \geq 2 s$. Then, all three terms in the inequality are subject to the induction assumption. Hence:
	\begin{align*}
	&B(r-1,s)+B(r-1,s-1) - 2 B(r-2,s-1) 
	\\
	=\enspace&
	\frac{1}{2}[
	B(r-2,s)+B(r-2,s-1)+
	B(r-2,s-1)+B(r-2,s-2)
	]
	\\
	&-[B(r-3,s-1)+B(r-3,s-2)]
	\\
	=\enspace&
	\frac{1}{2}[B(r-2,s) + B(r-2,s-1) - 2 B(r-3,s-1)]
	\\
	&+
	\frac{1}{2}[B(r-2,s-1) + B(r-2,s-2) - 2 B(r-3,s-2)]
	\\
	\leq\enspace& 0 ~~~~~~~~ \text{(by the induction assumption on $r-1$, since $r-1\geq 2s-1$).}\qedhere
	\end{align*}
\end{proof}

In light of Lemma \ref{large-r}, the recurrence relation  \eqref{eq:brs} can be simplified to:
\begin{align}
\label{eq:brs-simplified}
\brs{r}{s} := 
\begin{cases}
1  &  s\leq 0;
\\
0  &  
0<s \text{~and~} r\leq 2 s -2;
\\
\frac{1}{2}[B(r-1,s)+B(r-1,s-1)]
& 0<s \text{~and~}  r\geq 2 s -1.
\end{cases}
\end{align}

In the next lemma we find a closed-form solution to the function $B$ in \eqref{eq:brs-simplified}:

\begin{lemma}
\label{B-closed-form}
The following function satisfies the recurrence relation \eqref{eq:brs-simplified}:
\begin{align*}
B(r,s)=
\frac{1}{2^r}\sum_{i=s-1}^{r-s} \binom{r}{i}
=
\frac{1}{2^r}\sum_{i=s}^{r-s+1} \binom{r}{i},
\end{align*}
where we assume that $\binom{a}{b}=0$ if $b<0$ or $b>a$.
\end{lemma}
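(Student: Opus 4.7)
The plan is to verify each of the three branches of the recurrence \eqref{eq:brs-simplified} separately, using the closed-form candidate $\widetilde B(r,s) := \frac{1}{2^r}\sum_{i=s-1}^{r-s}\binom{r}{i}$ (with the convention that $\binom{r}{i}=0$ when $i<0$ or $i>r$). The equivalence of the two displayed forms is immediate from the symmetry $\binom{r}{i}=\binom{r}{r-i}$: substituting $j=r-i$ in the first sum turns the range $s-1\leq i\leq r-s$ into $s\leq j\leq r-s+1$.

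First I would handle the boundary case $s\leq 0$. Here $s-1\leq -1$ and $r-s\geq r$, so the index range $[s-1,r-s]$ contains $\{0,1,\dots,r\}$, and by the convention on binomial coefficients the sum collapses to $\sum_{i=0}^{r}\binom{r}{i}=2^r$, giving $\widetilde B(r,s)=1$ as required. Next, for the case $0<s$ and $r\leq 2s-2$, I use the second form: since $r-s+1\leq s-1<s$, the range $s\leq i\leq r-s+1$ is empty, so $\widetilde B(r,s)=0$.

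The main work is the recursive branch $0<s$ and $r\geq 2s-1$, where I must show
\[
\widetilde B(r,s) \;=\; \tfrac{1}{2}\bigl[\widetilde B(r-1,s)+\widetilde B(r-1,s-1)\bigr].
\]
Multiplying by $2^r$, this reduces to the binomial identity
\[
\sum_{i=s-1}^{r-s}\binom{r}{i} \;=\; \sum_{i=s-1}^{r-s-1}\binom{r-1}{i} + \sum_{i=s-2}^{r-s}\binom{r-1}{i}.
\]
I would apply Pascal's rule $\binom{r}{i}=\binom{r-1}{i-1}+\binom{r-1}{i}$ term by term on the left, then reindex the first resulting sum by $j=i-1$ to get
\[
\sum_{i=s-1}^{r-s}\binom{r}{i} \;=\; \sum_{j=s-2}^{r-s-1}\binom{r-1}{j} + \sum_{i=s-1}^{r-s}\binom{r-1}{i}.
\]
Comparing the right-hand side of the target with this expression, the discrepancies are a missing term $\binom{r-1}{r-s}$ in the first summand and an extra identical term in the second summand; these cancel exactly, establishing the identity.

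The only place I expect any subtlety is in keeping track of the index-boundary terms when rewriting the sums via Pascal's rule, especially to make sure the convention $\binom{r-1}{-1}=0$ and $\binom{r-1}{r}=0$ is applied consistently so that no spurious contributions appear. Once these bookkeeping details are handled, the three-case verification is complete and $\widetilde B = B$.
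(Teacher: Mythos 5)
Your proposal is correct and follows essentially the same route as the paper: a three-case verification of the simplified recurrence, with Pascal's identity doing the work in the recursive branch (the paper runs the identical computation in the opposite direction, expanding $2^{r-1}[B(r-1,s)+B(r-1,s-1)]$ and recombining via Pascal to reach $2^r B(r,s)$). One bookkeeping slip worth fixing: after your reindexing, the term that your first sum has in excess and your second sum is missing is $\binom{r-1}{s-2}$ at the \emph{lower} index boundary, not $\binom{r-1}{r-s}$; the cancellation still goes through exactly as claimed.
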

\begin{proof}
	When $s=0$, the sum goes from $-1$ to $r$, however, for $i=-1$ the summand is zero so $B(r,s) =\frac{1}{2^r}\sum_{i=0}^r \binom{r}{i} = 1$.
	
	When $r\leq 2 s -2$, the sum starts at $s-1$ and ends at (at most) $s-2$ so it is 0.
	
	When $r\geq 2 s -1$, it is sufficient to prove that $2^{r-1} B(r-1,s) + 2^{r-1} B(r-1,s-1) = 2^r B(r,s)$. We have
	\begin{align*}
	2^{r-1} B(r-1,s) &=  \sum_{i=s-1}^{r-s-1}\binom{r-1}{i}, \text{ and}
	\\
	2^{r-1} B(r-1,s-1) &= \sum_{i=s-2}^{r-s}\binom{r-1}{i}
	=\sum_{i=s-1}^{r-s+1}\binom{r-1}{i-1}
	\\
	&=
	\sum_{i=s-1}^{r-s-1}\binom{r-1}{i-1}
	+
	\bigg[
	\binom{r-1}{r-s-1}
	+\binom{r-1}{r-s} 
	\bigg],
	\end{align*}
	since when $r\geq 2 s -1$, the sum $\sum_{i=s-1}^{r-s+1}\binom{r-1}{i-1}$ contains at least two elements. Summing the above two equations gives:
	\begin{align*}
	&2^{r-1} [B(r-1,s) + B(r-1,s-1)]  \\
	&=  
	\sum_{i=s-1}^{r-s-1}\bigg[\binom{r-1}{i} + \binom{r-1}{i-1}\bigg] + 
	\bigg[
	\binom{r-1}{r-s-1}
	+\binom{r-1}{r-s}
	\bigg]
	\end{align*}
	By two applications of Pascal's identity:
	\begin{align*}
	2^{r-1}  [B(r-1,s) + B(r-1,s-1)]  &=  
	\sum_{i=s-1}^{r-s-1} \binom{r}{i} + 
	\binom{r}{r-s}
	\\
	&=
	\sum_{i=s-1}^{r-s}\binom{r}{i} = 2^r B(r,s).
	\qedhere
	\end{align*}
\end{proof}

Next, we prove several technical lemmas about binomial coefficients and their sums.
\begin{lemma}
	\label{3s-1}
	For every $s\geq 1$:
	\begin{align*}
	\binom{3s-1}{s-1}\frac{3s}{s+2}\leq 2^{3s-3}.
	\end{align*}
\end{lemma}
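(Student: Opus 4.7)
I would prove the lemma by induction on $s$, letting $a_s := \binom{3s-1}{s-1}\frac{3s}{s+2}$ and $b_s := 2^{3s-3}$. The base case $s=1$ is immediate since $a_1 = \binom{2}{0}\cdot\frac{3}{3} = 1 = 2^0 = b_1$. Since $b_{s+1}/b_s = 8$, the inductive step reduces to showing $a_{s+1}/a_s \leq 8$ for every $s\geq 1$.

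To compute this ratio, I would first simplify
\[
\frac{\binom{3s+2}{s}}{\binom{3s-1}{s-1}} = \frac{(3s)(3s+1)(3s+2)}{s(2s+1)(2s+2)} = \frac{3(3s+1)(3s+2)}{2(s+1)(2s+1)},
\]
and then multiply by $\frac{3(s+1)/(s+3)}{3s/(s+2)} = \frac{(s+1)(s+2)}{s(s+3)}$ to obtain
\[
\frac{a_{s+1}}{a_s} = \frac{3(3s+1)(3s+2)(s+2)}{2s(2s+1)(s+3)}.
\]
The desired bound $a_{s+1}/a_s \leq 8$ is therefore equivalent to the polynomial inequality $3(3s+1)(3s+2)(s+2) \leq 16s(2s+1)(s+3)$.

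Expanding both sides gives $27s^3 + 81s^2 + 60s + 12$ on the left and $32s^3 + 112s^2 + 48s$ on the right, so it suffices to verify $5s^3 + 31s^2 - 12s - 12 \geq 0$ for every integer $s\geq 1$. This is routine: at $s=1$ the value is $12$, and for $s\geq 1$ one already has $31s^2 \geq 31 > 12s+12$ at $s=1$ with the gap only widening afterwards (the derivative of $31s^2-12s-12$ is $62s-12>0$ for $s\geq 1$), while the cubic term $5s^3$ is a pure additional surplus. There is no real obstacle: the only care needed is the algebraic bookkeeping of the binomial ratio and the elementary verification of the resulting cubic. One could alternatively rewrite $a_s = \binom{3s}{s}\cdot\frac{s}{s+2}$ using the identity $\binom{3s-1}{s-1} = \frac{1}{3}\binom{3s}{s}$ and invoke the asymptotic $\binom{3s}{s}\sim (27/4)^s/\sqrt{4\pi s/3}$, but this only settles large $s$ and would still require hand-checking small cases, so direct induction on the ratio is cleaner.
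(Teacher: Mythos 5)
Your proof is correct and takes essentially the same route as the paper: induction on $s$ via the ratio $a_{s+1}/a_s$, which you compute to the same expression $\frac{3(3s+1)(3s+2)(s+2)}{2s(2s+1)(s+3)}$ as the paper does. The only difference is that you close the inductive step with the exact polynomial inequality $5s^3+31s^2-12s-12\geq 0$ (valid for all $s\geq 1$, so a single base case suffices), whereas the paper uses the cruder bound $2.25\,(3+2/s)<8$, which only holds for $s\geq 4$ and therefore forces manual verification of $s=1,2,3,4$; your version is marginally cleaner and your algebra checks out.
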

\begin{proof}
	By induction on $s$. For $s=1,2,3,4$ the claim can be verified manually. 
	We assume the claim for some $s\geq 4$ and prove it for $s+1$. When $s$ grows to $s+1$, the right-hand side is multiplied by $8$. The left-hand side is multiplied by:
	\begin{align*}
	\bigg[
	\binom{3s+2}{s}\frac{3s+3}{s+3}
	\bigg]
	\bigg/
	\bigg[
	\binom{3s-1}{s-1}\frac{3s}{s+2}
	\bigg]
	&=
	\frac{
		(3s+2)!(3s+3)(s-1)!(2s)!(s+2)
	}{
		(s)!(2s+2)!(s+3)(3s-1)!(3s)
	}
	\\
	&=
	\frac{
		(3s+2)(3s+1)(3s)(3s+3)(s+2)
	}{
		(s)(2s+2)(2s+1)(s+3)(3s)
	}
	\\
	&=
	\frac{
		3(3s+2)(3s+1)(s+2)
	}{
		2(s)(2s+1)(s+3)
	}
	\\
	&\leq
	\frac{
		3\cdot 3 \cdot(3s+2)
	}{
		2\cdot 2 \cdot (s)
	} = 2.25\cdot(3+2/s).
	\end{align*}
	When $s\geq 4$ this expression is less than 8, so the left-hand side remains smaller than the right-hand side.
\end{proof}

\begin{lemma}
	\label{3s-1*}
	For every $s\geq 1$:
	\begin{align*}
	\sum_{i=0}^{s - 1} \binom{3 s-1}{i}
	+
	\sum_{i=0}^{s - 2} \binom{3 s-1}{i}
	\leq 2^{3 s-3}.
	\end{align*}
\end{lemma}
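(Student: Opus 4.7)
\textbf{Proof proposal for Lemma~\ref{3s-1*}.}

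The plan is to first collapse the two sums on the left-hand side into a single binomial tail using Pascal's identity, then bound that tail by a geometric series, and finally invoke Lemma~\ref{3s-1}. Concretely, Pascal's identity $\binom{3s}{i} = \binom{3s-1}{i} + \binom{3s-1}{i-1}$ gives
\[
\sum_{i=0}^{s-1}\binom{3s}{i} \;=\; \sum_{i=0}^{s-1}\binom{3s-1}{i} + \sum_{i=0}^{s-1}\binom{3s-1}{i-1} \;=\; \sum_{i=0}^{s-1}\binom{3s-1}{i} + \sum_{i=0}^{s-2}\binom{3s-1}{i},
\]
so the claim is equivalent to $\sum_{i=0}^{s-1}\binom{3s}{i} \leq 2^{3s-3}$.

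Next I would bound this left tail by a geometric series. For $1 \leq i \leq s-1$, the ratio
\[
\frac{\binom{3s}{i-1}}{\binom{3s}{i}} \;=\; \frac{i}{3s-i+1}
\]
is increasing in $i$, so it is maximized at $i = s-1$, giving a value $r := \frac{s-1}{2s+2} < 1/2$. Iterating this bound and then extending to an infinite geometric series yields
\[
\sum_{i=0}^{s-1}\binom{3s}{i} \;\leq\; \binom{3s}{s-1}\sum_{j=0}^{\infty} r^{j} \;=\; \binom{3s}{s-1}\cdot\frac{2s+2}{s+3}.
\]

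Now I would relate $\binom{3s}{s-1}$ to $\binom{3s-1}{s-1}$ in order to apply Lemma~\ref{3s-1}. Using Pascal again, $\binom{3s}{s-1} = \binom{3s-1}{s-1} + \binom{3s-1}{s-2} = \binom{3s-1}{s-1}\bigl(1 + \frac{s-1}{2s+1}\bigr) = \binom{3s-1}{s-1}\cdot\frac{3s}{2s+1}$. Substituting and then applying Lemma~\ref{3s-1} in the form $\binom{3s-1}{s-1} \leq 2^{3s-3}\cdot\frac{s+2}{3s}$:
\[
\sum_{i=0}^{s-1}\binom{3s}{i} \;\leq\; 2^{3s-3}\cdot\frac{s+2}{3s}\cdot\frac{3s}{2s+1}\cdot\frac{2s+2}{s+3} \;=\; 2^{3s-3}\cdot\frac{(s+2)(2s+2)}{(2s+1)(s+3)}.
\]

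The final step, which I anticipate as the only potential sticking point, is the arithmetic check that $(s+2)(2s+2) \leq (2s+1)(s+3)$ for all $s \geq 1$. Expanding both sides gives $2s^2 + 6s + 4 \leq 2s^2 + 7s + 3$, which simplifies to $1 \leq s$, holding with equality at $s=1$ and strictly for $s \geq 2$. This completes the proof. The main obstacle is really just ensuring the geometric-series bound is tight enough; using only the crude bound $r < 1/2$ (giving $1/(1-r) = 2$) does not suffice, so it is essential to use the exact value $r = (s-1)/(2s+2)$ and the closed form $1/(1-r) = (2s+2)/(s+3)$.
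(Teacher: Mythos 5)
Your proof is correct, and it takes a somewhat different route from the paper's. The paper bounds the two partial sums directly by invoking an externally cited inequality of Lugo, $\sum_{i=0}^{k}\binom{N}{i}\leq\binom{N}{k}\frac{N-k+1}{N-2k+1}$ (stated without proof, with a footnote to MathOverflow), computes $f(N,k+1)+f(N,k)\leq\binom{N}{k+1}\frac{N+1}{N-2k-1}$, and specializes to $N=3s-1$, $k=s-2$ to land exactly on the quantity $\binom{3s-1}{s-1}\frac{3s}{s+2}$ that Lemma~\ref{3s-1} controls. You instead first merge the two sums via Pascal's identity into the single tail $\sum_{i=0}^{s-1}\binom{3s}{i}$ and then prove the needed tail bound from scratch with an explicit geometric-ratio argument (the ratio $\frac{i}{3s-i+1}$ is maximized at $i=s-1$, giving $r=\frac{s-1}{2s+2}<\frac12$), before converting $\binom{3s}{s-1}$ back to $\binom{3s-1}{s-1}$ and applying Lemma~\ref{3s-1}. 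Your intermediate bound $2^{3s-3}\cdot\frac{(s+2)(2s+2)}{(2s+1)(s+3)}$ is marginally weaker than the paper's $\binom{3s-1}{s-1}\frac{3s}{s+2}\leq 2^{3s-3}$, but the final arithmetic check $(s+2)(2s+2)\leq(2s+1)(s+3)\iff 1\leq s$ closes the gap. What your approach buys is self-containedness---no appeal to an unproved external inequality---at the cost of a slightly longer chain of estimates; all the individual steps (the Pascal merge with the vanishing $i=-1$ term, the ratio computation, the identity $\binom{3s}{s-1}=\binom{3s-1}{s-1}\cdot\frac{3s}{2s+1}$, and the edge case $s=1$ where equality holds throughout) check out.
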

\begin{proof}
	For $N\geq 2k$, denote by $f(N,k)$ the sum of the first $k$ binomial coefficients: $f(N,k) := \sum_{i=0}^k \binom{N}{k}$.
	Michael Lugo
	proved the following upper bound on this sum:\footnote{
		Here: 
		https://mathoverflow.net/a/17236/34461
	}
	\begin{align*}
	f(N,k)\leq \binom{N}{k}\frac{N-k+1}{N-2k+1}.
	\end{align*}
	Therefore:
	\begin{align*}
	f(N,k+1) + f(N,k) &=
	2 f(N,k+1) - \binom{N}{k+1}
	\\
	&\leq
	\binom{N}{k+1}
	\bigg[
	2 \cdot\frac{N-k}{N-2k-1}
	- 1
	\bigg]
	\\
	&=
	\binom{N}{k+1}
	\frac{N+1}{N-2k-1}.
	\end{align*}
	The left-hand side of the claim is this expression with $N=3s-1$ and $k=s-2$, so it is no more than:
	\begin{align*}
	\binom{3s-1}{s-1}
	\frac{3s}{3s-1-2s+4-1}
	=
	\binom{3s-1}{s-1}
	\frac{3s}{s+2},
	\end{align*}
	which by Lemma \ref{3s-1} is at most $2^{3s-3}$.
\end{proof}

Our next lemma is a generalization of Lemma \ref{3s-1*}.\footnote{
	We are grateful to Alex Francisco and Y. Forman for their help in proving this lemma here: 
	https://math.stackexchange.com/a/2604279/29780
}

\begin{lemma}
	\label{cs-1}
	For all integers $c\geq 3$ and $s\geq 1$:
	\begin{align*}
	\sum_{i=0}^{s-1} \binom{cs-1}{i}
	+
	\sum_{i=0}^{s-2} \binom{cs-1}{i}
	\leq 2^{cs-c}.
	\end{align*}
\end{lemma}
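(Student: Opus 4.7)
The strategy parallels the proof of Lemma~\ref{3s-1*}, handling the general $c\geq 3$ case.

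First, I would apply the Lugo-type partial-sum bound $f(N,k+1)+f(N,k)\leq\binom{N}{k+1}(N+1)/(N-2k-1)$ recalled in the proof of Lemma~\ref{3s-1*}, specialized to $N=cs-1$ and $k=s-2$. Since $N-2k-1=(c-2)s+2$, this reduces the target to the single-term bound
\[
h_c(s) \;:=\; \binom{cs-1}{s-1}\cdot\frac{cs}{(c-2)s+2} \;\leq\; 2^{cs-c}\qquad(c\geq 3,\ s\geq 1),
\]
which is the direct generalization of Lemma~\ref{3s-1}.

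Next, I would prove this single-term bound by induction on $s$ for each fixed $c$. The base case $s=1$ is immediate with equality: $h_c(1)=c/c=1=2^0$. For the inductive step, cancelling factorials gives the closed-form ratio
\[
\frac{h_c(s+1)}{h_c(s)} \;=\; c\cdot\underbrace{\tfrac{s+1}{s}\cdot\tfrac{(c-2)s+2}{(c-2)s+c}}_{A(s)}\cdot\underbrace{\prod_{i=1}^{c-1}\tfrac{cs+i}{(c-1)s+i}}_{B(s)},
\]
and it suffices to prove $c\,A(s)\,B(s)\leq 2^c$ for every integer $s\geq 1$. At the boundary $s=1$, the telescoping identity $B(1)=(2c-1)/c$ gives $cA(1)B(1)=c(2c-1)/(c-1)$; as $s\to\infty$, $A(s)\to 1$ and $B(s)\to(c/(c-1))^{c-1}$, giving the limit $c^c/(c-1)^{c-1}$. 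Both quantities can be shown to be at most $2^c$ for $c\geq 3$ by elementary manipulation: for the first, $2c^2/(c-1)\leq 4c\leq 2^c$ when $c\geq 4$, with direct verification at $c=3$ (where it is $15/2\leq 8$); for the second, $c(1+1/(c-1))^{c-1}<ce\leq 2^c$ for $c\geq 4$, plus the direct check $27/4<8$ at $c=3$.

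The main obstacle is controlling $cA(s)B(s)$ at interior values of $s$. Since $A$ is strictly decreasing to $1$ while $B$ is strictly increasing to $(c/(c-1))^{c-1}$, bounding them independently yields only the too-weak estimate $c\cdot A(1)\cdot\lim_{s\to\infty}B(s)=c^{c+1}/(c-1)^c$, which already exceeds $2^c$ at $c=3$ (it equals $81/8>8$). My plan to circumvent this is to prove that $s\mapsto\log(A(s)B(s))$, viewed as a function of the real variable $s\geq 1$, is \emph{convex}: a convex function on an interval attains its supremum at a boundary point, so this would give $A(s)B(s)\leq\max\bigl(A(1)B(1),\,\lim_{s\to\infty}A(s)B(s)\bigr)$ and reduce everything to the already-verified boundary bounds. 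Verifying convexity amounts to showing $(\log A+\log B)''\geq 0$ on $[1,\infty)$; since $\log A$ is an explicit rational logarithm and $\log B$ is a finite sum of such logarithms, this is ultimately a one-variable calculus inequality. The delicate regime is small $s$ paired with small $c$ (e.g.\ $c=3$, $s\in\{2,3\}$), where the asymptotic estimates are loosest; if convexity is hard to verify cleanly there, one can fall back on direct computation for finitely many $(c,s)$ pairs and use the convexity/asymptotic argument for the rest.
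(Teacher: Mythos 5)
Your reduction via the Lugo partial-sum bound to the single-term inequality $\binom{cs-1}{s-1}\cdot\frac{cs}{(c-2)s+2}\leq 2^{cs-c}$ is correct and faithfully generalizes Lemma~\ref{3s-1*}, and your ratio computation $h_c(s+1)/h_c(s)=c\,A(s)\,B(s)$ together with the two boundary evaluations is also correct. However, the step on which everything rests --- convexity of $s\mapsto\log(A(s)B(s))$ on $[1,\infty)$ --- is false, and not merely in a delicate small-$(c,s)$ regime. A convex function on $[1,\infty)$ that is bounded above must be non-increasing (a single positive slope would force it to infinity), whereas $A(s)B(s)$ is bounded and \emph{eventually increasing} toward its limit $(c/(c-1))^{c-1}$: for $c=3$ one has $A(5)B(5)=\frac{6\cdot 7\cdot 16\cdot 17}{5\cdot 8\cdot 11\cdot 12}=\frac{11424}{5280}\approx 2.164<\frac{9}{4}=\lim_{s\to\infty}A(s)B(s)$, so the function must rise again after $s=5$. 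The same failure is visible infinitesimally: $(\log B)''=\sum_{i=1}^{c-1}\bigl[(s+i/(c-1))^{-2}-(s+i/c)^{-2}\bigr]$ is negative term by term and decays like $-s^{-3}$, while $(\log A)''>0$ decays only like $s^{-4}$, so $(\log(AB))''<0$ for all large $s$. Consequently your fallback (finitely many direct checks plus convexity elsewhere) cannot work either, since convexity fails on the entire tail. The target inequality $c\,A(s)\,B(s)\leq 2^c$ does appear to be true, but the intermediate claim $A(s)B(s)\leq\max\bigl(A(1)B(1),\lim_{s\to\infty}A(s)B(s)\bigr)$ needs a different justification; note also that for $c\geq 4$ the larger of these two quantities is the limit rather than the value at $s=1$ (e.g.\ $(4/3)^3>7/3$), so no simple monotonicity statement is available either.

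For comparison, the paper sidesteps this interior-maximum problem entirely: it runs the induction-on-$s$ ratio argument only for the base case $c=3$ (Lemmas~\ref{3s-1} and~\ref{3s-1*}), where the crude bound $\frac{9}{4}(3+2/s)<8$ suffices for $s\geq 4$ and the remaining values of $s$ are checked by hand, and then handles general $c$ by induction on $c$: passing from $c$ to $c+1$ multiplies the right-hand side by $2^{s-1}$, and a direct product estimate shows that each summand $\binom{cs-1}{i}$ with $i\leq s-1$ grows by a factor of at most $\bigl(1+\frac{s}{cs-(s-1)}\bigr)^{s-1}<2^{s-1}$. To repair your route you would need either that term-by-term induction on $c$, or a genuinely different argument controlling $A(s)B(s)$ uniformly on $[1,\infty)$.
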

\begin{proof}
	We prove the claim by induction on $c$ for every fixed $s$. 
	For $c=3$, the inequality 
	follows from Lemma \ref{3s-1*}.
	We now assume the claim is true for some $c\geq 3$.  
	When $c$ grows to $c+1$, 
	the left-hand side still has the same number of summands ($2s-3$ summands), where in each summand, the $cs-1$ at the top becomes $cs+s-1$. 
	Meanwhile, 
	the right-hand side is multiplied by  $2^{s-1}$. Therefore, it is sufficient to show that in the left-hand side, each summand grows by a factor of at most $2^{s-1}$.
	Indeed, for every $i\leq s-1$:
	\begin{align*}
	\frac{
		\binom{cs+s-1}{i}
	}
	{
		\binom{cs-1}{i}
	}
	&=
	\frac{
		(cs+s-1)! / (cs+s-1-i)!
	}
	{
		(cs-1)! / (cs-1-i)!
	}
	\\
	&=
	\frac{
		(cs+s-1)\cdots(cs+s-i)
	}
	{
		(cs-1)\cdots(cs-i)
	}
	\\
	&=
	\bigg(
	1 + \frac{s}{cs-1}
	\bigg)
	\cdots
	\bigg(
	1 + \frac{s}{cs-i}
	\bigg)
	\\
	&\leq
	\bigg(
	1 + \frac{s}{cs-i}
	\bigg)^i
	&&\text{(the rightmost term is the largest)}
	\\
	&\leq
	\bigg(
	1 + \frac{s}{cs-(s-1)}
	\bigg)^{s-1}
	&&(i\leq s-1)
	\\
	&<
	\bigg(
	1 + \frac{s}{2s-s+1}
	\bigg)^{s-1}
	&&(c>2)
	\\
	&< 2^{s-1}.
	\end{align*}
	This completes the proof.
\end{proof}

We now use this combinatorial lemma to prove a useful lower bound on $B(r,s)$, which implies a democratic fairness guarantee.
\begin{lemma}
	\label{binary-positive-1ofc}
	For every $c\geq 3$ and $s\geq 2$:
	\begin{align*}
	B(c s - 1, s) \geq 1-1/2^{c-1}.
	\end{align*}
	Therefore,
	RWAV attains 
	$(1-1/2^{c-1})$-democratic 1-out-of-$c$ MMS-fairness.
\end{lemma}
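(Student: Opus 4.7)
The plan is to reduce the claim directly to the combinatorial Lemma~\ref{cs-1} via the closed-form expression for $B$ supplied by Lemma~\ref{B-closed-form}. The democratic-fairness conclusion then follows from Lemma~\ref{binary-positive-general} together with the already-established base case $B(c-1,1)=1-1/2^{c-1}$ from the proof of Theorem~\ref{binary-positive-1ofbestc}.

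First I would substitute $r=cs-1$ into the closed form
\[
B(r,s)=\frac{1}{2^{r}}\sum_{i=s}^{r-s+1}\binom{r}{i},
\]
so that
\[
B(cs-1,s)=\frac{1}{2^{cs-1}}\sum_{i=s}^{cs-s}\binom{cs-1}{i}.
\]
Next, using the identity $\sum_{i=0}^{cs-1}\binom{cs-1}{i}=2^{cs-1}$, I would rewrite the expression as one minus the ``tails'':
\[
B(cs-1,s)=1-\frac{1}{2^{cs-1}}\Biggl[\sum_{i=0}^{s-1}\binom{cs-1}{i}+\sum_{i=cs-s+1}^{cs-1}\binom{cs-1}{i}\Biggr].
\]
Applying the symmetry $\binom{cs-1}{i}=\binom{cs-1}{cs-1-i}$ to the upper tail turns it into $\sum_{j=0}^{s-2}\binom{cs-1}{j}$, so the bracketed quantity becomes precisely the left-hand side of Lemma~\ref{cs-1}.

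At this point the only real work is invoking Lemma~\ref{cs-1}, which bounds that bracketed sum by $2^{cs-c}$; dividing by $2^{cs-1}$ yields $B(cs-1,s)\geq 1-2^{cs-c}/2^{cs-1}=1-1/2^{c-1}$, as required. The main obstacle is thus encapsulated in Lemma~\ref{cs-1}, whose proof (already given) is the genuinely nontrivial step; once that is in hand the present lemma is essentially a bookkeeping calculation.

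For the final ``therefore'' clause, I would apply Lemma~\ref{binary-positive-general} with the fairness function $s(r)=\lfloor r/c\rfloor$ corresponding to $1$-out-of-$c$ MMS-fairness. By Lemma~\ref{B-increasing}, $B(r-1,\lfloor r/c\rfloor)$ is minimized at the integer jumps $r=cs$, where it equals $B(cs-1,s)$. For $s\geq 2$ this is at least $1-1/2^{c-1}$ by the calculation above, and for $s=1$ it equals $B(c-1,1)=1-1/2^{c-1}$ (as shown inside the proof of Theorem~\ref{binary-positive-1ofbestc}). Hence the infimum for group~$2$ is at least $1-1/2^{c-1}$, and (since $B$ is weakly increasing in $r$ by Lemma~\ref{B-increasing}) the bound for group~$1$ is no worse, giving the claimed democratic-fairness guarantee.
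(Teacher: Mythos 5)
Your proposal is correct and follows essentially the same route as the paper: substitute $r=cs-1$ into the closed form of Lemma~\ref{B-closed-form}, pass to the complementary tails, fold the upper tail onto the lower one via the symmetry $\binom{cs-1}{i}=\binom{cs-1}{cs-1-i}$, and invoke Lemma~\ref{cs-1}. Your additional paragraph deriving the democratic-fairness conclusion via Lemma~\ref{binary-positive-general} (handling $s=1$ through $B(c-1,1)=1-1/2^{c-1}$) is also sound and matches how the paper assembles this in Theorem~\ref{binary-positive-1-of-c}.
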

\begin{proof}
	Using the closed form for $B(r,s)$ from Lemma \ref{B-closed-form}, we have to prove that:
	\begin{align*}
	&
	\frac{1}{2^{c s - 1}}\sum_{i=s}^{c s - s} \binom{cs-1}{i}
	\geq 1-\frac{1}{2^{c-1}}
	\\
	\iff
	&
	\frac{1}{2^{c s - 1}}
	\left[
	\sum_{i=0}^{s-1} \binom{cs-1}{i}
	+
	\sum_{i=cs-s+1}^{cs-1} \binom{cs-1}{i}
	\right]
	\leq \frac{1}{2^{c-1}}
	\\
	\iff
	&
	\sum_{i=0}^{s-1} \binom{cs-1}{i}
	+
	\sum_{i=cs-s+1}^{cs-1} \binom{cs-1}{i}
	\leq 2^{cs-c}
	\\
	\iff
	&
	\sum_{i=0}^{s-1} \binom{cs-1}{i}
	+
	\sum_{i=0}^{s-2} \binom{cs-1}{i}
	\leq 2^{cs-c},
	\end{align*}
	which we already proved in Lemma \ref{cs-1}.
\end{proof}

\section{Sample Runs of Some Protocols}
\label{sec:sample-runs}
Below we present sample runs of some of our allocation protocols. The source code used for the samples is available at: 
https://github.com/erelsgl/family-fair-allocation.

\subsection{RWAV protocol}
\label{sample:rwav}
Below are sample runs of the RWAV protocol (Section \ref{sub:binary-positive}) on an instance with five goods and two families with different fairness criteria. 

\begin{verbatim}
Group 1 seeks 1-out-of-2-maximin-share and has:
 * 2 binary agents who want ['v', 'x']
 * 1 binary agent  who want ['v', 'x', 'y']
 * 5 binary agents who want ['w', 'x', 'y', 'z']
 * 3 binary agents who want ['w', 'z']
Group 2 seeks one-of-best-2 and has:
 * 2 binary agents who want ['w', 'x', 'y', 'z']
 * 3 binary agents who want ['v', 'z']

-------

RWAV protocol - Group 1 plays first

Turn #1: Group 1's turn to pick a good from ['v', 'w', 'x', 'y', 'z']:
Calculating member weights:
            Desired set r  s  weight   
2 members   v,x         2  1  0.25     
1 member    v,x,y       3  1  0.125    
5 members   w,x,y,z     4  2  0.25     
3 members   w,z         2  1  0.25     
Calculating remaining good weights:
      Weight   
z     2.0      
v     0.625    
y     1.375    
x     1.875    
w     2.0      
Group 1 picks w

Turn #2: Group 2's turn to pick a good from ['v', 'x', 'y', 'z']:
Calculating member weights:
            Desired set r  s  weight   
2 members   w,x,y,z     3  1  0.125    
3 members   v,z         2  1  0.25     
Calculating remaining good weights:
      Weight   
z     1.0      
v     0.75     
y     0.25     
x     0.25     
Group 2 picks z

Turn #3: Group 1's turn to pick a good from ['v', 'x', 'y']:
Calculating member weights:
            Desired set r  s  weight   
2 members   v,x         2  1  0.25     
1 member    v,x,y       3  1  0.125    
5 members   w,x,y,z     2  1  0.25     
3 members   w,z         0  0  0        
Calculating remaining good weights:
      Weight   
v     0.625    
y     1.375    
x     1.875    
Group 1 picks x

Turn #4: Group 2's turn to pick a good from ['v', 'y']:
Calculating member weights:
            Desired set r  s  weight   
2 members   w,x,y,z     1  0  0        
3 members   v,z         1  0  0        
Calculating remaining good weights:
      Weight   
v     0        
y     0        
Group 2 picks v

Turn #5: Group 1's turn to pick a good from ['y']:
Calculating member weights:
            Desired set r  s  weight   
2 members   v,x         0  0  0        
1 member    v,x,y       1  0  0        
5 members   w,x,y,z     1  0  0        
3 members   w,z         0  0  0        
Calculating remaining good weights:
      Weight   
y     0        
Group 1 picks y

Final allocation:
 *  Group 1: allocated bundle = {'y', 'x', 'w'}, happy members = 11/11
 *  Group 2: allocated bundle = {'v', 'z'}, happy members = 5/5

-------

RWAV protocol - Group 2 plays first

Turn #1: Group 2's turn to pick a good from ['v', 'w', 'x', 'y', 'z']:
Calculating member weights:
            Desired set r  s  weight   
2 members   w,x,y,z     4  1  0.0625   
3 members   v,z         2  1  0.25     
Calculating remaining good weights:
      Weight   
z     0.875    
v     0.75     
y     0.125    
x     0.125    
w     0.125    
Group 2 picks z

Turn #2: Group 1's turn to pick a good from ['v', 'w', 'x', 'y']:
Calculating member weights:
            Desired set r  s  weight   
2 members   v,x         2  1  0.25     
1 member    v,x,y       3  1  0.125    
5 members   w,x,y,z     3  2  0.375    
3 members   w,z         1  1  0.5      
Calculating remaining good weights:
      Weight   
v     0.625    
y     2.0      
x     2.5      
w     3.375    
Group 1 picks w

Turn #3: Group 2's turn to pick a good from ['v', 'x', 'y']:
Calculating member weights:
            Desired set r  s  weight   
2 members   w,x,y,z     2  0  0        
3 members   v,z         1  0  0        
Calculating remaining good weights:
      Weight   
v     0        
y     0        
x     0        
Group 2 picks v

Turn #4: Group 1's turn to pick a good from ['x', 'y']:
Calculating member weights:
            Desired set r  s  weight   
2 members   v,x         1  1  0.5      
1 member    v,x,y       2  1  0.25     
5 members   w,x,y,z     2  1  0.25     
3 members   w,z         0  0  0        
Calculating remaining good weights:
      Weight   
y     1.5      
x     2.5      
Group 1 picks x

Turn #5: Group 2's turn to pick a good from ['y']:
Calculating member weights:
            Desired set r  s  weight   
2 members   w,x,y,z     1  0  0        
3 members   v,z         0  -1 0        
Calculating remaining good weights:
      Weight   
y     0        
Group 2 picks y

Final allocation:
 *  Group 2: allocated bundle = {'v', 'z', 'y'}, happy members = 5/5
 *  Group 1: allocated bundle = {'x', 'w'}, happy members = 11/11

\end{verbatim}

\subsection{Line-allocation protocol for two groups}
\label{sample:line2}
Below are three sample runs of the line-allocation algorithm of Theorem \ref{monotonic-positive-1/2}, on an instance with six goods and two families, where the fairness criterion is EF1.

\begin{verbatim}
Group 1 seeks envy-free-except-1 and has:
 * 7 agents with additive valuations: u=1 v=1 w=2 x=4 y=8 z=16
 * 2 agents with additive valuations: u=16 v=16 w=8 x=4 y=2 z=1
Group 2 seeks envy-free-except-1 and has:
 * 5 agents with additive valuations: u=1 v=1 w=1 x=3 y=3 z=4
 * 1 agent  with additive valuations: u=4 v=4 w=3 x=1 y=3 z=1
Group 3 seeks envy-free-except-1 and has:
 * 9 agents with additive valuations: u=1 v=1 w=1 x=2 y=3 z=3
 * 3 agents with additive valuations: u=3 v=3 w=3 x=2 y=1 z=1
 

----- Allocation between group 1 and group 2 -----

Current partition:  [] | ['u', 'v', 'w', 'x', 'y', 'z']:
   Group 1: 0/9 members think the left bundle is EF1
   Group 2: 0/6 members think the left bundle is EF1

Current partition:  ['u'] | ['v', 'w', 'x', 'y', 'z']:
   Group 1: 2/9 members think the left bundle is EF1
   Group 2: 0/6 members think the left bundle is EF1

Current partition:  ['u', 'v'] | ['w', 'x', 'y', 'z']:
   Group 1: 2/9 members think the left bundle is EF1
   Group 2: 1/6 members think the left bundle is EF1

Current partition:  ['u', 'v', 'w'] | ['x', 'y', 'z']:
   Group 1: 2/9 members think the left bundle is EF1
   Group 2: 1/6 members think the left bundle is EF1

Current partition:  ['u', 'v', 'w', 'x'] | ['y', 'z']:
   Group 1: 9/9 members think the left bundle is EF1
   Group 1 gets the left bundle
   Group 2 gets the remaining bundle

Final allocation:
 *  Group 1: allocated bundle = {'u', 'w', 'x', 'v'}, happy members = 9/9
 *  Group 2: allocated bundle = {'y', 'z'}, happy members = 5/6



----- Allocation between group 1 and group 3 -----

Current partition:  [] | ['u', 'v', 'w', 'x', 'y', 'z']:
   Group 1: 0/9 members think the left bundle is EF1
   Group 3: 0/12 members think the left bundle is EF1

Current partition:  ['u'] | ['v', 'w', 'x', 'y', 'z']:
   Group 1: 2/9 members think the left bundle is EF1
   Group 3: 0/12 members think the left bundle is EF1

Current partition:  ['u', 'v'] | ['w', 'x', 'y', 'z']:
   Group 1: 2/9 members think the left bundle is EF1
   Group 3: 3/12 members think the left bundle is EF1

Current partition:  ['u', 'v', 'w'] | ['x', 'y', 'z']:
   Group 1: 2/9 members think the left bundle is EF1
   Group 3: 3/12 members think the left bundle is EF1

Current partition:  ['u', 'v', 'w', 'x'] | ['y', 'z']:
   Group 1: 9/9 members think the left bundle is EF1
   Group 1 gets the left bundle
   Group 3 gets the remaining bundle

Final allocation:
 *  Group 1: allocated bundle = {'u', 'w', 'x', 'v'}, happy members = 9/9
 *  Group 3: allocated bundle = {'y', 'z'}, happy members = 9/12


----- Allocation between group 2 and group 3 -----

Current partition:  [] | ['z', 'y', 'x', 'w', 'v', 'u']:
   Group 2: 0/6 members think the left bundle is EF1
   Group 3: 0/12 members think the left bundle is EF1

Current partition:  ['z'] | ['y', 'x', 'w', 'v', 'u']:
   Group 2: 0/6 members think the left bundle is EF1
   Group 3: 0/12 members think the left bundle is EF1

Current partition:  ['z', 'y'] | ['x', 'w', 'v', 'u']:
   Group 2: 5/6 members think the left bundle is EF1
   Group 2 gets the left bundle
   Group 3 gets the remaining bundle

Final allocation:
 *  Group 2: allocated bundle = {'y', 'z'}, happy members = 5/6
 *  Group 3: allocated bundle = {'v', 'u', 'w', 'x'}, happy members = 12/12
\end{verbatim}

\subsection{Line-allocation protocol for three groups}
\label{sample:line3}
Below is a sample run of the line-allocation algorithm of Theorem 
\ref{kgroups-additive-positive-proportional}, on an instance with six goods and three families, where the fairness criterion is \prop{2}.

\begin{verbatim}
Group 1 seeks proportionality-except-2 and has:
 * 7 agents with additive valuations: u=1 v=1 w=2 x=4 y=8 z=16
 * 2 agents with additive valuations: u=16 v=16 w=8 x=4 y=2 z=1
Group 2 seeks proportionality-except-2 and has:
 * 5 agents with additive valuations: u=1 v=1 w=1 x=3 y=3 z=4
 * 1 agent  with additive valuations: u=4 v=4 w=3 x=1 y=3 z=1
Group 3 seeks proportionality-except-2 and has:
 * 9 agents with additive valuations: u=1 v=1 w=1 x=2 y=3 z=3
 * 3 agents with additive valuations: u=3 v=3 w=3 x=2 y=1 z=1

Current partition:  [] | ['u', 'v', 'w', 'x', 'y', 'z']:
   Group 1: 0/9 members think the left bundle is PROP-2
   Group 2: 0/6 members think the left bundle is PROP-2
   Group 3: 0/12 members think the left bundle is PROP-2

Current partition:  ['u'] | ['v', 'w', 'x', 'y', 'z']:
   Group 1: 2/9 members think the left bundle is PROP-2
   Group 2: 1/6 members think the left bundle is PROP-2
   Group 3: 3/12 members think the left bundle is PROP-2

Current partition:  ['u', 'v'] | ['w', 'x', 'y', 'z']:
   Group 1: 2/9 members think the left bundle is PROP-2
   Group 2: 6/6 members think the left bundle is PROP-2
   Group 2 gets the left bundle

Current partition:  [] | ['w', 'x', 'y', 'z']:
   Group 1: 0/9 members think the left bundle is PROP-2
   Group 3: 0/12 members think the left bundle is PROP-2

Current partition:  ['w'] | ['x', 'y', 'z']:
   Group 1: 2/9 members think the left bundle is PROP-2
   Group 3: 3/12 members think the left bundle is PROP-2

Current partition:  ['w', 'x'] | ['y', 'z']:
   Group 1: 9/9 members think the left bundle is PROP-2
   Group 1 gets the left bundle
   Group 3 gets the remaining bundle

Final allocation:
 *  Group 1: allocated bundle = {'x', 'w'}, happy members = 9/9
 *  Group 2: allocated bundle = {'v', 'u'}, happy members = 6/6
 *  Group 3: allocated bundle = {'y', 'z'}, happy members = 9/12
\end{verbatim}

\section{A Randomized Algorithm}
\label{sec:randomized}
While our main focus in this paper is on deterministic algorithms, it is interesting that we can obtain better democratic fairness guarantees by using a randomized algorithm. For simplicity, we illustrate this for two groups.

Instead of the RWAV protocol of Section \ref{sub:binary-positive},
we define a protocol called Coin-toss with Weighted Approval Voting (CWAV) as follows: 
\begin{framed}
\noindent
While there are remaining goods:
\begin{itemize}
\item Pick $i\in\{1,2\}$ uniformly at random.
\item Group $i$ picks a good.
\end{itemize}
\end{framed}

Each group picks its good using a weighted-approval scheme where the weights are defined by:

\begin{align*}
\crs{r}{s} &:= 
\begin{cases}
1  &  s\leq 0;
\\
0  &  0<s \text{~and~} r<s;
\\
\frac{1}{2}[\crs{r-1}{s}+\crs{r-1}{s-1}]
& \text{otherwise}.
\end{cases}
\\
w(r,s) &:= C(r,s) - C(r-1,s).
\end{align*}

CWAV is analyzed similarly to RWAV, by adding fiat payments. For simplicity we call the groups $1$ and $-1$.
\begin{itemize}
\item Initially, each member $j$ pays $C(r_j, s_j)$ to its group;
\item After group $i$ picks a good $g$, every member $j$ of $i$ who wants $g$ pays $w(r_j,s_j)$ to group $i$;
\item After group $-i$ picks a good $g$, every member $j$ of $i$ who wants $g$ receives $w(r_j,s_j)$ from group $i$.
\end{itemize}

Similarly to Lemma \ref{lem:balance}, 
it is easy to show that the balance of each agent $j$ is always $-C(r_j,s_j)$.
Instead of Lemma \ref{lem:balance-increasing}, we have:

\begin{lemma}
\label{lem:balance-increasing-C}
In each turn, 
the {expected} change in the 
balance of each group $i$ 
is weakly positive.
\end{lemma}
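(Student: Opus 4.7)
The plan is to compute the expected change in group $i$'s balance in a single turn by conditioning on the outcome of the coin toss. In contrast to the RWAV analysis in Lemma~\ref{lem:balance-increasing}, where a pair of consecutive deterministic turns had to be analyzed together (and where the payment formula used a max), here each turn stands on its own and the payment rule is the simpler $w(r_j, s_j)$. The advantage is that, under CWAV, group $-i$'s next move is symmetric to group $i$'s, so we expect a clean symmetry argument rather than the more delicate accounting needed for RWAV.

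Concretely, fix group $i$ and consider one turn of the protocol. For each remaining good $g$, let $D_i(g) \subseteq A_i$ denote the set of members of group $i$ who want $g$, and define its $i$-weight $W_i(g) := \sum_{j\in D_i(g)} w(r_j, s_j)$ (with $r_j, s_j$ being the current parameters at the start of this turn). With probability $1/2$ group $i$ is selected; by the WAV rule, it picks a good $g^*$ maximizing $W_i(g)$, and the payment rule contributes $+W_i(g^*)$ to group $i$'s balance. With probability $1/2$ group $-i$ is selected; it picks some good $g^{**}$ maximizing the analogous weight for its own members, and the payment rule contributes $-W_i(g^{**})$ to group $i$'s balance.

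Therefore the expected change in group $i$'s balance in this turn equals
\[
\tfrac{1}{2}\bigl(W_i(g^*) - W_i(g^{**})\bigr).
\]
Since $g^*$ was chosen to maximize $W_i(\cdot)$ over all remaining goods, and $g^{**}$ is one such remaining good, we have $W_i(g^*) \geq W_i(g^{**})$, so the expected change is nonnegative. The step that needs slight care is verifying that the good $g^{**}$ picked by group $-i$ is available to group $i$ at the moment of comparison: this is immediate because both selections take place on the same pool of remaining goods and exactly one good is removed per turn, so at the start of the turn both candidates are available.

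The main (minor) obstacle is really just the need to separate the two payments cleanly and to confirm that members who do not want the picked good neither pay nor receive anything, so they do not affect the balance change; this follows directly from the payment rules as stated. Unlike the RWAV proof, no case analysis on whether $w(r,s) \geq w(r{-}1,s{-}1)$ or the reverse is needed, because the recurrence for $C$ has no $\min$ and the payment is simply $w(r_j, s_j)$ in both directions. This essentially reduces the proof to the one-line optimality observation above.
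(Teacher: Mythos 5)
Your proof is correct and follows essentially the same route as the paper's: condition on the coin toss, observe that group $i$ gains the total $i$-weight of its own pick and loses the total $i$-weight of the other group's pick, each with probability $1/2$, and conclude from the optimality of the weighted-approval choice that the expected change is nonnegative. The additional remarks (availability of both candidate goods at the start of the turn, and the absence of the $\min$/max case analysis needed for RWAV) are accurate but not points the paper needed to belabor.
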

\begin{proof}
Suppose that, if group $i$ wins the coin-toss it picks a good $g_i$, while if group $-i$ wins it picks a good $g_{-i}$. The change in the balance of group $i$ is determined by its following subsets (not necessarily disjoint):
\begin{itemize}
\item $D_i$: members of group $i$ who desire $g_i$.
\item $D_{-i}$: members of group $i$ who desire $g_{-i}$.
\end{itemize}
With probability $1/2$, group $i$ wins the coin-toss, picks $g_i$, and receives $w(r_j,s_j)$ from each member in $D_i$.
With probability $1/2$, group $i$ loses the coin-toss and has to pay $w(r_j,s_j)$ to each member in $D_{-i}$.
Therefore the expected change in the group balance after one turn is:
\begin{align*}
\textbf{E}[\Delta[Balance]] =
\frac{1}{2}\bigg[
\sum_{j\in D_i} w(r_j,s_j) - \sum_{j\in D_{-i}} w(r_j,s_j)
\bigg].
\end{align*}
Since the group chose $g_i$ over $g_{-i}$, the total weight of $g_i$ is weakly larger, so the expected change in balance is $\geq 0$.
\end{proof}

Similarly to Lemma \ref{lem:rwav-end}, it is easy to show that, when the protocol ends, the group balance equals the number of its happy members.
Instead of Lemma \ref{binary-positive-general} we have:
\begin{lemma}
\label{binary-positive-randomized}
Given a fairness criterion represented by an integer function $s(r)$, 
the RWAV protocol 
yields an allocation that is fair for at least 
a fraction $h$ of the agents in each group, where:
\begin{align*}
h &= \inf_{r = 1, 2, \ldots} C(r,s(r)).
\end{align*}
\end{lemma}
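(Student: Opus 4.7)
The plan is to closely follow the proof of Lemma~\ref{binary-positive-general}, replacing the deterministic bookkeeping with bookkeeping in expectation. Two technical invariants mirroring Lemmas~\ref{lem:balance} and \ref{lem:rwav-end} come first. The \emph{balance invariant}---that each agent $j$'s balance equals $-C(r_j,s_j)$ throughout the protocol---is proved by induction on the turns: after initialization it holds by construction, and at each subsequent turn, the recurrence $C(r,s) = \tfrac{1}{2}[C(r-1,s)+C(r-1,s-1)]$ combined with the identity $w(r,s) = C(r,s)-C(r-1,s)$ shows that the balance of an agent who desires the picked good evolves exactly from $-C(r_j,s_j)$ to $-C(r_j-1,s_j-1)$ (when her own group picks) or $-C(r_j-1,s_j)$ (when the other group picks), matching the update of $(r_j,s_j)$. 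The \emph{terminal identity}---that each group's balance at the end of the protocol equals its number of happy members---follows because at termination $r_j=0$ for all agents, so happy agents ($s_j\leq 0$) have $C(r_j,s_j)=1$, unhappy agents ($s_j\geq 1>r_j$) have $C(r_j,s_j)=0$, and since payments flow only between each group and its own members, the group balance is the negative sum of its members' balances.

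With these in hand, the main step is short. Directly after initialization, the balance of group $i$ is deterministically $\sum_{j\in A_i} C(r_j,s_j) \geq h\cdot n_i$ by the definition of $h$. By Lemma~\ref{lem:balance-increasing-C}, the expected change in each group's balance in any single turn is nonnegative, conditional on the history. Iterating via the tower property, the expected final balance of group $i$ is at least $h\cdot n_i$, and by the terminal identity this equals $\mathbb{E}[\text{\# happy members in group }i]$, giving the claimed per-group bound. I would also note that the argument is, in a sense, cleaner than the RWAV analysis: because payments are always exactly $\pm w(r_j,s_j)$ (with no $\max$), the $\min$ in the recurrence for $B$ disappears, and no two-case payment analysis is needed at picking turns.

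The main conceptual obstacle is interpretive rather than technical: the phrase ``fair for at least a fraction $h$ of the agents in each group'' is most naturally read, under randomization, as a per-group guarantee \emph{in expectation}, since any single coin-toss realization could leave one group below the threshold while the other is well above it. The linearity-of-expectation argument just outlined delivers exactly this reading. Turning it into a simultaneous pointwise guarantee for both groups would require considerably more work---e.g., concentration bounds on the martingale formed by the group balances, or a deterministic sequencing of coin tosses that achieves at least $h\cdot n_i$ happy members in every group at once.
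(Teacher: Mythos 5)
Your proof follows essentially the same route as the paper's: the balance invariant and terminal identity carry over from the RWAV analysis (indeed more simply, as you note, since payments have no $\max$ and the recurrence for $C$ has no $\min$), and the conclusion follows from Lemma~\ref{lem:balance-increasing-C} together with the tower property. Your observation that the resulting guarantee holds only in expectation is also faithful to the paper, whose own proof likewise establishes only that the expected final balance is at least $h\cdot n_i$.
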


\begin{proof}
The initial balance in each group $i$ is at least $h\cdot n_i$.
By Lemma \ref{lem:balance-increasing-C}, the expected value of the balance-increase after each coin-toss is weakly positive.
Since the coin-toss in each turn is independent of the other turns, the expected balance after the last turn is weakly larger than in the first turn.
The balance after the last turn equals the number of happy agents.
\end{proof}

Some values of $C(r,s)$ are shown in Table \ref{tab:C}. 
By solving the recurrence relation we can express $C(r,s)$ as:
\begin{align*}
C(r,s) = \frac{1}{2^r}\sum_{i=s}^r \binom{r}{i}.
\end{align*}
Note that this is the same as the function $\maxh$ of Proposition \ref{binary-negative-maxh}, except the part where $r\leq 2 s -1$, for which $\maxh=0$ but CWAV attains a positive fraction in expectation (Obviously, with a randomized protocol we can always attain an expected fraction of $1/2$ by simply giving all goods to a group chosen uniformly at random, so the range where $C(r,s)\leq 1/2$ is not interesting).

\begin{table}
	\begin{center}
		\STautoround{3}
		\begin{spreadtab}{{tabular}{>{\headingstyle}c|cccccccccc}}
			@ $r\downarrow ~ s \implies$
			& \textbf{:=0}        & \textbf{:={[-1,0]+1}}   & \textbf{:={[-1,0]+1}} & \textbf{:={[-1,0]+1}}  & \textbf{:={[-1,0]+1}}  & \textbf{:={[-1,0]+1}}  & \textbf{:={[-1,0]+1}}  & \textbf{:={[-1,0]+1}}  & \textbf{:={[-1,0]+1}}  & \textbf{:={[-1,0]+1}}
			\\ \hline
			0
			& 1          & 0                   & 0        & 0 & 0 & 0 & 0 & 0 & 0 & 0
			\\ 
			[0,-1]+1   
			& 1          & ([-1,-1]+[0,-1])/2   & ([-1,-1]+[0,-1])/2   & ([-1,-1]+[0,-1])/2   & ([-1,-1]+[0,-1])/2   & ([-1,-1]+[0,-1])/2   & ([-1,-1]+[0,-1])/2   & ([-1,-1]+[0,-1])/2   & ([-1,-1]+[0,-1])/2   & ([-1,-1]+[0,-1])/2   
			\\ 
			[0,-1]+1   
			& 1          & ([-1,-1]+[0,-1])/2   & ([-1,-1]+[0,-1])/2   & ([-1,-1]+[0,-1])/2   & ([-1,-1]+[0,-1])/2   & ([-1,-1]+[0,-1])/2   & ([-1,-1]+[0,-1])/2   & ([-1,-1]+[0,-1])/2   & ([-1,-1]+[0,-1])/2   & ([-1,-1]+[0,-1])/2   
			\\ 
			[0,-1]+1   
			& 1          & ([-1,-1]+[0,-1])/2   & ([-1,-1]+[0,-1])/2   & ([-1,-1]+[0,-1])/2   & ([-1,-1]+[0,-1])/2   & ([-1,-1]+[0,-1])/2   & ([-1,-1]+[0,-1])/2   & ([-1,-1]+[0,-1])/2   & ([-1,-1]+[0,-1])/2   & ([-1,-1]+[0,-1])/2   
			\\ 
			[0,-1]+1   
			& 1          & ([-1,-1]+[0,-1])/2   & ([-1,-1]+[0,-1])/2   & ([-1,-1]+[0,-1])/2   & ([-1,-1]+[0,-1])/2   & ([-1,-1]+[0,-1])/2   & ([-1,-1]+[0,-1])/2   & ([-1,-1]+[0,-1])/2   & ([-1,-1]+[0,-1])/2   & ([-1,-1]+[0,-1])/2   
			\\
			[0,-1]+1   
			& 1          & ([-1,-1]+[0,-1])/2   & ([-1,-1]+[0,-1])/2   & ([-1,-1]+[0,-1])/2   & ([-1,-1]+[0,-1])/2   & ([-1,-1]+[0,-1])/2   & ([-1,-1]+[0,-1])/2   & ([-1,-1]+[0,-1])/2   & ([-1,-1]+[0,-1])/2   & ([-1,-1]+[0,-1])/2   
			\\ 
			[0,-1]+1   
			& 1          & ([-1,-1]+[0,-1])/2   & ([-1,-1]+[0,-1])/2   & ([-1,-1]+[0,-1])/2   & ([-1,-1]+[0,-1])/2   & ([-1,-1]+[0,-1])/2   & ([-1,-1]+[0,-1])/2   & ([-1,-1]+[0,-1])/2   & ([-1,-1]+[0,-1])/2   & ([-1,-1]+[0,-1])/2   
			\\
			[0,-1]+1   
			& 1          & ([-1,-1]+[0,-1])/2   & ([-1,-1]+[0,-1])/2   & ([-1,-1]+[0,-1])/2   & ([-1,-1]+[0,-1])/2   & ([-1,-1]+[0,-1])/2   & ([-1,-1]+[0,-1])/2   & ([-1,-1]+[0,-1])/2   & ([-1,-1]+[0,-1])/2   & ([-1,-1]+[0,-1])/2   
			\\ 
			[0,-1]+1   
			& 1          & ([-1,-1]+[0,-1])/2   & ([-1,-1]+[0,-1])/2   & ([-1,-1]+[0,-1])/2   & ([-1,-1]+[0,-1])/2   & ([-1,-1]+[0,-1])/2   & ([-1,-1]+[0,-1])/2   & ([-1,-1]+[0,-1])/2   & ([-1,-1]+[0,-1])/2   & ([-1,-1]+[0,-1])/2   
			\\ 
			[0,-1]+1   
			& 1          & ([-1,-1]+[0,-1])/2   & ([-1,-1]+[0,-1])/2   & ([-1,-1]+[0,-1])/2   & ([-1,-1]+[0,-1])/2   & ([-1,-1]+[0,-1])/2   & ([-1,-1]+[0,-1])/2   & ([-1,-1]+[0,-1])/2   & ([-1,-1]+[0,-1])/2   & ([-1,-1]+[0,-1])/2   
			\\
			[0,-1]+1   
			& 1          & ([-1,-1]+[0,-1])/2   & ([-1,-1]+[0,-1])/2   & ([-1,-1]+[0,-1])/2   & ([-1,-1]+[0,-1])/2   & ([-1,-1]+[0,-1])/2   & ([-1,-1]+[0,-1])/2   & ([-1,-1]+[0,-1])/2   & ([-1,-1]+[0,-1])/2   & ([-1,-1]+[0,-1])/2   
			\\
		\end{spreadtab}
	\end{center}
	\caption{\label{tab:C}Some values of $C(r,s)$.
		Compare to $\maxh(r,s)$ in Table \ref{tab:maxh} and $B(r,s)$ in Table~\ref{tab:B}.
	}
\end{table}

\section{Proof of Theorem \ref{kgroups-binary-positive-EF}}
\label{app:proof-kgroups}
In this section we prove Theorem \ref{kgroups-binary-positive-EF}: \emph{For $k$ groups in which all agents have binary valuations, there exists an allocation that is $1/k$-democratic EF1, \prop{(k-1)} and MMS-fair.}

To establish this theorem, we prove two lemmas that may be of independent interest---one on cake-cutting and the other on group allocation for agents with additive valuations.

The result on cake-cutting 
generalizes the theorems of \citet{Stromquist80} and \citet{Su99}, who prove the existence of contiguous envy-free cake allocations for individual agents.
Since these results are well-known, we present the model and proof quite briefly, focusing on the changes required to generalize from individuals to groups.

We consider a ``cake'' modeled as the interval $[0,1]$. Each agent $a_{ij}$ has a value-density function $v_{ij}:[0,1]\to\mathbb{R}_{\geq 0}$. The value of an agent for a piece $X$ is $V_{ij}(X) = \int_{x\in X}v_{ij}(x)dx$.
Denoting by $X_i$ the allocation to group $i$, an allocation is \emph{envy-free} for an  agent $a_{ij}$ if $V_{ij}(X_i) \geq V_{ij}(X_{i'})$ for every group $i'$.
A \emph{contiguous allocation} is an allocation of the cake in which each group gets a contiguous interval.
\begin{lemma}
	\label{kgroups-positive-cake}
	There always exists a contiguous cake allocation that is $1/k$-democratic envy-free.
	The factor $1/k$ is tight.
\end{lemma}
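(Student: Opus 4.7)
The plan is to adapt the Sperner-lemma approach used by \citet{Stromquist80} (and later Su) for contiguous envy-free cake cutting among individuals to the group setting, using plurality voting exactly as in the proof of Theorem~\ref{kgroups-monotonic-positive-EF}. I parametrize contiguous $k$-partitions of $[0,1]$ by the simplex $\Delta^{k-1}=\{(x_1,\ldots,x_k):x_i\geq 0,\ \sum_i x_i=1\}$, where $(x_1,\ldots,x_k)$ encodes the partition whose $i$-th piece is $\left[\sum_{l<i}x_l,\ \sum_{l\leq i}x_l\right]$. For each increasingly fine triangulation of $\Delta^{k-1}$, I assign every triangulation vertex to one of $k$ ``group representatives'' so that each small sub-simplex has its $k$ vertices owned by $k$ distinct representatives. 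At each vertex the owning representative of group $i$ labels the vertex by the plurality winner inside $A_i$: every member votes for her most-preferred piece in the current partition, and the representative outputs the index receiving the most votes, breaking ties in favour of non-empty pieces.

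Next I verify the Sperner boundary condition as in Theorem~\ref{kgroups-monotonic-positive-EF}: on the face of $\Delta^{k-1}$ spanned by main vertices $i_1,\ldots,i_l$, only the pieces $i_1,\ldots,i_l$ have positive length, so every ``hungry'' member strictly prefers one of those pieces and hence so does the plurality winner. Sperner's lemma then yields a fully-labelled sub-simplex at each granularity. Using compactness of $\Delta^{k-1}$ and the upper hemi-continuity of the argmax correspondence (which transfers to the plurality-winner correspondence), I extract a convergent subsequence whose limit is a single partition $(X_1,\ldots,X_k)$ such that for every $i$ the plurality winner in $A_i$ is index $i$. By definition of plurality voting, at least $\lceil n_i/k\rceil$ members of $A_i$ weakly prefer $X_i$ to every other piece, so allocating $X_i$ to group $A_i$ yields the desired $1/k$-democratic envy-free contiguous allocation.

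For tightness, I consider $k$ identical groups, each containing $k$ agents whose value densities are uniform on the $k$ disjoint sub-intervals $[(j-1)/k,\,j/k]$ for $j=1,\ldots,k$. In any contiguous $k$-partition, each sub-interval is essentially contained in one piece (cuts landing exactly on a boundary form a measure-zero set and can be handled by tie-breaking), and a member is envy-free iff her sub-interval is (mostly) in her group's piece. With $k$ sub-intervals distributed into $k$ pieces, either some piece contains two or more sub-intervals---in which case another piece contains none and the corresponding group has zero envy-free members---or each piece contains exactly one sub-interval and every group has exactly one envy-free member. In either case no contiguous allocation achieves $h$-democratic envy-freeness for $h>1/k$.

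The main technical obstacle is the limiting argument: the plurality-winner label depends discontinuously on the partition, so the Sperner labels in the shrinking sub-simplices cannot simply be pushed to the limit. I would resolve this by working with the argmax correspondence (upper hemi-continuous because each $V_{ij}$ is continuous in the partition) and by choosing tie-breaking rules consistently across triangulations, so that the full-labelling constraint survives in the limit.
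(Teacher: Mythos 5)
Your existence argument is essentially the paper's own proof: the paper likewise parametrizes contiguous partitions by the simplex, uses Su-style vertex ownership with plurality voting inside each group, checks Sperner's boundary condition via the preference for non-empty pieces, and passes to the limit of shrinking fully-labelled subsimplices. The limiting issue you flag is real but handled in the standard way (equivalent to your upper hemi-continuity suggestion): along a subsequence a fixed set of at least $n_i/k$ members of group $i$ votes for piece $i$, the weak inequalities $V_{ij}(\text{piece }i)\geq V_{ij}(\text{piece }i')$ are closed under limits by continuity of each $V_{ij}$, so those members are envy-free at the limit partition. Where you genuinely depart from the paper is the tightness claim: the paper gives no explicit example and instead invokes Lemma 6 of \citet{SegalhaleviNi15}, which lower-bounds the number of intervals needed to give positive value to $q$ of the $n'$ agents in each group. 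Your self-contained example ($k$ identical groups of $k$ agents, agent $j$ supported on the sub-interval $I_j=[(j-1)/k,\,j/k]$) is an attractive alternative that removes this external dependency.

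However, your justification of that example has a gap. The premise that ``each sub-interval is essentially contained in one piece'' because boundary-hitting cuts ``form a measure-zero set'' is false: a cut may land strictly in the interior of a sub-interval and split it into two parts of comparable length, and this occurs for a generic (certainly not measure-zero) set of partitions. In particular, a sub-interval split exactly evenly between two pieces makes \emph{both} pieces best for the corresponding agents, so your dichotomy (``some piece contains two sub-intervals and another contains none, or each piece contains exactly one'') does not exhaust the cases. The conclusion is nonetheless correct and can be rescued by counting cut points. Agent $j$ of a group is envy-free iff that group's piece has weakly the largest overlap with $I_j$; a piece with zero overlap with $I_j$ can never be such a maximizer, since some piece overlaps $I_j$ in length at least $1/k^2>0$. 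The number of pieces overlapping $I_j$ with positive length is one plus the number of cuts interior to $I_j$, so summing over $j$ and using that there are only $k-1$ cuts gives at most $k+(k-1)=2k-1<2k$ maximizer pairs. Hence some piece is a maximizer for at most one sub-interval, so the group receiving it has at most one envy-free member out of $k$, ruling out $h$-democratic envy-freeness for every $h>1/k$.
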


\begin{proof}
	The space of all contiguous partitions  corresponds to the standard simplex in $\mathbb{R}^k$. Triangulate that simplex and assign each vertex of the triangulation to one of the groups. In each vertex, ask the group owning that vertex to select one of the $k$ pieces using \emph{plurality voting} among its members, breaking ties arbitrarily. Label that vertex with the group's selection.
	The resulting labeling satisfies the conditions of \emph{Sperner's lemma} (see \citet{Su99}). Therefore, the triangulation has a \emph{Sperner subsimplex}---a subsimplex all of whose labels are different. We can repeat this process with finer and finer triangulations. This gives an infinite sequence of smaller and smaller Sperner subsimplices. This sequence has a subsequence that converges to a single point. By the continuity of preferences, this limit point corresponds to a partition in which each group selects a different piece. Since the selection is by plurality, at least $1/k$ of the agents in each group prefer their group's piece over all other pieces.
	
	The tightness of the $1/k$ factor follows from Lemma 6 of \citet{SegalhaleviNi15}.
	It shows an example with $k$ groups and $n'$ agents in each group with the property that in order to give a positive value to $q$ out of $n'$ agents in each group, we need to cut the cake into at least $k (kq-n')/(k-1)$ intervals.
	In a contiguous partition there are exactly $k$ intervals. Therefore, the fraction of agents in each group that can be guaranteed a positive value is $q/n' \leq 1/k+1/n'-1/kn'$. Since $n'$ can be arbitrarily large, the largest fraction that can be guaranteed is $1/k$.
\end{proof}

The next lemma presents a reduction from approximate envy-free allocation of indivisible goods to envy-free cake-cutting. 
We call this approximation ``EF-minus-2''. 
For any agent $a_{ij}$, denote by $\uijmax:=\max_{g\in G}u_{ij}(g)$ the maximum utility of the agent for any single good.
An allocation is \emph{EF-minus-2} for agent $a_{ij}$ if for every group $i'$, $u_{ij}(G_i) > u_{ij}(G_{i'}) - 2 \uijmax$.
The reduction generalizes Theorem 5 of \citet{Suksompong17}; a similar reduction was used in Theorem 3 of \citet{BarreraNyRu15}.

\begin{lemma}
	\label{kgroups-additive-positive-EF}
	When agents have additive valuations, there always exists a contiguous allocation of indivisible goods that is $1/k$-democratic EF-minus-2.
\end{lemma}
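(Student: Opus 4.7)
The plan is to reduce to the cake-cutting lemma~\ref{kgroups-positive-cake} via a standard ``interval-per-good'' embedding. Arrange the goods $g_1,\dots,g_m$ in an arbitrary fixed order along the cake $[0,m]$ (rescaling from $[0,1]$ is immaterial since envy-freeness is scale-invariant), assigning to $g_l$ the subinterval $(l-1,l]$, and define each agent's value-density $v_{ij}$ to be the piecewise-constant function $v_{ij}(x):=u_{ij}(g_l)$ for $x\in(l-1,l]$. Then $V_{ij}$ of any union of whole good-subintervals equals the additive utility $u_{ij}$ of the corresponding bundle.

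Applying Lemma~\ref{kgroups-positive-cake} to this cake yields a contiguous cake partition with cuts $0=c_0<c_1<\cdots<c_k=m$, in which (after re-labelling groups in left-to-right order) at least a $1/k$-fraction of the agents in each group $i$ satisfy $V_{ij}(X_i)\ge V_{ij}(X_{i'})$ for every $i'$, where $X_i=[c_{i-1},c_i]$.

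To produce a discrete allocation, I would round each internal cut $c_i$ to the nearest integer (breaking ties in some fixed way), which amounts to assigning every straddled good to the adjacent group containing at least half of its subinterval. Let $G_1,\dots,G_k$ be the resulting contiguous bundles; they form a partition of $G$. Because each cut moves by at most $1/2$ of a cake-unit and the value-density is everywhere bounded by $\uijmax$, at most $\tfrac12\uijmax$ of any agent's value is shifted across any single cut. Each bundle has at most two boundary cuts, so $|u_{ij}(G_i)-V_{ij}(X_i)|\le \uijmax$ for every agent $a_{ij}$ and group $i$.

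Combining these bounds, for any agent $a_{ij}$ who was envy-free on the cake and any other group $i'$,
\[
u_{ij}(G_i)\;\ge\;V_{ij}(X_i)-\uijmax\;\ge\;V_{ij}(X_{i'})-\uijmax\;\ge\;u_{ij}(G_{i'})-2\,\uijmax,
\]
so the final allocation is EF-minus-2 for this agent, and hence $1/k$-democratic EF-minus-2. The main subtlety is the strictness of the EF-minus-2 inequality: the bound above is attained with equality only in the degenerate configuration where both boundary cuts sit exactly at half-integer points and both straddled goods have value $\uijmax$ for the agent. This can be broken by an arbitrarily small generic perturbation of the $c_i$'s, which preserves the rounding decisions (since it does not cross any integer boundary) and preserves envy-freeness on the cake by continuity; the heart of the argument is really Lemma~\ref{kgroups-positive-cake}, and the rounding step is standard bookkeeping.
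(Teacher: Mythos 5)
Your overall route is the same as the paper's: embed each good $g_l$ as the unit interval $(l-1,l]$ with piecewise-constant densities, invoke Lemma~\ref{kgroups-positive-cake}, and round the resulting contiguous cake partition to a contiguous partition of the goods, losing at most $\uijmax$ per group in each direction. The $2\uijmax$ accounting is fine. The gap is in the strictness of the EF-minus-2 inequality, which is not a cosmetic issue: Theorem~\ref{kgroups-binary-positive-EF} needs envy \emph{strictly} below $2$ to conclude EF1 for binary agents. Your perturbation argument cannot close this gap, for two reasons. First, you perturb the cuts \emph{while preserving the rounding decisions}, so the discrete allocation $(G_1,\dots,G_k)$ is literally unchanged; whether it satisfies $u_{ij}(G_i)>u_{ij}(G_{i'})-2\uijmax$ depends only on that fixed allocation and the utilities, so no perturbation of the $c_i$'s can alter it. Second, even ignoring that, cake envy-freeness is a non-strict inequality, so it is not preserved under a generic perturbation when an agent is exactly indifferent between two pieces; continuity only protects strict inequalities. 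Moreover, with an arbitrary per-good tie-breaking rule the degenerate case really can occur: take $k=4$, eight goods, cuts at $1.5,4.5,7.5$, an agent in group~$2$ who values only $g_2,g_5,g_8$ (each at $1$), and ties broken so that $g_2$ goes to group~$1$, $g_5$ to group~$3$, and $g_8$ to group~$3$. The agent is envy-free on the cake (value $1$ for pieces $2$ and $3$, value $1/2$ for the others), yet $u(G_2)=0$ and $u(G_3)=2$, so strict EF-minus-2 fails.

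The repair is easy and is essentially what the paper does: pick a rounding rule under which each group's loss and each group's gain are \emph{strictly} below $\uijmax$. The paper assigns $g_l$ to the unique group whose piece contains the right endpoint $l$ (boundary ties to the left piece); then a group gains at most one straddled good, of which it already owned a positive fraction, and loses at most one straddled good, of which it owned only a proper fraction, so both quantities are strictly below $\uijmax$ and the strict chain goes through. Your nearest-integer rule also works, but only with a \emph{consistent directional} tie-break (say, always to the left): then at the two cuts bounding a piece, the group can lose exactly half of a straddled good at only one of them and must lose strictly less than half at the other, so its total loss is again strictly below $\uijmax$, and symmetrically for the gain. This detail needs to be argued directly about the discrete allocation rather than deferred to a perturbation.
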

\begin{proof}
	We create an instance of the cake-cutting problem in the following way.
	\begin{itemize}
		\item The cake is the half-open interval $(0,m]$.
		\item The value-density functions are  piecewise constant: 
		for every $l\in\{1,\ldots,m\}$,
		the value-density $v_{ij}$ in the half-open interval $(l-1,l]$ equals $u_{ij}(g_l)$.
	\end{itemize}
	
	By Lemma~\ref{kgroups-positive-cake}, there exists a contiguous cake allocation that is envy-free for at least $1/k$ of the agents in each group.
	From this allocation we construct an allocation of goods as follows.
	\begin{itemize}
		\item If point $g$ of the cake is in the interior of a piece, then good $g$ is given to the group owning that piece.
		\item If point $g$ of the cake is at the boundary between two pieces, then good $g$ is given to the group owning the piece to its left.
	\end{itemize}
	A group gets good $g$ only if it owns a positive fraction of the interval $(g-1,g]$.
	Hence, in the allocation, each group loses strictly less than the value of a good and gains strictly less than the value of a good (relative to its value in the cake division). 
	This means that every agent who believes that the cake allocation is envy-free also believes that the goods allocation is EF-minus-2.
\end{proof}

We are now ready to prove Theorem~\ref{kgroups-binary-positive-EF}.

\begin{proof}[Proof of Theorem~\ref{kgroups-binary-positive-EF}]
Suppose an allocation is EF-minus-2 for some agent $a_{ij}$. This means that the agent's envy towards any other group is less than $2 \uijmax \leq 2$. Since the agent has binary valuations, the envy is at most $1$, meaning that the allocation is EF1 for that agent. Hence any $1/k$-democratic EF-minus-2 allocation, which is guaranteed to exist by Lemma~\ref{kgroups-additive-positive-EF}, is also $1/k$-democratic EF1.
By Lemma \ref{EF1-is-PROP} it is also
$1/k$-democratic \prop{(k-1)}.
By Lemma \ref{PROP-is-MMS} it is also
$1/k$-democratic MMS-fair.
\end{proof}
The cake-cutting protocol of Lemma \ref{kgroups-positive-cake} might take infinitely many steps to converge. In fact, there is no finite protocol for contiguous envy-free cake-cutting even for individuals \citep{Stromquist2008Envyfree}. However, the division guaranteed by Lemma \ref{kgroups-additive-positive-EF} and Theorem \ref{kgroups-binary-positive-EF} can be found in finite time (exponential in the input size) by checking all possible allocations. An interesting open question is whether a faster algorithm exists.

\section*{Acknowledgments}
This work was partially supported by the European Research Council (ERC) under grant number 639945 (ACCORD) and by a Stanford Graduate Fellowship. Most of the work was done while the second author was a PhD student at Stanford University. We are grateful to 
Dan Hefetz,
Fedor Petrov, Arnaud Mortier, Darij Grinberg, Michael Korn, Kevin P. Costello, Nick Gill,  Jack D'Aurizio, Leon Bloy, Alex Francisco, Y. Forman, J. Kreft, Katie Edwards,  Yonatan Naamad, and the anonymous referees of IJCAI-ECAI 2018, COMSOC 2018 and Artificial Intelligence Journal for their helpful comments. 

\bibliographystyle{plainnat}
\bibliography{main-full}

\begin{thebibliography}{49}
\providecommand{\natexlab}[1]{#1}
\providecommand{\url}[1]{\texttt{#1}}
\expandafter\ifx\csname urlstyle\endcsname\relax
  \providecommand{\doi}[1]{doi: #1}\else
  \providecommand{\doi}{doi: \begingroup \urlstyle{rm}\Url}\fi

\bibitem[Alon(1987)]{Alon87}
Noga Alon.
\newblock Splitting necklaces.
\newblock \emph{Advances in Mathematics}, 63\penalty0 (3):\penalty0 247--253,
  1987.

\bibitem[Amanatidis et~al.(2018)Amanatidis, Birmpas, and
  Markakis]{amanatidis2018comparing}
Georgios Amanatidis, Georgios Birmpas, and Evangelos Markakis.
\newblock Comparing approximate relaxations of envy-freeness.
\newblock In \emph{Proceedings of the 27th International Joint Conference on
  Artificial Intelligence}, pages 42--48, 2018.

\bibitem[Aziz et~al.(2019)Aziz, Caragiannis, Igarashi, and Walsh]{AzizCaIg18}
Haris Aziz, Ioannis Caragiannis, Ayumi Igarashi, and Toby Walsh.
\newblock Fair allocation of combinations of indivisible goods and chores.
\newblock In \emph{Proceedings of the 28th International Joint Conference on
  Artificial Intelligence}, 2019.
\newblock Forthcoming.

\bibitem[Babaioff et~al.(2017)Babaioff, Nisan, and
  Talgam-Cohen]{babaioff2017competitive}
Moshe Babaioff, Noam Nisan, and Inbal Talgam-Cohen.
\newblock Competitive equilibria with indivisible goods and generic budgets.
\newblock \emph{CoRR}, abs/1703.08150, 2017.

\bibitem[Bapat(1989)]{Bapat89}
R.~B. Bapat.
\newblock A constructive proof of a permutation-based generalization of
  {S}perner's lemma.
\newblock \emph{Mathematical Programming}, 44:\penalty0 113--120, 1989.

\bibitem[Barrera et~al.(2015)Barrera, Nyman, Ruiz, Su, and
  Zhang]{BarreraNyRu15}
Roberto Barrera, Kathryn Nyman, Amanda Ruiz, Francis~Edward Su, and Yan Zhang.
\newblock Discrete envy-free division of necklaces and maps.
\newblock \emph{CoRR}, abs/1510.02132, 2015.

\bibitem[Benabbou et~al.(2018)Benabbou, Chakraborty, Ho, Sliwinski, and
  Zick]{benabbou2018diversity}
Nawal Benabbou, Mithun Chakraborty, Xuan-Vinh Ho, Jakub Sliwinski, and Yair
  Zick.
\newblock Diversity constraints in public housing allocation.
\newblock In \emph{Proceedings of the 17th International Conference on
  Autonomous Agents and MultiAgent Systems}, pages 973--981, 2018.

\bibitem[Berliant et~al.(1992)Berliant, Thomson, and Dunz]{Berliant1992Fair}
Marcus Berliant, William Thomson, and Karl Dunz.
\newblock {On the fair division of a heterogeneous commodity}.
\newblock \emph{Journal of Mathematical Economics}, 21\penalty0 (3):\penalty0
  201--216, 1992.

\bibitem[Bhangale et~al.(2015)Bhangale, Kopparty, and
  Sachdeva]{bhangale2015simultaneous}
Amey Bhangale, Swastik Kopparty, and Sushant Sachdeva.
\newblock Simultaneous approximation of constraint satisfaction problems.
\newblock In \emph{Proceedings of the 42nd International Colloquium on
  Automata, Languages, and Programming}, pages 193--205, 2015.

\bibitem[Bil\`{o} et~al.(2019)Bil\`{o}, Caragiannis, Flammini, Igarashi,
  Monaco, Peters, Vinci, and Zwicker]{bilo2018almost}
Vittorio Bil\`{o}, Ioannis Caragiannis, Michele Flammini, Ayumi Igarashi,
  Gianpiero Monaco, Dominik Peters, Cosimo Vinci, and William~S. Zwicker.
\newblock Almost envy-free allocations with connected bundles.
\newblock In \emph{Proceedings of the 10th Innovations in Theoretical Computer
  Science Conference}, pages 14:1--14:21, 2019.

\bibitem[Bogomolnaia et~al.(2005)Bogomolnaia, Moulin, and
  Stong]{BogomolnaiaMoSt05}
Anna Bogomolnaia, Herv{\'e} Moulin, and Richard Stong.
\newblock Collective choice under dichotomous preferences.
\newblock \emph{Journal of Economic Theory}, 122\penalty0 (2):\penalty0
  165--184, 2005.

\bibitem[Bouveret et~al.(2016)Bouveret, Chevaleyre, and
  Maudet]{Bouveret2016Fair}
Sylvain Bouveret, Yann Chevaleyre, and Nicolas Maudet.
\newblock Fair allocation of indivisible goods.
\newblock In Felix Brandt, Vincent Conitzer, Ulle Endriss, J\'{e}r\^{o}me Lang,
  and Ariel~D. Procaccia, editors, \emph{Handbook of Computational Social
  Choice}, chapter~12, pages 284--310. Cambridge University Press, 2016.

\bibitem[Bouveret et~al.(2017)Bouveret, Cechl\'{a}rov\'{a}, Elkind, Igarashi,
  and Peters]{BouveretCeEl17}
Sylvain Bouveret, Katar\'{i}na Cechl\'{a}rov\'{a}, Edith Elkind, Ayumi
  Igarashi, and Dominik Peters.
\newblock Fair division of a graph.
\newblock In \emph{Proceedings of the 26th International Joint Conference on
  Artificial Intelligence}, pages 135--141, 2017.

\bibitem[Brams and Taylor(1996)]{BramsTa96}
Steven~J. Brams and Alan~D. Taylor.
\newblock A procedure for divorce settlements.
\newblock \emph{Meditation Quarterly}, 13\penalty0 (3):\penalty0 191--205,
  1996.

\bibitem[Brams et~al.(2019)Brams, Kilgour, and Potthoff]{brams2018multiwinner}
Steven~J. Brams, D.~Marc Kilgour, and Richard~F. Potthoff.
\newblock Multiwinner approval voting: an apportionment approach.
\newblock \emph{Public Choice}, 178\penalty0 (1--2):\penalty0 67--93, 2019.

\bibitem[Budish(2011)]{Budish11}
Eric Budish.
\newblock The combinatorial assignment problem: Approximate competitive
  equilibrium from equal incomes.
\newblock \emph{Journal of Political Economy}, 119\penalty0 (6):\penalty0
  1061--1103, 2011.

\bibitem[Caragiannis et~al.(2016)Caragiannis, Kurokawa, Moulin, Procaccia,
  Shah, and Wang]{CaragiannisKuMo16}
Ioannis Caragiannis, David Kurokawa, Herv\'{e} Moulin, Ariel~D. Procaccia,
  Nisarg Shah, and Junxing Wang.
\newblock The unreasonable fairness of maximum {N}ash welfare.
\newblock In \emph{Proceedings of the 17th ACM Conference on Economics and
  Computation}, pages 305--322, 2016.

\bibitem[Conitzer et~al.(2017)Conitzer, Freeman, and Shah]{ConitzerFrSh17}
Vincent Conitzer, Rupert Freeman, and Nisarg Shah.
\newblock Fair public decision making.
\newblock In \emph{Proceedings of the 18th ACM Conference on Economics and
  Computation}, pages 629--646, 2017.

\bibitem[Conitzer et~al.(2019)Conitzer, Freeman, Shah, and
  Vaughan]{ConitzerFrSh19}
Vincent Conitzer, Rupert Freeman, Nisarg Shah, and Jennifer~Wortman Vaughan.
\newblock Group fairness for indivisible goods allocation.
\newblock In \emph{Proceedings of the 33rd AAAI Conference on Artificial
  Intelligence}, 2019.
\newblock Forthcoming.

\bibitem[Elkind et~al.(2017)Elkind, Lackner, and Peters]{ElkindLaPe17}
Edith Elkind, Martin Lackner, and Dominik Peters.
\newblock Structured preferences.
\newblock In Ulle Endriss, editor, \emph{Trends in Computational Social
  Choice}, chapter~10, pages 187--207. AI Access, 2017.

\bibitem[Endriss(2010)]{Endriss10}
Ulle Endriss.
\newblock Lecture notes on fair division.
\newblock \emph{CoRR}, abs/1806.04234, 2010.

\bibitem[Erd\H{o}s and Selfridge(1973)]{erdos1973combinatorial}
Paul Erd\H{o}s and John~L. Selfridge.
\newblock On a combinatorial game.
\newblock \emph{Journal of Combinatorial Theory, Series A}, 14\penalty0
  (3):\penalty0 298--301, 1973.

\bibitem[Garey and Johnson(1979)]{GareyJo79}
Michael~R. Garey and David~S. Johnson.
\newblock \emph{Computers and Intractability: {A} Guide to the Theory of
  NP-Completeness}.
\newblock W. H. Freeman, 1979.

\bibitem[Ghodsi et~al.(2018)Ghodsi, Latifian, Mohammadi, Moradian, and
  Seddighin]{ghodsi2018rent}
Mohammad Ghodsi, Mohamad Latifian, Arman Mohammadi, Sadra Moradian, and Masoud
  Seddighin.
\newblock Rent division among groups.
\newblock In \emph{Proceedings of the 12th International Conference on
  Combinatorial Optimization and Applications}, pages 577--591, 2018.

\bibitem[Goldman and Procaccia(2014)]{GoldmanPr14}
Jonathan Goldman and Ariel~D. Procaccia.
\newblock Spliddit: Unleashing fair division algorithms.
\newblock \emph{ACM SIGecom Exchanges}, 13\penalty0 (2):\penalty0 41--46, 2014.

\bibitem[Hefetz et~al.(2014)Hefetz, Krivelevich, Stojakovi{\'c}, and
  Szab{\'o}]{hefetz2014positional}
Dan Hefetz, Michael Krivelevich, Milo{\v{s}} Stojakovi{\'c}, and Tibor
  Szab{\'o}.
\newblock \emph{Positional Games}.
\newblock Springer, 2014.

\bibitem[Husseinov(2011)]{Husseinov2011Theory}
Farhad Husseinov.
\newblock {A theory of a heterogeneous divisible commodity exchange economy}.
\newblock \emph{Journal of Mathematical Economics}, 47\penalty0 (1):\penalty0
  54--59, 2011.

\bibitem[Klaus and Klijn(2005)]{Klaus2005Stable}
Bettina Klaus and Flip Klijn.
\newblock {Stable matchings and preferences of couples}.
\newblock \emph{Journal of Economic Theory}, 121\penalty0 (1):\penalty0
  75--106, 2005.

\bibitem[Klaus and Klijn(2007)]{Klaus2007Paths}
Bettina Klaus and Flip Klijn.
\newblock {Paths to stability for matching markets with couples}.
\newblock \emph{Games and Economic Behavior}, 58\penalty0 (1):\penalty0
  154--171, 2007.

\bibitem[Kojima et~al.(2013)Kojima, Pathak, and Roth]{Kojima2013Matching}
Fuhito Kojima, Parag~A. Pathak, and Alvin~E. Roth.
\newblock {Matching with couples: Stability and incentives in large markets}.
\newblock \emph{The Quarterly Journal of Economics}, 128\penalty0 (4):\penalty0
  1585--1632, 2013.

\bibitem[Kurokawa et~al.(2015)Kurokawa, Procaccia, and
  Shah]{kurokawa2015leximin}
David Kurokawa, Ariel~D. Procaccia, and Nisarg Shah.
\newblock Leximin allocations in the real world.
\newblock In \emph{Proceedings of the 16th ACM Conference on Economics and
  Computation}, pages 345--362, 2015.

\bibitem[Kyropoulou et~al.(2019)Kyropoulou, Suksompong, and
  Voudouris]{KyropoulouSuVo19}
Maria Kyropoulou, Warut Suksompong, and Alexandros~A. Voudouris.
\newblock Almost envy-freeness in group resource allocation.
\newblock In \emph{Proceedings of the 28th International Joint Conference on
  Artificial Intelligence}, 2019.
\newblock Forthcoming.

\bibitem[Lipton et~al.(2004)Lipton, Markakis, Mossel, and Saberi]{LiptonMaMo04}
Richard~J. Lipton, Evangelos Markakis, Elchanan Mossel, and Amin Saberi.
\newblock On approximately fair allocations of indivisible goods.
\newblock In \emph{Proceedings of the 5th ACM Conference on Electronic
  Commerce}, pages 125--131, 2004.

\bibitem[Manurangsi and Suksompong(2017)]{ManurangsiSu17}
Pasin Manurangsi and Warut Suksompong.
\newblock Asymptotic existence of fair divisions for groups.
\newblock \emph{Mathematical Social Sciences}, 89:\penalty0 100--108, 2017.

\bibitem[Manurangsi and Suksompong(2019)]{ManurangsiSu18}
Pasin Manurangsi and Warut Suksompong.
\newblock Computing a small agreeable set of indivisible items.
\newblock \emph{Artificial Intelligence}, 268:\penalty0 96--114, 2019.

\bibitem[Markakis(2017)]{Markakis17}
Evangelos Markakis.
\newblock Approximation algorithms and hardness results for fair division.
\newblock In Ulle Endriss, editor, \emph{Trends in Computational Social
  Choice}, chapter~12, pages 231--247. AI Access, 2017.

\bibitem[Moulin(1990)]{Moulin90}
Herv\'{e} Moulin.
\newblock Uniform externalities: Two axioms for fair allocation.
\newblock \emph{Journal of Public Economics}, 43\penalty0 (3):\penalty0
  305--326, 1990.

\bibitem[Ortega(2018)]{ortega2018social}
Josu{\'e} Ortega.
\newblock Social integration in two-sided matching markets.
\newblock \emph{Journal of Mathematical Economics}, 78:\penalty0 119--126,
  2018.

\bibitem[Procaccia and Wang(2014)]{ProcacciaWa14}
Ariel~D. Procaccia and Junxing Wang.
\newblock Fair enough: Guaranteeing approximate maximin shares.
\newblock In \emph{Proceedings of the 15th ACM Conference on Economics and
  Computation}, pages 675--692, 2014.

\bibitem[Segal-Halevi(2018)]{SegalHalevi2018CEFAI}
Erel Segal-Halevi.
\newblock Competitive equilibrium for almost all incomes.
\newblock In \emph{Proceedings of the 17th Conference on Autonomous Agents and
  MultiAgent Systems}, pages 1267--1275, 2018.

\bibitem[Segal-Halevi and Nitzan(2015)]{SegalhaleviNi15}
Erel Segal-Halevi and Shmuel Nitzan.
\newblock Fair cake-cutting among families.
\newblock \emph{CoRR}, abs/1510.03903, 2015.

\bibitem[Segal-Halevi and Suksompong(2018)]{ijcai2018democratic}
Erel Segal-Halevi and Warut Suksompong.
\newblock Democratic fair allocation of indivisible goods.
\newblock In \emph{Proceedings of the 27th International Joint Conference on
  Artificial Intelligence}, pages 482--488, 2018.

\bibitem[Stromquist(1980)]{Stromquist80}
Walter Stromquist.
\newblock How to cut a cake fairly.
\newblock \emph{The American Mathematical Monthly}, 87\penalty0 (8):\penalty0
  640--644, 1980.

\bibitem[Stromquist(2008)]{Stromquist2008Envyfree}
Walter Stromquist.
\newblock {Envy-free cake divisions cannot be found by finite protocols}.
\newblock \emph{Electronic Journal of Combinatorics}, 15:\penalty0 \#R11, 2008.

\bibitem[Su(1999)]{Su99}
Francis~Edward Su.
\newblock Rental harmony: Sperner's lemma in fair division.
\newblock \emph{The American Mathematical Monthly}, 106\penalty0 (10):\penalty0
  930--942, 1999.

\bibitem[Suksompong(2018)]{Suksompong18}
Warut Suksompong.
\newblock Approximate maximin shares for groups of agents.
\newblock \emph{Mathematical Social Sciences}, 92:\penalty0 40--47, 2018.

\bibitem[Suksompong(2019)]{Suksompong17}
Warut Suksompong.
\newblock Fairly allocating contiguous blocks of indivisible items.
\newblock \emph{Discrete Applied Mathematics}, 260:\penalty0 227--236, 2019.

\bibitem[Thomson(2016)]{Thomson16}
William Thomson.
\newblock Introduction to the theory of fair allocation.
\newblock In Felix Brandt, Vincent Conitzer, Ulle Endriss, J\'{e}r\^{o}me Lang,
  and Ariel~D. Procaccia, editors, \emph{Handbook of Computational Social
  Choice}, chapter~11, pages 261--283. Cambridge University Press, 2016.

\bibitem[Todo et~al.(2011)Todo, Li, Hu, Mouri, Iwasaki, and Yokoo]{TodoLiHu11}
Taiki Todo, Runcong Li, Xuemei Hu, Takayuki Mouri, Atsushi Iwasaki, and Makoto
  Yokoo.
\newblock Generalizing envy-freeness toward group of agents.
\newblock In \emph{Proceedings of the 22nd International Joint Conference on
  Artificial Intelligence}, pages 386--392, 2011.

\end{thebibliography}

\end{document}